\documentclass[twocolumn,pra,nofootinbib,superscriptaddress,longbibliography]{revtex4-2}

\usepackage[dvips]{graphicx} 
\usepackage{amsfonts}
\usepackage{amssymb}
\usepackage{amscd}
\usepackage{amsmath}    
\usepackage{mathrsfs}
\usepackage{enumerate}
\usepackage{epsfig}
\usepackage{subfigure}
\usepackage{subfloat}
\usepackage{xcolor}
\usepackage{amsthm}
\usepackage{physics}
\usepackage{multirow}
\usepackage[most]{tcolorbox}
\usepackage{tikz}
\usetikzlibrary{quantikz}
\usetikzlibrary{backgrounds}
\usepackage{tabularx}
\usepackage{float}
\usepackage{appendix}
\usepackage{makecell}
\usepackage{algorithm}
\usepackage{algorithmicx}
\usepackage{algpseudocode}
\usepackage{dsfont}
\usepackage[colorlinks, linkcolor=red, anchorcolor=blue, citecolor=green]{hyperref}
\usepackage{MnSymbol}
\newtcolorbox[auto counter]{mybox}[2]{
enhanced,
breakable,
label=#1,
colback=blue!5!white,
colframe=blue!75!black,
fonttitle=\bfseries,
title=Box \thetcbcounter: #2
}

\newtheorem{theorem}{Theorem}
\newtheorem{lemma}{Lemma}
\newtheorem{corollary}{Corollary}

\newtheorem{definition}{Definition}

\newcommand{\id}{\mathbb{I}}

\newcommand{\comments}[1]{}
\definecolor{mintcyan}{RGB}{180, 240, 220}

\begin{document}
\title{Can scrambling protect quantum state distinguishability under noise?}
\author{Guoding Liu}
\affiliation{Center for Quantum Information, Institute for Interdisciplinary Information Sciences, Tsinghua University, Beijing, 100084 China}
\author{Chushi Qin}
\affiliation{Center for Quantum Information, Institute for Interdisciplinary Information Sciences, Tsinghua University, Beijing, 100084 China}
\author{Zitai Xu}
\affiliation{Joint Center for Quantum Information and Computer Science, University of Maryland, College Park, Maryland 20742, USA}
\author{Xiongfeng Ma}
\email{xma@tsinghua.edu.cn}
\affiliation{Center for Quantum Information, Institute for Interdisciplinary Information Sciences, Tsinghua University, Beijing, 100084 China}
\author{Zi-Wen Liu}
\email{zwliu0@tsinghua.edu.cn}
\affiliation{Yau Mathematical Sciences Center, Tsinghua University, Beijing 100084, China}

\begin{abstract}

Quantum state distinguishability is a fundamental concept in quantum information science that underpins a wide range of important practical tasks. Traditionally formulated for pairs of states, quantum state distinguishability is here extended to quantum state ensembles, which we characterize through the average pairwise trace distance. Motivated by both theoretical and practical interest in noisy quantum information processing, we ask whether ``minimally'' scrambled ensembles modeled by 2-designs protect distinguishability under noise, which sheds light on the fundamental competition between noise and information scrambling.  Using a rigorous decoupling approach, we establish tight bounds on noisy ensemble distinguishability. We show that the distinguishability of noisy 2-design ensembles exhibits a sharp threshold and phase-transition behavior governed by channel conditional entropy: below the threshold, the states remain mutually distinguishable with high probability, while above it, distinguishability undergoes a sudden power-law decay and then collapses exponentially.  On the other hand, under local purity-shrinking noise, post-measured noisy 2-design ensembles become exponentially indistinguishable for any measurement, precluding a noise threshold for learning tasks such as shadow tomography.
These results reveal a sharp difference between unmeasured and post-measured scrambled ensembles: the former can retain high distinguishability for sufficiently small noise, whereas the latter exhibits no such protected regime. We discuss the implications of these results for crucial tasks ranging from quantum communication and cryptography to learning.

\end{abstract}

\maketitle

\section{Introduction}

Quantum state distinguishability is a cornerstone of quantum information science, and its operational characterization by the trace distance is a foundational result of the field~\cite{HOLEVO1973Statistical,Helstrom1969estimation,nielsen2002quantum}.
High distinguishability is essential for reliably encoding, transmitting, and extracting information from quantum systems, thereby constituting a fundamental resource behind quantum advantages in information processing.

In particular, the global distinguishability properties of state ensembles, beyond the traditional formulation for pairs of states, naturally underlie many practical tasks across quantum communication, characterization, cryptography, tomography, learning, among others.
In communication and cryptography, mutually distinguishable quantum codebooks are essential for reliably encoding distinct classical messages, enabling quantum advantages in tasks such as quantum fingerprinting~\cite{Buhrman2001fingerprinting}. 
In quantum computing and learning, such codebooks can serve as hypothesis sets for state identification, where mutual distinguishability among the elements is necessary for achieving efficient sample complexity~\cite{Aaronson2007learnability}. 
High distinguishability is also crucial when an ensemble is used as a measurement basis for extracting properties of quantum states, as in classical shadow tomography~\cite{huang2020shadow}. 
Beyond pursuing high distinguishability itself, exploiting the gap between global and restricted measurements leads to the important application of quantum data hiding~\cite{DiVincenzo2002hiding,Hayden2004Randomizing,Matthews2009Distinguishability}.


As a central challenge in quantum technologies, noise may naturally degrade the distinguishability of quantum state ensembles.
For example, macroscopic superpositions such as GHZ-like states are fragile: arbitrarily weak local noise can destroy their distinguishability~\cite{Dur2004Macroscopic,arunachalam2023optimalalgorithmslearningquantum}. 
Can one design state ensembles whose distinguishability is robust against local noise? 
The widely studied phenomenon of information scrambling provides a compelling mechanism for such robustness, as it delocalizes logical information and thereby reduces its susceptibility to local errors. 
This intuition is closely related to how scrambling generates global entanglement~\cite{Page1993average,Nahum2017entanglement,Hosur2016Chaos,Liu_2018_beyond_scrambling} and approximate quantum error correction~\cite{gottesman1997stabilizer,Brown2013short,Gullans2021LowDepth,Kong2022CQEC,Darmawan2024Lowdepth,Yi_2024,Nelson2025FTlowdepth,Liu2026AQEC}. 
At the same time, the mechanism that provides protection also spreads local errors into nonlocal ones and thus exacerbates them, creating a nontrivial competition between information scrambling and noise amplification. 
This competition has deep physical significance~\cite{Choi2020QECMIPT,Morvan2024phase} and is also of intensive interest in the study of quantum computational supremacy~\cite{Aharonov_2023_noisyRCS,Fefferman_2024_nonunitalRCS}.
Rigorously understanding this competition not only offers new insight into the fundamental understanding of the noise-scrambling interplay, but also provides the foundation for practically harnessing scrambling to preserve distinguishability under noise.

Furthermore, from a practical perspective, one would clearly desire to realize the scrambling protection with minimal complexity overhead.
It is well known that the second moment of the Haar measure marks the  onset of essential information-theoretic features of scrambling, notably global entanglement generation, decoupling, and chaotic behaviors~\cite{Hosur2016Chaos,Hayden2007Mirrors,Liu_2018_beyond_scrambling,Liu_2018_quantum_designs,Szehr2013Decoupling,Roberts2017Design}, rendering \(2\)-design ensembles the standard ``minimal'' model of scrambling~\cite{Emerson_2003_pseudorandom,Dankert2009design}. 
Crucially, \(2\)-designs are efficiently realizable in various settings with low circuit depth or evolution time~\cite{Harrow2009RandomDesign,Brandao2016ApproximateDesign,Cleve2016NearLinearDesign,Dankert2009design,kueng2015qubitstabilizerstatescomplex,zhu2017threedesign,Webb2016Clifford3design,Bravyi2021Clifford,chen2025Incompressibility,Li2024SymmetricPseudorandomness,schuster2024randomunitariesextremelylow,laracuente2024approximateunitarykdesignsshallow}, which serve as accessible sources of scrambling and randomness that enable many quantum information tasks, including classical shadow tomography~\cite{huang2020shadow}, randomized benchmarking~\cite{Knill2008RB,Elben2023toolbox}, quantum metrology~\cite{Zhou2026Metrology}, and random circuit sampling~\cite{arute2019supremacy,Bouland2019sampling,YulinWu2021Superconducting}. 
Against this backdrop, understanding how noise degrades distinguishability in \(2\)-design ensembles becomes a central problem, both for clarifying the physical boundary between information scrambling and quantum noise, and for determining the robustness and viability of design-based protocols in real-world applications.

In this work, we systematically study this problem and establish a comprehensive characterization of how noise affects the distinguishability of 2-design ensembles, drawing on rigorous information-theoretic methods. 
As a framework that may be of independent use, we first extend the traditional notion of distinguishability from pairs to an ensemble of arbitrary size using the average pairwise trace distance. 
Technically, we need to analyze how the average pairwise distance behaves under a noise channel \(\mathcal{N}\) for the noisy ensemble \(\{p_x,\mathcal{N}(\rho_x)\}\). 
We achieve this by establishing connections with decoupling methods from quantum Shannon theory: the average pairwise trace distance can be mapped to an decoupling error of the ensemble, yielding tight upper and lower bounds. 
The strong decoupling properties of \(2\)-designs~\cite{Dupuis2014Decoupling} allow these bounds to be expressed compactly in terms of conditional entropies of the noise channel \(\mathcal{N}\)~\cite{rubboli2026quantumconditionalentropiesconvex}.

Evaluating the entropy for concrete noise models enables us to identify rigorous noise thresholds for the distinguishability of $n$-qubit 2-design ensembles. Specifically, we establish two positive constant thresholds governed by the single-letter conditional von Neumann entropy $H(\mathcal{N})$ and the conditional collision entropy $H_2(\mathcal{N})$. These thresholds divide the distinguishability behavior into three distinct phases. 
At low noise rates where \(H(\mathcal{N})<0\), the ensemble lies in a \emph{resilient phase}: the average pairwise distance remains close to \(1\), indicating that the states stay nearly orthogonal. 
At high noise rates, once the collision-entropy threshold is crossed, \(H_2(\mathcal{N})\geq 0\), the ensemble enters a \emph{collapsed phase}, in which the distinguishability decays exponentially in the large-system limit. 
Between these two regimes is an \emph{intermediate phase} 
$H_2(\mathcal{N})<0\leq H(\mathcal{N})$, where the distinguishability drops from near unity to an algebraic scale \(O(1/\sqrt n)\) but has not yet undergone exponential collapse.

Together, these three phases capture the progressive degradation of state distinguishability as the noise rate increases. Physically, in the low-noise regime, the protective delocalization of logical information induced by scrambling dominates, effectively mitigating localized errors. However, as the noise rate increases and surpasses the threshold, the scrambling mechanism backfires by rapidly propagating errors across the entire system. This overwhelming non-local corruption suddenly destroys the protective effect, resulting in a complete loss of information.

Crucially, the existence of a noise-resilient phase preserves the feasibility of many quantum information processing tasks in the presence of noise.  For instance, randomly selected pairs of stabilizer states naturally enable robust quantum data hiding. Below the critical noise threshold, the global distinguishability of these pairs remains near unity, yet their distinguishability under local measurements is fundamentally suppressed due to the local thermalization of random stabilizer states. Furthermore, by elevating our average distance metric to high-probability concentration bounds, we prove that the resilient phase hosts mutually distinguishable subsets whose size scales doubly exponentially with the number of qubits. This robust subset secures the feasibility of quantum cryptographic protocols such as fingerprinting~\cite{Buhrman2001fingerprinting} in noisy environments. Notably, this distinguishability threshold precisely matches the quantum error correction threshold for random stabilizer codes~\cite{Hayden2008Capacity}. This deep connection emerges naturally in our lower bound proof: determining whether two random states from a 2-design remain distinguishable under noise is mathematically equivalent to asking whether a single logical qubit can be successfully protected by a randomized encoding.

The information extractable by a measurement $\mathcal{F}$ from a quantum state ensemble is governed by the average pairwise distance of the post-measured ensemble $\{p_{x},\mathcal{F}(\rho_{x})\}$. Evaluating this distinguishability for post-measured noisy ensembles reveals a behavior in drastic contrast to the unmeasured case. We find that for any measurement, post-measured noisy 2-design ensembles exhibit only exponentially small distinguishability under local purity-shrinking noise, excluding the existence of a positive noise threshold and any fault-tolerant phase. By duality, we prove this no-go result extends to scenarios involving arbitrary input states and 2-design measurements. This implies that quantum learning tasks relying on 2-designs, such as global classical shadow tomography~\cite{huang2020shadow}, inherently suffer from unbounded sample complexity unless actively protected by fault tolerance.

We summarize several key information-theoretic perspectives in Fig.~\ref{fig:main}. Our findings show that, with high probability, a unitary 2-design ensemble can maintain high distinguishability under noise in certain regimes, with its detailed behavior characterized by three distinct phases, but any measurement fundamentally changes the situation and destroys the distinguishability. This behavior suggests an interesting connection to the classical capacity of quantum channels~\cite{Hausladen1996capacity,Datta2013Capacity}, unifying both concepts under a common framework: transmitting classical messages via encoded quantum states through a noisy channel and applying state discrimination~\cite{Stephen2009discrimination} to decode messages. The study of capacity assumes optimal encoding and decoding to maximize transmission rates, realizing valid transmission only when the regularized Holevo capacity is positive~\cite{Schumacher1997classical,Holevo1998capacity,Datta2013Capacity}. In contrast, our average pairwise distance corresponds to communicating a single bit using two random states and an optimal measurement. Because this randomized encoding lacks optimal structure, its distinguishability threshold is more stringent than the maximum classical capacity, given by the single-letter coherent information. Furthermore, the post-measured ensemble corresponds to utilizing the entire 2-design as the encoding alphabet, where we prove no positive threshold exists. From an information-theoretic perspective, our approach captures the generic baseline performance of unstructured quantum protocols, establishing fundamental bounds for communication and learning without relying on sophisticated encodings.

\begin{figure*}[htbp!]
\centering
\includegraphics[width=.6\textwidth]{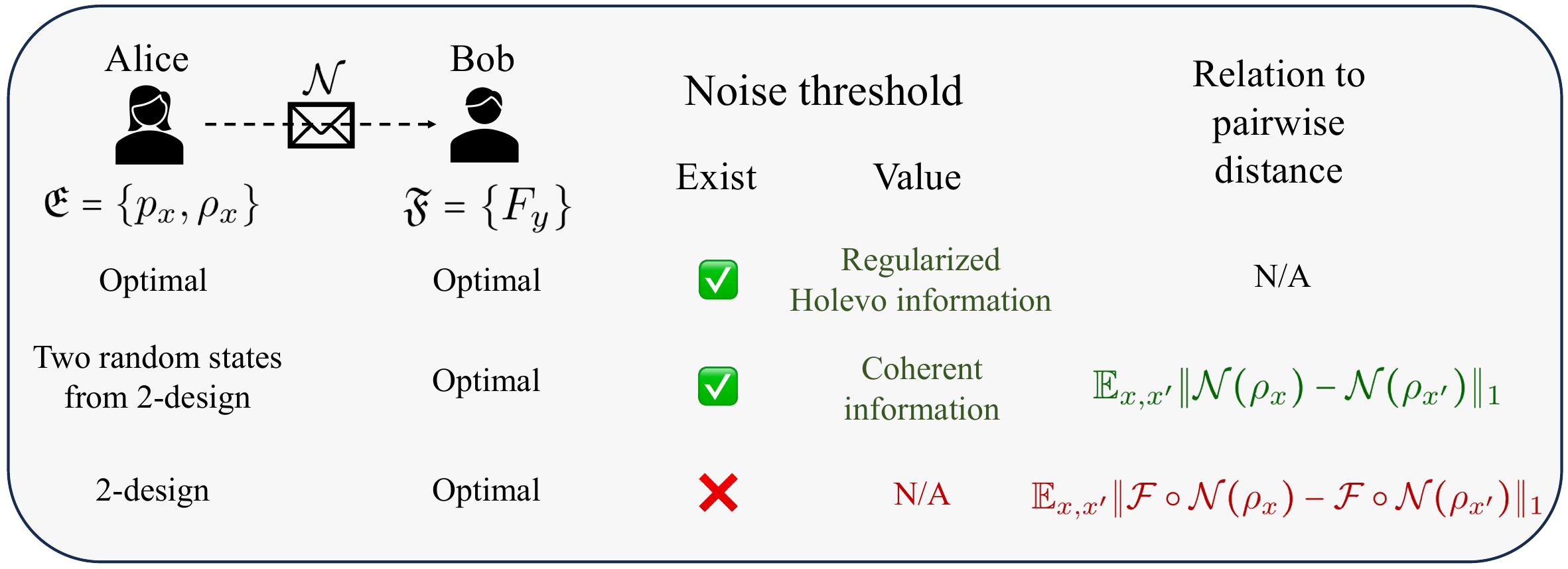}
\caption{Information-theoretic perspectives of our results. We unify the classical capacity of quantum channels and state distinguishability within an information-transmission framework. Optimal encoding and measurement yield the highest noise threshold, determined by the regularized Holevo information. Encoding with two random states from a 2-design features a more stringent threshold governed by coherent information. This corresponds to the quantum capacity of channel $\mathcal{N}$, which is inherently smaller than the classical capacity. Below this threshold, reliable information delivery remains possible. Conversely, encoding with the entire 2-design ensemble yields no positive threshold: the post-measured average pairwise distance and mutual information always decay exponentially under any local purity-shrinking noise.}
\label{fig:main}
\end{figure*}

\section{Preliminaries and ensemble distinguishability}
We summarize key notation below, with further mathematical details provided in Appendix~\ref{appendsc:pre}. We consider an $n$-qubit system $A$ with Hilbert space dimension $d=2^n$. Quantum states are represented by density matrices $\rho \in \mathcal{D}(\mathcal{H}_A)$ and are pure if and only if $\tr(\rho^2)=1$. $\ket{\psi}$ denotes a pure state  with corresponding density matrix $\psi = \ketbra{\psi}$. Evolutions of the quantum system are described by completely positive trace-preserving (CPTP) maps, or quantum channels. The channel $\mathcal{N}$ is fully characterized by its Choi state $\tau_{\mathcal{N}} = (\mathcal{N} \otimes \mathcal{I})(\ketbra{\Phi})$, where $\ket{\Phi} = \sum_i \ket{i}\ket{i}/\sqrt{d}$ is the maximally entangled state on two copies of systems, $A \otimes A'$. A channel $\mathcal{N}$ is said to be unital if it satisfies $\mathcal{N}(\id) = \id$. Measurement is described by positive operator-valued measure (POVM) $\mathfrak{F} = \{F_y\}$, satisfying $F_y \geq 0$ and $\sum_y F_y = \mathbb{I}$.

The celebrated Holevo--Helstrom bound establishes the optimal success probability for distinguishing two states $\rho$ and $\sigma$ (drawn with equal prior probability) as
\begin{equation}
p_{\text{succ}} = \frac{1}{2}\left[1 + \frac{1}{2}\Vert\rho - \sigma\Vert_1\right],
\end{equation}
where $\frac{1}{2}\Vert\rho - \sigma\Vert_1$ is the trace distance~\cite{HOLEVO1973Statistical,Helstrom1969estimation}. For a state ensemble $\mathfrak{E} = \{p_x, \rho_x\}$, we naturally generalize the definition and use average pairwise distance to characterize the distinguishability.

\begin{definition}[Ensemble distinguishability]
For an ensemble \(\mathfrak{E}=\{p_x,\rho_x\}\), we define its distinguishability as the average pairwise trace distance:
\begin{equation}
\mathbb{E}_{x,x'\sim \mathfrak{E}}
\left[
\frac{1}{2}\Vert\rho_x-\rho_{x'}\Vert_1
\right],
\end{equation}
where \(\rho_x\) and \(\rho_{x'}\) are independently drawn from the ensemble \(\mathfrak{E}\).
\end{definition}

This metric naturally characterizes the average optimal success probability for distinguishing two randomly drawn states. It also determines the capacity to transmit a single bit of information via a random two-state system. Furthermore, extending this average result to the high-probability case allows us to identify subsets of states exhibiting high mutual distinguishability.

This work focuses on the distinguishability of the noisy 2-design ensemble, $\{p_x, \mathcal{N}(\rho_x)\}$, where $\{p_x, \rho_x\}$ is a 2-design ensemble, and $\mathcal{N}$ is the noise channel. The state 2-design ensemble satisfies
\begin{equation}
\mathbb{E}_x[ \rho_x^{\otimes 2} ] = (\id+S)/(d(d+1)),
\end{equation}
where $\id$ and $S$ are identity and SWAP operators, respectively. Note that for the noiseless state 2-design ensemble, all components $\rho_x$ have to be pure. Similarly, the unitary 2-design ensemble $\mathfrak{U} = \{p_k, U_k\}$ satisfies
\begin{equation}
\begin{split}
\mathbb{E}_{U \sim \mathfrak{U}}[U^{\otimes 2} (\cdot) U^{\dagger\otimes 2}] &= \int_{\text{Haar}} dU\, U^{\otimes 2} (\cdot) U^{\dagger\otimes 2}\\
&= \tr(\cdot \frac{\id-d^{-1}S}{d^2-1} )\id + \tr(\cdot \frac{S-d^{-1}\id}{d^2-1} )S.
\end{split}
\end{equation}
Prominent examples include the multiqubit Clifford group and random stabilizer states as a unitary and a state 2-design, respectively. We also analyze the post-measured noisy ensemble, $\{p_x, \mathcal{F}\circ\mathcal{N}(\rho_x)\}$, where $\mathcal{F}(\rho) = \sum_y \tr F_y\rho \ketbra{y}$ is the measurement channel associated with POVM $\mathfrak{F} = \{F_y\}$.

Besides investigating the distinguishability, we frame these ensembles within an operational communication task: Alice selects an index $X$ from a distribution, prepares $N$ copies of $\rho_X$, and transmits them to Bob. Bob performs a POVM $\mathfrak{F}^i = \{F^i_y\}$ on the $i$-th copy, collects outcomes $Y = (Y_1, \dots, Y_N)$, and employs a predictor $\Bar{X} = f(Y)$ to estimate $X$. The protocol's performance is quantified by the mutual information $I(X:\bar{X})$ between the encoded and estimated messages. By data processing inequality, since $\Bar{X} = f(Y)$, we have that $I(X:\bar{X})\leq I(X:Y)$. We will demonstrate that this mutual information is intrinsically linked to the average pairwise distance of the state ensemble and the chosen POVM, as shown in the following theorem.


\begin{theorem}
\label{thm:pairwisedistance2mutualinformation}
Consider an $n$-qubit communication protocol where Alice encodes a classical variable $X$ into a state $\rho_X$, transmitted through a channel $\mathcal{N}$. Bob performs a measurement modeled by a quantum-to-classical channel $\mathcal{F}$ to obtain outcome $Y$. The mutual information $I(X:Y)$ satisfies:
\begin{equation}
\begin{split}
&I(X:Y)\leq \frac{n}{2} \mathbb{E}_{x,x^{\prime}}\Vert \mathcal{F}\circ\mathcal{N}(\rho_{x}) - \mathcal{F}\circ\mathcal{N}(\rho_{x^{\prime}})\Vert_1\\
&+ h_2\left(\min\{\frac{1}{2}, \frac{1}{2}\mathbb{E}_{x,x^{\prime}}\Vert \mathcal{F}\circ\mathcal{N}(\rho_{x}) - \mathcal{F}\circ\mathcal{N}(\rho_{x^{\prime}})\Vert_1\}\right);
\end{split}
\end{equation}
\begin{equation}
I(X:Y)\geq \frac{1}{8}
\left( \mathbb{E}_{x,x'}
\Vert \mathcal{F}\circ\mathcal{N}(\rho_{x}) - \mathcal{F}\circ\mathcal{N}(\rho_{x^{\prime}})\Vert_1
\right)^2.
\end{equation}
\end{theorem}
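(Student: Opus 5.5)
The lower bound follows from Pinsker's inequality, and the upper bound from a Fannes--Audenaert continuity estimate applied to the Shannon entropies of the outcome distributions. Throughout, write $\sigma_x := \mathcal{F}\circ\mathcal{N}(\rho_x)$, which is a diagonal state, i.e.\ a probability distribution $P_{Y|X=x}$. Write $\bar\sigma := \mathbb{E}_x \sigma_x = P_Y$, and set $D := \mathbb{E}_{x,x'}\Vert\sigma_x-\sigma_{x'}\Vert_1$. The basic identity is
\begin{equation}
I(X:Y) = \mathbb{E}_x D(\sigma_x\Vert\bar\sigma) = H(\bar\sigma) - \mathbb{E}_x H(\sigma_x).
\end{equation}
A second ingredient, used in both directions, is a comparison between distances to the mean and pairwise distances. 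By convexity of the norm, $\Vert\sigma_x-\bar\sigma\Vert_1 \le \mathbb{E}_{x'}\Vert\sigma_x-\sigma_{x'}\Vert_1$. Conversely, the triangle inequality $\Vert\sigma_x-\sigma_{x'}\Vert_1\le \Vert\sigma_x-\bar\sigma\Vert_1+\Vert\sigma_{x'}-\bar\sigma\Vert_1$ gives $D \le 2\,\mathbb{E}_x\Vert\sigma_x-\bar\sigma\Vert_1$.

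\textbf{Lower bound.} I will apply Pinsker's inequality to each term of the divergence form: $D(\sigma_x\Vert\bar\sigma)\ge \frac12\Vert\sigma_x-\bar\sigma\Vert_1^2$. This holds with natural logarithms, and is only stronger in bits. Averaging over $x$ and applying Jensen's inequality gives $I(X:Y)\ge \frac12(\mathbb{E}_x\Vert\sigma_x-\bar\sigma\Vert_1)^2$. The triangle-inequality comparison then gives $\mathbb{E}_x\Vert\sigma_x-\bar\sigma\Vert_1 \ge D/2$, so $I(X:Y)\ge D^2/8$, as claimed.

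\textbf{Upper bound.} I will use the entropy form and apply, for each $x$, the Fannes--Audenaert inequality to the pair $(\bar\sigma,\sigma_x)$:
\begin{equation}
|H(\bar\sigma)-H(\sigma_x)|\le t_x\log(|\mathcal{Y}|-1)+h_2(t_x),
\end{equation}
where $t_x=\frac12\Vert\sigma_x-\bar\sigma\Vert_1$ and $|\mathcal{Y}|$ is the number of outcomes. Averaging over $x$ and using concavity of $h_2$ yields
\begin{equation}
I(X:Y)\le \bar t\log|\mathcal{Y}| + h_2(\bar t), \qquad \bar t := \mathbb{E}_x t_x .
\end{equation}
The convexity comparison gives $\bar t\le D/2$. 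Since $h_2$ is increasing on $[0,1/2]$ and bounded by $h_2(1/2)=1$, we may replace $h_2(\bar t)$ by $h_2(\min\{1/2,D/2\})$. If $\log|\mathcal{Y}|\le n$, the first term becomes $\frac n2 D$, which is exactly the stated bound. An equivalent route is to note that $I(X:Y)=H(Y|X)_{\rho_X\otimes\bar\sigma}-H(Y|X)_{\rho_{XY}}$, with $\frac12\Vert\rho_{XY}-\rho_X\otimes\bar\sigma\Vert_1=\bar t$, and then apply the Alicki--Fannes--Winter bound.

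\textbf{Main obstacle.} The hard step is justifying the factor $n=\log d$, that is, controlling the outcome alphabet size by the Hilbert-space dimension. The POVM $\mathfrak{F}$ may a priori have more than $d$ outcomes, and fine-graining a measurement can only increase $I(X:Y)$, so we cannot simply coarse-grain down to $d$ outcomes. What I would try first is to bound the entropy difference directly by the dimension of the quantum system the measurement acts on. Concretely, I would either restrict attention to (or reduce, via a Naimark-type argument) measurements with at most $d$ effective outcomes, such as rank-one projective measurements. Alternatively, I would run the continuity argument at the level of the quantum states, using $I(X:Y)\le\chi(\{p_x,\mathcal{N}(\rho_x)\})$ together with Audenaert's bound in dimension $d$, and then relate the resulting quantum trace distances back to the measured ones. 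If neither closes cleanly, the statement should be read with $|\mathcal{Y}|\le 2^n$ (e.g.\ $n$-qubit computational-basis-type measurements) as an implicit assumption, under which the argument above is complete.
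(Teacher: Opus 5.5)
Your argument is the paper's. The lower bound (Pinsker on $\mathbb{E}_x D(\sigma_x\Vert\bar\sigma)$, then Jensen, then the triangle inequality) is identical. Your upper bound is exactly what the paper obtains once it passes from the Holevo quantity of $\{p_x,\mathcal{N}(\rho_x)\}$ to that of the measured ensemble $\{p_x,\sigma_x\}$ and applies Audenaert--Fannes, concavity of $h_2$ and convexity of the norm. Under $|\mathcal{Y}|\le 2^n$ your proof is complete.

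The obstacle you flag is genuine, and the paper does not resolve it either. After replacing $\rho_x$ by $\mathcal{F}(\rho_x)$ it keeps $\log(d-1)$ without comment, although Audenaert--Fannes for these diagonal states only gives $\log(|\mathcal{Y}|-1)$. The theorem, however, is stated and later used for arbitrary measurements. Neither of your proposed repairs closes this:
\begin{itemize}
\item a Naimark dilation enlarges the space rather than reducing the number of outcomes;
\item going through $\chi(\{p_x,\mathcal{N}(\rho_x)\})$ produces unmeasured trace distances, which by data processing dominate the measured ones, so they cannot be bounded by $D$.
\end{itemize}

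The missing idea is to let $d$ enter through likelihood ratios rather than through the alphabet size. Put $\tilde\rho_x=\mathcal{N}(\rho_x)$ and $\bar\rho=\sum_x p_x\tilde\rho_x$. Since $p_x\tilde\rho_x\le\bar\rho$, one has $\tilde\rho_x\le\lambda_x\bar\rho$ with $\lambda_x=\lambda_{\max}(\bar\rho^{-1/2}\tilde\rho_x\bar\rho^{-1/2})\le\tr(\bar\rho^{-1}\tilde\rho_x)$. Hence $\sum_x p_x\lambda_x\le\mathrm{rank}\,\bar\rho\le d$, and $\sigma_x(y)\le\lambda_x\bar\sigma(y)$ for every outcome $y$, however many outcomes there are.

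Because $r\log r$ is convex and vanishes at $r=1$, $D(\sigma_x\Vert\bar\sigma)\le f(\lambda_x)\,t_x$ with $f(\lambda)=\lambda\log\lambda/(\lambda-1)$. This $f$ is increasing and concave on $[1,\infty)$. Jensen with weights $p_xt_x/\bar t$, together with $t_x\le 1$, then gives
\begin{equation*}
I(X:Y)\le \bar t\,f(d/\bar t)=\frac{d\,\bar t}{d-\bar t}\log\frac{d}{\bar t}.
\end{equation*}

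It remains to compare this with the theorem's right-hand side.
\begin{itemize}
\item For $\bar t\le 1/2$ the bound is at most $n\bar t+h_2(\bar t)$. The difference reduces to $\frac{\bar t^{2}}{d-\bar t}\log\frac{d}{\bar t}\le(1-\bar t)\log\frac{1}{1-\bar t}$, whose left side is below $0.69\,\bar t$ and right side above $0.72\,\bar t$.
\item For $\bar t>1/2$ the bound is at most $\frac{d}{d-1}(n\bar t+\frac12)\le n\bar t+1$ when $n\ge 3$, and Holevo's $I(X:Y)\le n$ handles $n\le 2$.
\end{itemize}
The right-hand side of the theorem is nondecreasing in $D$, and $D\ge 2\bar t$. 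So the stated upper bound holds with no restriction on $|\mathcal{Y}|$.
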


The proof, which is detailed in Appendix~\ref{appendssc:pairwisemulinf}, treats $\mathcal{F} \circ \mathcal{N}$ as an effective quantum-to-classical channel. The upper bound is derived by combining the Holevo bound with the Audenaert--Fannes inequality~\cite{audenaert2006sharp}, whereas the lower bound is obtained using Pinsker's inequality. Crucially, Theorem~\ref{thm:pairwisedistance2mutualinformation} demonstrates that the scaling of the mutual information $I(X:Y)$ is tightly governed by the average pairwise trace distance of the post-measured ensemble, with a maximum gap scaling linearly with $n$. In particular, if the average pairwise distance is exponentially small, the mutual information will have the same scaling. The average pairwise distance hence provides a quantitative measure of the performance of the communication protocol.

\section{Average pairwise distance under noise}
We now establish rigorous bounds on the average pairwise distance of a noisy state ensemble $\{p_x, \mathcal{N}(\rho_x)\}$, given by $\mathbb{E}_{x,x^{\prime}}\Vert\mathcal{N}(\rho_{x})-\mathcal{N}(\rho_{x^{\prime}})\Vert_{1}/2$. Evaluating this metric requires us to carefully track the information flow from the quantum system to the environment when encoded within a scrambled ensemble. To this end, we leverage the decoupling framework to characterize how this scrambling effect protects the encoded information.

\begin{theorem}\label{thm:noisydistinguish}
Suppose state set $\mathfrak{S} = \{U\ketbra{0^n}U^{\dagger}, U\in \mathfrak{U}\}$ is generated by unitary group $\mathfrak{U}$, then
\begin{equation}\label{eq:upper}
\mathbb{E}_{\psi_1,\psi_2\sim \mathfrak{S}}\Vert\mathcal{N}({\psi_1})-\mathcal{N}({\psi_2})\Vert_1 / 2 \leq \epsilon_{\mathcal{N}, \mathfrak{U}, \ketbra{0^n}};
\end{equation}
\begin{equation}\label{eq:lower}
\begin{split}
&\mathbb{E}_{\psi_1,\psi_2\sim \mathfrak{S}}\Vert\mathcal{N}({\psi_1})-\mathcal{N}({\psi_2})\Vert_1 / 2 \geq \mathbb{E}_{\psi_1,\psi_2\sim \mathfrak{S}}\Vert {\psi_1} - {\psi_2} \Vert_1 / 2\\
&- 4\sqrt{\max_W \epsilon_{\hat{\mathcal{N}}, \mathfrak{U}_W, \ketbra{\Phi}\otimes \ketbra{0^{n-1}}}}.
\end{split}
\end{equation}
Here, we denote $\mathfrak{U}_W = \{UW, U\in \mathfrak{U}\}$ while $W$ is an arbitrary unitary operation. Also,
\begin{equation}\label{eq:decouple}
\begin{split}
&\epsilon_{\mathcal{N}, \mathfrak{U}, \rho} \equiv\\
&\mathbb{E}_{U\sim \mathfrak{U}}\Vert \mathcal{N}_{S\rightarrow E}(U_S\rho_{SR} U_S^{\dagger}) - \mathbb{E}_{U\sim \mathfrak{U}}\mathcal{N}_{S\rightarrow E}(U_S\rho_{S} U_S^{\dagger})\otimes \rho_R\Vert_1
\end{split}
\end{equation}
is the decoupling error corresponding to channel $\mathcal{N}$, initial state $\rho$ and unitary group $\mathfrak{U}$.
\end{theorem}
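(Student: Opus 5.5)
The upper bound \eqref{eq:upper} should follow from the triangle inequality alone. The lower bound \eqref{eq:lower} I plan to get by turning each pair of states into a one-qubit code and using the standard decoupling-implies-recovery argument (Uhlmann's theorem) through the complementary channel $\hat{\mathcal{N}}$.

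\emph{Upper bound.} For \eqref{eq:upper} I take the input $\rho=\ketbra{0^n}$ with a trivial reference $R$. Then \eqref{eq:decouple} reduces to
\begin{equation*}
\epsilon_{\mathcal{N},\mathfrak{U},\ketbra{0^n}}=\mathbb{E}_{U}\Vert\mathcal{N}(U\ketbra{0^n}U^\dagger)-\bar\sigma\Vert_1,
\qquad \bar\sigma\equiv\mathbb{E}_{U}\mathcal{N}(U\ketbra{0^n}U^\dagger).
\end{equation*}
For independent $\psi_1,\psi_2$ I insert $\bar\sigma$ and use the triangle inequality:
\begin{equation*}
\Vert\mathcal{N}(\psi_1)-\mathcal{N}(\psi_2)\Vert_1\le\Vert\mathcal{N}(\psi_1)-\bar\sigma\Vert_1+\Vert\bar\sigma-\mathcal{N}(\psi_2)\Vert_1 .
\end{equation*}
Each term averages to $\epsilon$, so after dividing by $2$ the average pairwise distance is at most $\epsilon$.

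\emph{Lower bound, reduction to a fixed pair.} Since $\mathfrak{U}$ is a group, I write $\psi_i=U_i\ketbra{0^n}U_i^\dagger$ and reparametrize $U_2=U_1V$. Then $(U_1,V)$ are independent, and for fixed $V$ the element $U_1$ is still uniform over $\mathfrak{U}$. For fixed $V$, the vectors $\ket{0^n}$ and $V\ket{0^n}$ span a space of dimension at most two. I choose a unitary $W=W_V$ mapping $\ket{0}\ket{0^{n-1}}$ to $\ket{0^n}$ and $\ket{1}\ket{0^{n-1}}$ into that span. Then $U_1W$ encodes a logical qubit $L$, and both $\psi_1$ and $\psi_2$ lie in the image of the encoding $\mathcal{E}_{U_1W}$. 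Moreover $U_1W$ is uniformly distributed over $\mathfrak{U}_W$.

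\emph{Lower bound, decoupling to recovery.} Define
\begin{equation*}
\delta(U)=\Vert\hat{\mathcal{N}}(U_S\Phi_{SR}U_S^\dagger)-\hat{\mathcal{N}}(\cdot)\otimes\tfrac{\id_R}{2}\Vert_1,
\end{equation*}
with input $\Phi\otimes\ketbra{0^{n-1}}$ and $U$ ranging over $\mathfrak{U}_W$. By Uhlmann's theorem, together with Fuchs--van de Graaf, there is a decoder $\mathcal{D}_U$ such that the purified distance between $(\mathcal{D}_U\circ\mathcal{N}\circ\mathcal{E}_U\otimes\mathcal{I}_R)(\Phi)$ and $\Phi$ is $O(\sqrt{\delta(U)})$.

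To pass to a pure code state $\ket{\varphi}$, I note that it arises from $\Phi$ by projecting $R$ onto a fixed state, an event of probability $1/2$. Hence its recovery error is at most twice the entanglement error, giving
\begin{equation*}
\tfrac12\Vert\mathcal{D}\circ\mathcal{N}(\psi_i)-\psi_i\Vert_1\lesssim\sqrt{\delta}.
\end{equation*}
By the data-processing inequality and the triangle inequality,
\begin{equation*}
\Vert\mathcal{N}(\psi_1)-\mathcal{N}(\psi_2)\Vert_1\ge\Vert\mathcal{D}\mathcal{N}(\psi_1)-\mathcal{D}\mathcal{N}(\psi_2)\Vert_1\ge\Vert\psi_1-\psi_2\Vert_1-\textstyle\sum_i\Vert\mathcal{D}\mathcal{N}(\psi_i)-\psi_i\Vert_1 .
\end{equation*}
I then average over $U_1$ at fixed $V$, using Jensen's inequality (concavity of $\sqrt{\cdot}$) to get $\sqrt{\epsilon_{\hat{\mathcal{N}},\mathfrak{U}_{W_V},\Phi\otimes 0^{n-1}}}$. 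Bounding this by $\max_W$ and averaging over $V$ yields \eqref{eq:lower}.

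The main obstacle is the constant bookkeeping in the decoupling-to-recovery step. This means pinning down the Uhlmann and Fuchs--van de Graaf constants and the factor $2$ from post-selecting the reference, so that the total coefficient comes out as $4$. A further point to handle is that the decoupling in \eqref{eq:decouple} compares to $\mathbb{E}_U\hat{\mathcal{N}}(\cdot)$ rather than to the $U$-dependent marginal. I will deal with this by an extra triangle inequality, which I expect costs at most a factor $2$ inside the square root.
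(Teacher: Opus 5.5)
Your proposal is correct and follows essentially the paper's proof. The upper bound is the same triangle inequality; your $\bar\sigma$, used in place of the paper's $\mathcal{N}(\id/2^n)$, even removes the implicit 1-design assumption. The lower bound uses the same ingredients: the logical-qubit embedding via $W$, complementary-channel recovery, Jensen, and $\max_W$. Your post-selection on $R$ gives the same per-state trace-norm error $4\sqrt{\delta(U)}$ as the paper's route through $\Vert\mathcal{T}-\mathcal{I}\Vert_\diamond\le 2d\sqrt{1-F^2(\mathcal{T})}$.

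One adjustment is needed: drop the extra triangle inequality for the marginal. The Uhlmann step works for any fixed $\xi_E$, so take $\xi_E=\mathbb{E}_U\hat{\mathcal{N}}\circ\mathcal{E}_U(\id_S/2)$ directly. This gives exactly the quantity in \eqref{eq:decouple} with coefficient $4$, whereas paying the factor $2$ inside the square root would only give $4\sqrt{2}$.
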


Theorem~\ref{thm:noisydistinguish} rigorously reveals a profound physical connection between state distinguishability and quantum error correction induced by information scrambling. The upper bound in \eqref{eq:upper} captures the effect of state concentration: if the noise channel causes the output states to strongly concentrate around a single average state, the information of the input state is lost, and the distinguishability gets destroyed. Conversely, the lower bound in \eqref{eq:lower} reveals how a scrambled ensemble protects itself. By treating a random pair of states $\psi_1, \psi_2$ as the basis of a logical qubit, the loss of distinguishability is physically equivalent to the logical error induced by the environment. If the complementary channel $\hat{\mathcal{N}}$ limits the environment's ability to extract this logical information, which is characterized by a small decoupling error $\epsilon_{\hat{\mathcal{N}}}$, the encoded information is preserved~\cite{Beny2010AQEC,Liu2026AQEC}, and high distinguishability survives. The full proof is detailed in Appendix~\ref{appendsc:pairwise}.

We now apply this framework to the crucial model of 2-design ensembles. For unitary 2-designs, the decoupling error is dictated by the conditional collision entropy~\cite{Dupuis2014Decoupling,Szehr2013Decoupling}, defined for a bipartite state $\rho_{SE}$ as $H_2(S|E)_{\rho_{SE}} = -\log \tr[ ( (\id_S \otimes \rho_E^{-1/4}) \rho_{SE} (\id_S \otimes \rho_E^{-1/4}) )^2 ]$. We denote the channel entropy as $H_2(\mathcal{N}) = H_2(S|S')_{\mathcal{N}_{S'}(\ketbra{\Phi}_{SS'})}$, and its complementary counterpart as $H_2(\hat{\mathcal{N}}) = H_2(S|E)_{\hat{\mathcal{N}}_{S'\rightarrow E}(\ketbra{\Phi}_{SS'})}$. The quantity $H_2^{\delta}(\rho)$ refers to the smooth conditional collision entropy, optimized over states $\delta$-close to $\rho$ in purified distance. Combining these entropic quantities with Theorem~\ref{thm:noisydistinguish} yields the following threshold behaviors.

\begin{theorem}\label{thm:2design}
Suppose $\mathfrak{S}$ is generated by a unitary 2-design ensemble, then
\begin{equation}\label{eq:2designupperbound}
\mathbb{E}_{\psi_1,\psi_2\sim \mathfrak{S}}\Vert\mathcal{N}(\psi_1)-\mathcal{N}(\psi_2)\Vert_1/2 \leq\begin{cases}
2^{-\frac{1}{2}H_2(\mathcal{N})}, \\
2^{-\frac{1}{2}H^{\delta}_2(\mathcal{N})}+4\delta.
\end{cases};
\end{equation}
\begin{equation}
\begin{split}\label{eq:2designlowerbound}
\mathbb{E}_{\psi_1,\psi_2\sim \mathfrak{S}}&\Vert\mathcal{N}(\psi_1)-\mathcal{N}(\psi_2)\Vert_1/2 \geq\\
&\begin{cases}
1-2^{-n}-2^{\frac{9}{4}-\frac{1}{4}H_2(\hat{\mathcal{N}})}, \\
1-2^{-n}-4\sqrt{2^{\frac{1}{2}-\frac{1}{2}H_2^{\delta}(\hat{\mathcal{N}})} + 4\delta}.
\end{cases},
\end{split}
\end{equation}
where $H_2$ and $H_2^\delta$ are the conditional collision entropy and smooth conditional collision entropy, respectively; $\delta$ is the smoothing parameter.
\end{theorem}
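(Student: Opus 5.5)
We derive Theorem~\ref{thm:2design} from Theorem~\ref{thm:noisydistinguish}. Each decoupling error there will be bounded with the one-shot decoupling theorem for unitary 2-designs~\cite{Dupuis2014Decoupling,Szehr2013Decoupling}. That theorem states
\begin{equation*}
\epsilon_{\mathcal{T},\mathfrak{U},\rho}\leq 2^{-\frac{1}{2}H_2(S|R)_\rho-\frac{1}{2}H_2(S|E)_{\tau_{\mathcal{T}}}},
\end{equation*}
together with its smoothed version, in which $H_2$ is replaced by $H_2^\delta$ at an additive cost of order $\delta$ (two purified-distance smoothings, each contributing $2\delta$ in trace norm, giving $4\delta$).

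\textbf{Upper bound.} We apply \eqref{eq:upper} with $\rho=\ketbra{0^n}$ and a trivial reference $R$. Then $H_2(S|R)=H_2(S)=0$ because the state is pure, and the channel term is $H_2(\mathcal{N})$, the conditional collision entropy of the Choi state. This gives $\epsilon\leq 2^{-\frac12 H_2(\mathcal{N})}$ directly. The smoothed statement follows the same way.

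\textbf{Lower bound.} We use \eqref{eq:lower} and treat its two terms separately.

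The first term needs $\mathbb{E}\Vert\psi_1-\psi_2\Vert_1/2$ for a state 2-design. For pure states,
\begin{equation*}
\Vert\psi_1-\psi_2\Vert_1/2=\sqrt{1-|\langle\psi_1|\psi_2\rangle|^2}\geq 1-|\langle\psi_1|\psi_2\rangle|^2 .
\end{equation*}
Its expectation depends only on the second moment: $\mathbb{E}\,\tr(\psi_1\psi_2)=\tr(\mathbb{E}\psi)^2=1/d$. This gives $1-2^{-n}$.

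The second term is $\max_W\epsilon_{\hat{\mathcal{N}},\mathfrak{U}_W,\Phi\otimes 0^{n-1}}$. Since $\mathfrak{U}_W=\mathfrak{U}W$ is again a unitary 2-design, the decoupling theorem applies uniformly in $W$. The input $\Phi_{S_1R}\otimes\ketbra{0^{n-1}}$ is pure with one maximally entangled qubit across $S|R$, so $H_2(S|R)=-1$. The channel term is $H_2(\hat{\mathcal{N}})$. Hence the error is at most $2^{\frac12-\frac12H_2(\hat{\mathcal{N}})}$, and the square root times 4 gives
\begin{equation*}
4\cdot 2^{\frac14-\frac14H_2(\hat{\mathcal{N}})}=2^{\frac94-\frac14 H_2(\hat{\mathcal{N}})} .
\end{equation*}
The smoothed version gives $4\sqrt{2^{\frac12-\frac12H_2^\delta(\hat{\mathcal{N}})}+4\delta}$.

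\textbf{Main obstacle.} The delicate step is the bookkeeping of normalizations in the decoupling theorem. The Dupuis bound uses the Choi state of the channel on the full system with the maximally entangled reference, and the paper's $H_2(\hat{\mathcal{N}})$ is defined with exactly that convention. We also have to check that the marginal $\mathbb{E}_U\hat{\mathcal{N}}(U\rho_SU^\dagger)$ in \eqref{eq:decouple} matches $\hat{\mathcal{N}}(\id/d)$, as the 2-design averaging guarantees. Finally, the smoothing must be handled so that exactly the constant $4\delta$ appears and not a larger one.
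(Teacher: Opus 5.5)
Your proposal is correct and follows the paper's own argument. The paper also inserts the unitary-2-design decoupling lemma into Theorem~\ref{thm:noisydistinguish}: it uses $H_2(S|R)=0$ for $\ketbra{0^n}$ in the upper bound, and for the lower bound it uses $\sqrt{1-|\langle\psi_1|\psi_2\rangle|^2}\geq 1-|\langle\psi_1|\psi_2\rangle|^2$, $H_2(S|R)=-1$ for the single EPR pair, and the fact that $\mathfrak{U}_W$ is again a 2-design. One small correction: $\mathbb{E}\,\tr(\psi_1\psi_2)=\tr\big((\mathbb{E}\psi)^2\big)=2^{-n}$ needs only the first moment (the 1-design property), not the second, because $\psi_1$ and $\psi_2$ are drawn independently.
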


Note that the upper bound \eqref{eq:2designupperbound} strictly holds for any state 2-design, while the lower bound \eqref{eq:2designlowerbound} requires the unitary 2-design condition, as its derivation relies on the explicit error-correcting properties of random Clifford encodings.

Theorem~\ref{thm:2design} reveals a remarkable physical phenomenon: the average distinguishability of scrambled ensembles does not degrade smoothly, but rather undergoes sharp phase transitions dictated by the channel's entropies. For local noise of the form $\mathcal{N}^{\otimes n}$, where macroscopic scaling follows $H_2^{\delta}(\mathcal{N}^{\otimes n}) \geq n(H(\mathcal{N})-\delta)$, we can rigorously classify the survival of the ensemble into three distinct phases:

\begin{enumerate}
\item
\textbf{Resilient Phase} ($H(\mathcal{N}) < 0$): In this low-noise regime, scrambling effectively protects the encoded information. The average pairwise distance remains close to $1$, meaning the randomized states resist the noise and stay macroscopically distinguishable.
\item 
\textbf{Intermediate Phase} ($H_2(\mathcal{N}) < 0 \leq H(\mathcal{N})$): This regime is a transition region arising from the gap between the von Neumann entropy $H$, which governs the average information limit, and the collision entropy $H_2$, which dictates strict state concentration. Because a 2-design only mimics true randomness up to the second moment, it is susceptible to larger statistical fluctuations. In this window, the noise is strong enough to destroy the near-perfect distinguishability ($H \geq 0$), but not yet strong enough to completely concentrate the output states into a single indistinguishable average ($H_2 < 0$). Consequently, the distinguishability drops from unity to an algebraic decay of $O(1/\sqrt{n})$.
\item 
\textbf{Collapsed Phase} ($H_2(\mathcal{N}) \geq 0$): Once the noise crosses the collision entropy threshold, the protective effect of scrambling is overwhelmed by its error-spreading effect. Localized errors rapidly propagate across the entire system, forcing the output states to strongly concentrate around the maximally mixed state. The distinguishability decays exponentially, $\exp(-\Omega(n))$, leading to the loss of information.
\end{enumerate}

The resilient phase corresponds to \eqref{eq:2designlowerbound}, and the collapsed and intermediate phases correspond to the first and second inequalities of \eqref{eq:2designupperbound}, respectively. Mathematically, this intermediate window arises from the gap between the von Neumann entropy $H(\mathcal{N})$ and the collision entropy $H_2(\mathcal{N})$, as well as the $O(1/\sqrt{n})$ scaling of the smoothing parameter $\delta$ required to bridge them via the quantum asymptotic equipartition property~\cite{Tomamichel2009QAEP}. While this intermediate regime is distinct from the resilient phase, its exact boundary with the exponential collapsed phase remains an intriguing open question. Resolving this will require analytical techniques that extend beyond the standard 2-design decoupling framework, for instance, developing non-smooth decoupling theorems utilizing R{\'e}nyi entropies $H_{\alpha}$ with $\alpha \to 1$. 
Understanding this boundary more precisely is an interesting direction for future work and would lead to a more complete picture of the phases of noisy scrambled ensembles.

To illustrate this abstract phase diagram, consider a local depolarizing noise channel $\mathcal{N}_p(\rho) = (1-p)\rho+p\frac{\id}{2}$. The critical threshold $H(\mathcal{N}) = 0$ is crossed at a physical noise rate of $p \approx 0.2524$, marking the exit from the resilient phase, while the terminal $H_2(\mathcal{N}) = 0$ threshold occurs at $p \approx 0.4227$. The behavior for local Pauli noise exhibits identical structural transitions, as illustrated in Fig.~\ref{fig:pauliphase}.

\begin{figure}[htbp!]
\centering
\includegraphics[width=.45\textwidth]{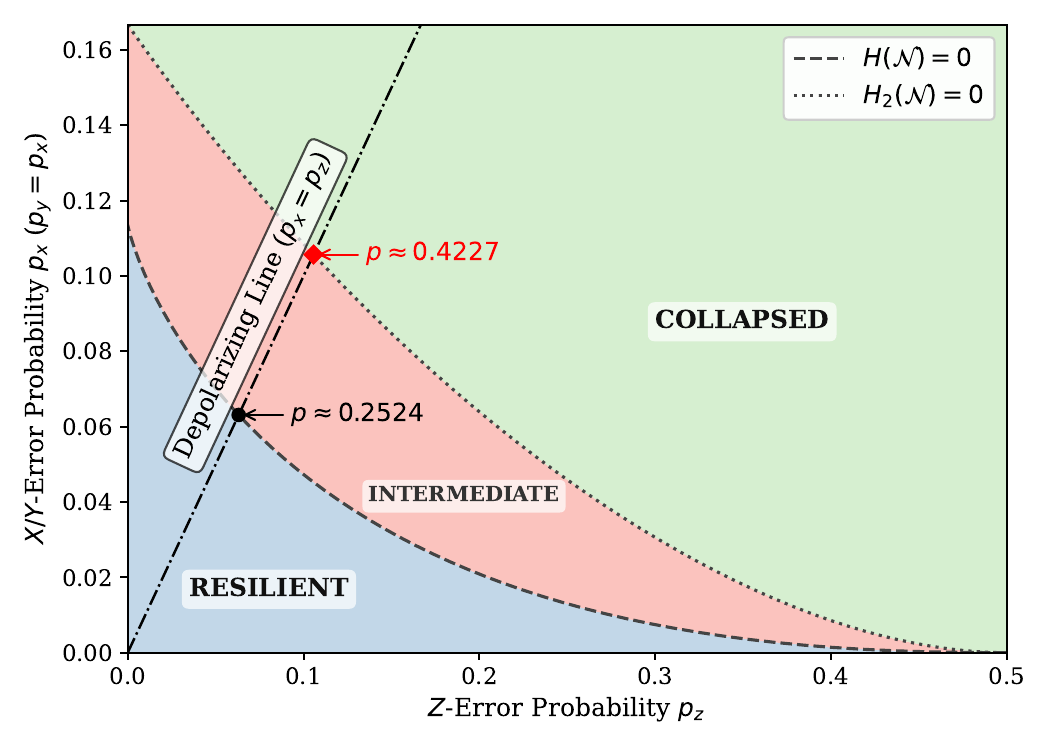}
\caption{Distinguishability phase diagram for local Pauli noise $\mathcal{N}(\rho) = (1-p_x-p_y-p_z)\rho + p_x X\rho X + p_y Y\rho Y + p_z Z\rho Z$, assuming $p_x = p_y$. The zero-entropy bounds of the channel, $H(\mathcal{N}) = 0$ and $H_2(\mathcal{N}) = 0$, sharply partition the noise space into resilient, intermediate, and collapsed phases. The local depolarizing noise trajectory ($p_x = p_z$) intersects these phase transitions at threshold parameters $p \approx 0.2524$ and $p \approx 0.4227$, which correspond to the plotted axis values of $p/4$.}
\label{fig:pauliphase}
\end{figure}

While the unmeasured quantum ensemble exhibits a noise-resilient phase, extracting this information via classical measurement presents a drastically different picture. According to Theorem~\ref{thm:pairwisedistance2mutualinformation}, we apply Theorem~\ref{thm:2design} to the post-measured ensemble $\{p_x, \mathcal{F}\circ \mathcal{N}(\rho_x)\}$ and reveal that the protective power of the 2-design completely vanishes. The upper bound of the average pairwise distance is strictly constrained by $\sqrt{\Vert \mathcal{N}^{\otimes 2}(S)\Vert_{\infty}}$ for any measurement and unital channel $\mathcal{N}$, where $S$ denotes the swap operator on two copies of the system and $\Vert \cdot \Vert_{\infty}$ is the spectral norm. Measurement destroys the quantum information, making coherent quantum distinguishability degrade to mixed classical distinguishability with no error tolerance. Physically, this quantity measures how severely the noise channel shrinks the overlap between two copies of the system. For any local purity-shrinking noise, such as depolarizing noise, this bound decays exponentially with the number of qubits $n$. Consequently, the extractable mutual information vanishes exponentially, definitively precluding the existence of any finite noise threshold.

By duality, this severe loss of information equally applies to scenarios where the measurement itself, instead of the input states, constitutes a 2-design. The only change is to replace $\mathcal{N}$ with its adjoint map $\mathcal{N}^{\dagger}$.

Besides using average pairwise distance to provide bounds on mutual information, we further derive tighter upper bounds by leveraging the $\chi^2$-divergence to bound the Kullback-Leibler divergence. See Appendices~\ref{appendssc:mulinf2design} and~\ref{appendssc:mulinf2designPOVM} for full mathematical details. Finally, we obtain that the mutual information is bounded by
\begin{equation}\label{eq:mutualinfboundstate}
\Vert \mathcal{N}^{\otimes 2}(S)\Vert_{\infty},
\end{equation}
if the state ensemble forms a noisy 2-design, and is bounded by
\begin{equation}\label{eq:mutualinfboundPOVM}
\Vert (\mathcal{N}^{\dagger})^{\otimes 2}(S)\Vert_{\infty},
\end{equation}
if the POVM elements form a noisy 2-design. Ultimately, regardless of whether the scrambling lies in the state preparation or the measurement basis, unmitigated local noise fundamentally obstructs the extraction of classical information from the ensemble.

\section{Applications}
We now demonstrate the operational implications for quantum information processing tasks. Our results establish fundamental boundaries for these tasks by revealing a contrast in two cases of state distinguishability. Specifically, unmeasured state ensembles exhibit three distinct phases characterized by a strictly positive noise threshold. Conversely, post-measured ensembles suffer an immediate distinguishability collapse and exhibit no noise threshold. We explain various practical applications of these findings below.

First, the noise-resilient phase demonstrates that the inherent scrambling of 2-designs natively protects randomly selected state pairs from noise. Because stabilizer states form a 2-design~\cite{Webb2016Clifford3design}, utilizing shared classical randomness to select a random stabilizer pair yields a highly robust, unstructured protocol for reliable classical communication. Furthermore, because the global distinguishability of these pairs survives while their distinguishability under restricted local measurements is fundamentally suppressed, this phase intrinsically enables robust quantum data hiding. It also guarantees the noise-resilience of advanced quantum cryptographic protocols like quantum fingerprinting. We detail the cryptography applications in Sections~\ref{ssc:datahiding} and \ref{ssc:fingerprinting}.

Conversely, our post-measurement bounds impose a strong limitation on classical information extraction from noisy scrambled ensembles. Protocols relying on unstructured scrambled measurements, such as classical shadow tomography, can therefore suffer a fundamental sample complexity blowup under noise. 
 
Under such a measurement, the scrambling-induced protection collapses, and the resilient phase disappears. 
We discuss these limitations for quantum learning in Section~\ref{ssc:learning}.
This contrasts sharply with the unmeasured case where each target pair utilizes a tailored measurement for distinguishability and information extraction. When all scrambled states within the ensemble are subjected to a single universal measurement, the scrambled protection collapses and the resilient phase completely disappears. We discuss these limitations for quantum learning in Section~\ref{ssc:learning}.

\subsection{Noise-resilient phase and robust data hiding}\label{ssc:datahiding}
Many cryptographic protocols are built upon exploiting the operational gap between global and restricted measurements. A prominent example of this broader class of tasks is quantum data hiding~\cite{DiVincenzo2002hiding,Hayden2004Randomizing,Matthews2009Distinguishability}, which aims to encode a classical message into a multipartite quantum state such that parties with joint global measurements can perfectly decode it, while parties restricted to local measurements can extract virtually no information~\cite{DiVincenzo2002hiding}. At their core, these tasks strictly require codebooks that exhibit a massive distinguishability gap: their global trace distance must be close to 1 to allow perfect decoding, while their local trace distance must vanish to ensure security.

For a noiseless $n$-qubit system, randomly selected pairs of stabilizer states natively achieve this separation. Since the average overlap of a 2-design is $2^{-n}$, their average pairwise distance approaches the theoretical maximum, 
\begin{equation}
\mathbb{E}_{\psi_1, \psi_2} \Vert \psi_1 - \psi_2 \Vert_1 / 2 \geq 1 - 2^{-n}.
\end{equation}
This ensures near-perfect global decoding. Conversely, to evaluate the distinguishability under restricted local measurements, consider an observer accessing only a subsystem $A$ of $n_A$ qubits, while tracing out the remaining $n_{\Bar{A}} = n-n_A$ qubits. Because random stabilizer states mimic the entanglement spectrum of Haar-random states up to the second moment, we invoke the decoupling theorem~\cite{Hayden2008Capacity}. The local marginals exponentially thermalize to the maximally mixed state $\id_A / d_A$, bounding the local distinguishing power via the triangle inequality:
\begin{equation}
\mathbb{E}_{\psi_1, \psi_2} \Vert \rho_A^{(1)} - \rho_A^{(2)} \Vert_1  \leq 2^{\frac{2n_A - n}{2}},
\end{equation}
where $\rho_A^{(i)} = \tr_{\Bar{A}} \psi_i$. Thus, as long as the inaccessible subsystem is much larger than half, the unstructured scrambling intrinsically hides the encoded data from local observers.

A critical question is whether this cryptographic distinguishability gap survives in the presence of realistic noise. In typical protocols, environmental noise degrades the global coherences required for decoding, rapidly destroying the protocol. However, our results guarantee that for local noise $\mathcal{N}^{\otimes n}$ strictly within the resilient phase ($H(\mathcal{N}) < 0$), the average pairwise distance remains lower-bounded by $1 - o(1)$. Simultaneously, by the data processing inequality, the local distinguishability for the restricted eavesdropper can only decrease further under noise.

Consequently, the existence of the noise-resilient phase fundamentally secures this broad class of cryptographic tasks. While prior work established the theoretical capacity for noisy quantum data hiding using asymptotic Haar-random block codes~\cite{Lupo2016datahiding}, our findings prove that unstructured 2-designs natively achieve this protection without complex encodings. Unmitigated local noise cannot blind the global receiver, nor can it break the secrecy against local observers. This proves that random stabilizer pairs serve as robust quantum codebooks, naturally preserving the data hiding capacity and related distinguishability gaps without requiring active error correction.

\subsection{Concentration of distinguishability and quantum cryptography}\label{ssc:fingerprinting}
Tasks such as quantum fingerprinting and digital signatures require a codebook of quantum states where every pair is mutually distinguishable, allowing each state to reliably encode a distinct message. In the noiseless regime, the optimal size of such a mutually distinguishable ensemble scales doubly exponentially with the number of qubits $n$~\cite{Buhrman2001fingerprinting}. A critical open question is whether this scaling survives in the presence of noise, and if so, what the operational noise thresholds are.

To answer this, we elevate our average pairwise distance bounds to high-probability statements. While a naive deviation bound could be obtained via Chebyshev's inequality, we achieve a tight bound by considering Haar-random states and leveraging the concentration of measure in high-dimensional spaces. The distance function $f(\psi_1, \psi_2) = \Vert\mathcal{N}(\psi_1) - \mathcal{N}(\psi_2)\Vert_1$ is 2-Lipschitz in both inputs, or $\abs{f(\psi_1, \psi_2)-f(\phi_1, \phi_2)}\leq 2\Vert \ket{\psi_1}-\ket{\phi_1}\Vert_2 + 2\Vert \ket{\psi_2}-\ket{\phi_2}\Vert_2$. Using Lévy's lemma~\cite{ledoux2001concentration}, we show that the trace distance concentrates exponentially fast in the Hilbert space dimension $d=2^n$, satisfying $\Pr_{\psi_1, \psi_2} ( |f(\psi_1, \psi_2) - \mathbb{E}[f]| \geq \epsilon ) \leq 4 \exp( -\frac{d \epsilon^2}{36\pi^3} )$.

\begin{corollary}\label{coro:haarhighprob}
For an ensemble $\mathfrak{S}$ generated by Haar-random states, the trace distance concentrates as:
\begin{equation}
\begin{split}
\Pr_{\psi_1,\psi_2\sim \mathfrak{S}}&(\Vert\mathcal{N}(\psi_1)-\mathcal{N}(\psi_2)\Vert_1 / 2 \geq \epsilon) \leq\\
&\begin{cases}
4 \exp( -\frac{2^n (\epsilon-2^{-\frac{1}{2}H_2(\mathcal{N})})^2}{36\pi^3} ), \\
4 \exp( -\frac{2^n (\epsilon-2^{-\frac{1}{2}H^{\delta}_2(\mathcal{N})}-4\delta)^2}{36\pi^3} ).
\end{cases};\\
\end{split}
\end{equation}
\begin{equation}
\begin{split}
\Pr_{\psi_1,\psi_2\sim \mathfrak{S}}&(\Vert\mathcal{N}(\psi_1)-\mathcal{N}(\psi_2)\Vert_1 / 2 \leq 1-\epsilon) \leq\\
&\begin{cases}
4 \exp( -\frac{2^n (\epsilon-2^{-n}-2^{\frac{9}{4}-\frac{1}{4}H_2(\hat{\mathcal{N}})})^2}{36\pi^3} ), \\
4 \exp( -\frac{2^n (\epsilon-2^{-n}-4\sqrt{2^{\frac{1}{2}-\frac{1}{2}H_2^{\delta}(\hat{\mathcal{N}})} + 4\delta})^2}{36\pi^3} ).
\end{cases}.
\end{split}
\end{equation}
\end{corollary}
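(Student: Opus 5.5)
The plan is to combine the concentration inequality stated just before the corollary with the expectation bounds of Theorem~\ref{thm:2design}. Haar-random states form a state 2-design, and the Haar unitaries form a unitary 2-design. Hence both the upper bound \eqref{eq:2designupperbound} and the lower bound \eqref{eq:2designlowerbound} apply to $\mu \equiv \mathbb{E}_{\psi_1,\psi_2}\Vert\mathcal{N}(\psi_1)-\mathcal{N}(\psi_2)\Vert_1/2$.

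The first step is to establish the concentration inequality itself. Set $g=f/2$. The map $(\ket{\psi_1},\ket{\psi_2})\mapsto g$ is defined on the product of two unit spheres in $\mathbb{C}^{d}\cong\mathbb{R}^{2d}$. Its Lipschitz property follows from the data processing inequality and the bound $\Vert\psi-\phi\Vert_1\le 2\Vert\ket{\psi}-\ket{\phi}\Vert_2$. We apply L\'evy's lemma in one variable at a time, with the other held fixed, and combine the two results with a union bound. Splitting the deviation $\epsilon$ into $\epsilon/2$ per variable gives the prefactor $4=2\times 2$. The constant $36\pi^3$ comes from the standard form $2\exp(-C d\,t^2/L^2)$ with $C=2/(9\pi^3)$, after accounting for the Lipschitz constant and the halving. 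We are free to apply L\'evy's lemma to deviations from the mean, because the joint expectation is reached by conditioning on one variable and averaging the inner mean over the other. The one step needing care is applying L\'evy's lemma to deviations of the inner conditional mean $\mathbb{E}_{\psi_2}g(\psi_1,\psi_2)$ as a function of $\psi_1$. That function inherits the same Lipschitz constant, so the union bound goes through. I expect this bookkeeping to be the only real obstacle, and since the paper states the inequality already, the corollary can rely on it.

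The second step is the upper tail. If $\epsilon\ge \mu$, the event $\{g\ge\epsilon\}$ is contained in $\{|g-\mu|\ge \epsilon-\mu\}$. Because $\epsilon-\mu\ge\epsilon-B$ for any upper bound $B\ge\mu$, and the tail bound decreases in its deviation argument, we get $\Pr(g\ge\epsilon)\le 4\exp(-d(\epsilon-B)^2/(36\pi^3))$ whenever $\epsilon> B$. Taking $B=2^{-\frac12 H_2(\mathcal{N})}$ or $B=2^{-\frac12H_2^\delta(\mathcal{N})}+4\delta$ from \eqref{eq:2designupperbound} gives the two cases. When $\epsilon\le B$ the right-hand side is not meaningful, so we read the statement as applying to $\epsilon>B$; we will state this implicit convention.

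The third step is the lower tail, handled symmetrically. The event $\{g\le 1-\epsilon\}$ is contained in $\{\mu-g\ge \mu-1+\epsilon\}$, and $\mu-1+\epsilon\ge \epsilon-(1-L)$ for any lower bound $L\le\mu$. From \eqref{eq:2designlowerbound}, $1-L$ equals either $2^{-n}+2^{\frac94-\frac14H_2(\hat{\mathcal{N}})}$ or $2^{-n}+4\sqrt{2^{\frac12-\frac12H_2^\delta(\hat{\mathcal{N}})}+4\delta}$. Substituting each into the concentration bound, with $d=2^n$, yields the stated expressions, again under the convention that the deviation is positive.
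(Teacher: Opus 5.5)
Your proposal is correct and takes the paper's route: the product-measure L\'evy lemma (Lemma~\ref{Lemma:LevyProduct}), which conditions on one state, uses the inherited Lipschitz constant of the conditional mean, and applies a union bound with an $\epsilon/2$ split to the 2-Lipschitz distance; this is then combined with the mean bounds of Theorem~\ref{thm:2design} through exactly the event inclusions you describe. Your explicit requirement that the deviation $\epsilon-B$ (resp.\ $\epsilon-(1-L)$) be positive is a genuine condition that the paper leaves implicit, and your constant bookkeeping errs only in the harmless direction: for the $1$-Lipschitz $g=f/2$ with $C=2/(9\pi^3)$ you actually get $18\pi^3$ in the denominator, which implies the stated $36\pi^3$.
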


Applying the union bound guarantees the existence of a subset of states $\mathcal{S} \subset \mathfrak{S}$ of size $M$ such that every distinct pair in $\mathcal{S}$ is highly distinguishable. Specifically, we can construct a subset of size $M \sim \exp(c 2^n (\epsilon-2^{-n}-4\sqrt{2^{\frac{1}{2}-\frac{1}{2}H_2^{\delta}(\hat{\mathcal{N}})} + 4\delta})^2)$ where all state pairs have a trace distance strictly greater than $1-\epsilon$, for some constant $c$.

Crucially, for local noise $\mathcal{N}^{\otimes n}$ strictly within the resilient phase ($H(\mathcal{N}) < 0$), we can choose $\epsilon$ to be an arbitrarily small constant while maintaining $M \sim \exp(c' 2^n)$ for some constant $c'$. This proves that quantum fingerprinting protocols remain viable in realistic, noisy environments: provided the noise is below the entropic threshold defined by the coherent information $H(\mathcal{N})$, the distinguishable set remains doubly exponentially large and no active error correction is required. This scaling perfectly matches the noiseless optimal limit~\cite{Buhrman2001fingerprinting}.

\subsection{Post-measured ensemble distinguishability and quantum learning}\label{ssc:learning}
Quantum learning protocols can generally be mapped to information transmission tasks. Consider a protocol learning a specific state from a uniformly distributed ensemble of size $|X|$. The learner receives $N$ independent copies of the state, performs single-copy measurements, and outputs a guess $\Bar{X}$ with failure probability $\delta = \Pr(X \neq \Bar{X})$, which normally decays with $n$. By Fano's inequality, the mutual information is bounded by:
\begin{equation}
I(X:\Bar{X})\geq \log\abs{X} - \delta \log(\abs{X}-1)-h(\delta)\approx \log\abs{X},
\end{equation}
where the approximation holds in the large system limit and small failure probability. By the subadditivity, the $N$ copies can provide at most $N I(X:Y)$ mutual information, the sample complexity is fundamentally bounded by:
\begin{equation}
N  = \Omega\left(\frac{\log\abs{X}}{I(X:Y)}\right).
\end{equation}
Thus, the sample complexity scales inversely with the single-copy mutual information $I(X:Y)$. Combining this with Eqs.~\eqref{eq:mutualinfboundstate} and~\eqref{eq:mutualinfboundPOVM}, the bounds on mutual information, yields the following requirement:

\begin{corollary}\label{coro:mutualinfbound}
For any quantum state learning protocol corrupted by noise $\mathcal{N}$, the sample complexity obeys the following lower bounds:

\begin{enumerate}
    \item If the state ensemble forms a \(2\)-design, then
    \begin{equation}
        N
        =
        \Omega\!\left(
        \frac{\log |X|}
        {\Vert\mathcal{N}^{\otimes 2}(S)\Vert_{\infty}
        }
        \right).
    \end{equation}

    \item If the POVM elements form a \(2\)-design, then
    \begin{equation}
        N
        =
        \Omega\!\left(
        \frac{\log |X|}
        {\Vert\mathcal{N}^{\dagger\otimes 2}(S)\Vert_{\infty}}
        \right).
    \end{equation}
\end{enumerate}
\end{corollary}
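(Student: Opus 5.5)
The plan is to combine the Fano-type lower bound on the total information with a per-copy upper bound on $I(X:Y_i)$. By Fano's inequality, any protocol that identifies $X$ with failure probability $\delta$ satisfies $I(X:\bar X)\ge \log|X|-\delta\log(|X|-1)-h(\delta)$. Data processing then gives $I(X:\bar X)\le I(X:Y_1\cdots Y_N)$.

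\textbf{Step 1: reduce to single copies.} The outcomes $Y_i$ are conditionally independent given $X$, because each copy is measured separately; adaptive measurements are handled the same way by conditioning on earlier outcomes. The chain rule gives $I(X:Y_1\cdots Y_N)=\sum_i I(X:Y_i\mid Y_{<i})$. Conditional independence then gives $I(X:Y_i\mid Y_{<i})\le I(X:Y_i)$. So the total information is at most $N\max_{\mathcal F} I(X:Y)$, where the maximum runs over single-copy POVMs, possibly adaptively chosen. It therefore suffices to show $I(X:Y)\le C\,\Vert\mathcal N^{\otimes2}(S)\Vert_\infty$ uniformly over measurements, and the corollary follows by rearranging.

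\textbf{Step 2: bound $I(X:Y)$ via $\chi^2$.} Write $q_x(y)=\tr[F_y\mathcal N(\rho_x)]$ and $\bar q(y)=\mathbb E_x q_x(y)$. Then $I(X:Y)=\mathbb E_x D(q_x\Vert\bar q)\le \mathbb E_x\chi^2(q_x\Vert\bar q)$, since $D\le\log(1+\chi^2)\le\chi^2$ up to the log base. This gives
\begin{equation*}
I(X:Y)\le \sum_y \frac{\mathbb E_x q_x(y)^2}{\bar q(y)}-1=\sum_y\frac{\tr[F_y^{\otimes2}\mathcal N^{\otimes2}(\mathbb E_x\rho_x^{\otimes2})]}{\tr[F_y\mathcal N(\id/d)]}-1 .
\end{equation*}
Insert the 2-design identity $\mathbb E_x\rho_x^{\otimes2}=(\id+S)/(d(d+1))$. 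The $\id$ term yields $\tr[F_y\mathcal N(\id)]^2/(d(d+1))$. The $S$ term yields $\tr[F_y^{\otimes2}\mathcal N^{\otimes2}(S)]\le\Vert\mathcal N^{\otimes 2}(S)\Vert_\infty(\tr F_y)^2$, which is valid because $F_y^{\otimes2}\ge0$ and $\mathcal N^{\otimes 2}(S)$ is Hermitian.

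\textbf{Step 3 (main obstacle): the denominator.} For unital $\mathcal N$ we have $\bar q(y)=\tr F_y/d$. Summing over $y$, the $\id$ contribution is $\sum_y\tr F_y/(d+1)=d/(d+1)$, and the $S$ contribution is at most $\frac{1}{d+1}\Vert\mathcal N^{\otimes2}(S)\Vert_\infty\sum_y\tr F_y=\frac{d}{d+1}\Vert\mathcal N^{\otimes2}(S)\Vert_\infty$. Hence $I(X:Y)\le\Vert\mathcal N^{\otimes2}(S)\Vert_\infty$ up to a $\log e$ factor. For non-unital noise the denominator becomes $\tr[F_y\mathcal N(\id)]/d$, and the comparison with $\tr F_y$ is the hard step. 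In that case I would first try the bound $\mathcal N^\dagger$-weighted Cauchy–Schwarz, applying the same argument to the rescaled POVM $\tilde F_y=\mathcal N^\dagger(F_y)$. Since $\sum_y\tilde F_y=\id$ still holds, $\{\tilde F_y\}$ is a legitimate POVM acting on the noiseless ensemble, and the $S$-term becomes $\tr[\tilde F_y^{\otimes2}S]=\tr\tilde F_y^2$. I expect the unital case to be the one the statement intends, and this case analysis is where the argument needs care.

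\textbf{Step 4: the dual case.} When the POVM elements form a 2-design, $\{F_y=d\,w_y\phi_y\}$ with $\mathbb E\phi^{\otimes2}\propto\id+S$. Write $q_x(y)=\tr[\mathcal N^\dagger(F_y)\rho_x]$ and repeat Step 2 with the roles exchanged. The second moment is now averaged over $y$ rather than $x$: $\sum_y w_y\,\tr[\phi_y^{\otimes2}\mathcal N^{\dagger\otimes2}(\rho_x\otimes\rho_{x})]$, or its cross term after expanding $\chi^2$ with $\bar q$. This produces $\tr[(\rho\otimes\rho')\,\mathcal N^{\dagger\otimes2}(S)]\le\Vert\mathcal N^{\dagger\otimes2}(S)\Vert_\infty$. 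Combining with Step 1 gives $N=\Omega(\log|X|/\Vert\mathcal N^{\dagger\otimes2}(S)\Vert_\infty)$.
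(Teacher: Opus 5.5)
Your Steps 1--3 (non-adaptive measurements, unital $\mathcal{N}$) are essentially the paper's argument for part 1. For unital noise your marginal $\bar q(y)=\tr F_y/d$ coincides with the paper's reference $P_U$, and the paper also establishes part 1 only for unital noise. The gap is in Step 4. ``Repeating Step 2'' puts the true marginal $\bar q(y)=\tr[F_y\mathcal{N}(\bar\rho)]$, with $\bar\rho=\mathbb{E}_x\rho_x$, in the denominator. In part 2 the input ensemble is arbitrary, so $\bar q(y)$ is not proportional to the design weights $w_y=\tr F_y/d$. The identity $\sum_y w_y\phi_y^{\otimes 2}=(\id+S)/(d(d+1))$ therefore cannot be applied to $\sum_y q_x(y)^2/\bar q(y)$. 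Your expression $\sum_y w_y(\cdots)$ silently assumes the denominator is $w_y$, which holds only when $\mathcal{N}(\bar\rho)=\id/d$. Also, the second moment that actually appears is $\tr[\phi_y^{\otimes 2}\mathcal{N}^{\otimes 2}(\rho_x^{\otimes 2})]$, not $\mathcal{N}^{\dagger\otimes 2}$ applied to the states, and only the diagonal $\rho_x\otimes\rho_x$ term enters.

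The missing step, which the paper uses, is to change the reference distribution. Since $\mathbb{E}_x D(q_x\Vert Q)=\mathbb{E}_x D(q_x\Vert \bar q)+D(\bar q\Vert Q)$, you have $I(X:Y)\le \mathbb{E}_x\chi^2(q_x\Vert Q)$ for every distribution $Q$. With $Q=w$, each term is exactly $\chi^2(q_x\Vert w)=\frac{d\,\tr\mathcal{N}(\rho_x)^2-1}{d+1}\le \tr[\rho_x^{\otimes 2}\mathcal{N}^{\dagger\otimes 2}(S)]\le\Vert\mathcal{N}^{\dagger\otimes 2}(S)\Vert_\infty$. This holds for every input ensemble and every channel, with no unitality needed.

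There are two smaller issues. First, your claim that adaptive measurements are handled ``the same way'' fails for part 1. Conditioned on $Y_{<i}=y_{<i}$, the prior on $X$ becomes the posterior, which is no longer a 2-design, so your single-copy bound does not transfer to $I(X:Y_i\mid Y_{<i}=y_{<i})$. Indeed, for a posterior concentrated on a pair, a Helstrom measurement extracts $\Theta(1)$ bits in the resilient phase. Stay with non-adaptive measurements, as the paper does. In part 2 the corrected bound is uniform over priors, so adaptivity is harmless there. Second, the rescaled POVM $\tilde F_y=\mathcal{N}^\dagger(F_y)$ does not rescue the non-unital case. Treated as a generic POVM on the noiseless 2-design, it only yields the noise-independent bound $(d-1)/(d+1)$. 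Re-expanding $\tr\tilde F_y^2=\tr[F_y^{\otimes 2}\mathcal{N}^{\otimes 2}(S)]$ brings back the same denominator $\tr[F_y\mathcal{N}(\id)]$.
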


For unital and local purity-shrinking noise like depolarizing noise, $\Vert \mathcal{N}^{\otimes 2}(S)\Vert_{\infty}=\Vert \mathcal{N}^{\dagger\otimes 2}(S)\Vert_{\infty}$ is exponentially small in $n$. Consequently, the learning protocol necessarily demands an exponentially large sample complexity, rendering it inefficient without fault tolerance.

The implications of the above analysis extend beyond simple state identification to more general protocols, such as classical shadow tomography~\cite{huang2020shadow}. In this framework, a learner aims to predict linear observables within an accuracy $\epsilon$ using random Clifford measurements. If a shadow protocol succeeds, it can be systematically reduced to the aforementioned state-identification task~\cite{huang2020shadow}. By constructing $M$ distinguishable states $\rho_i$ and corresponding observables $O_i$ such that $\tr(O_i\rho_i) \geq \max_{j\neq i}\tr(O_j\rho_i) + 3\epsilon$, determining the observables reduces exactly to identifying the chosen index $i$. Applying our previous bound to this specific measurement structure yields:
\begin{corollary}\label{coro:shadowbound}
Classical shadow tomography utilizing a 2-design POVM under noise $\mathcal{N}$ requires a sample complexity of $N = \Omega\left( \Vert (\mathcal{N}^{\dagger})^{\otimes 2}(S)\Vert_{\infty}^{-1} \right)$.
\end{corollary}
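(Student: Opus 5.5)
The plan is to follow the standard reduction from shadow tomography to state identification, as sketched just before the statement, and then apply Corollary~\ref{coro:mutualinfbound} (part 2, the 2-design POVM case).

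\textbf{Step 1: the hard instance.} First we fix a hard instance: a family of $M$ states $\{\rho_i\}_{i=1}^M$ and observables $\{O_i\}_{i=1}^M$ with $\Vert O_i\Vert_\infty\le 1$ and the separation $\tr(O_i\rho_i)\geq \max_{j\neq i}\tr(O_j\rho_i)+3\epsilon$. A convenient choice is $\rho_i=(1-3\epsilon')\id/d+3\epsilon'\,\phi_i$ with nearly orthogonal pure states $\phi_i$ and $O_i=\phi_i$. Here $M$ can be taken exponentially large, so that $\log M\ge 1$. Any constant $M\ge2$ already suffices for the stated bound, which has no $\log|X|$ factor.

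\textbf{Step 2: shadow tomography implies identification.} Suppose a shadow protocol using $N$ copies, each measured by a 2-design POVM after noise $\mathcal{N}$, outputs estimates $\hat o_j$ with $|\hat o_j-\tr(O_j\rho_i)|<\epsilon$ for all $j$, with success probability at least $2/3$. Then $\arg\max_j \hat o_j$ returns $i$. This yields an identification protocol for $X$ uniform on $[M]$ with failure probability $\delta\le1/3$. Fano's inequality then gives $I(X:\bar X)\ge \tfrac23\log M-1=\Omega(1)$, taking $M$ large enough that this is a positive constant.

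\textbf{Step 3: bounding per-copy information.} By the data-processing inequality and subadditivity over the $N$ single-copy measurements, we have $I(X:\bar X)\le N\max_{\text{copy}} I(X:Y_k)$. The main obstacle is that the states $\rho_i$ of the hard instance are not themselves random, and the 2-design structure lives in the POVM. We must therefore use the bound \eqref{eq:mutualinfboundPOVM}, which holds for arbitrary input ensembles when the POVM elements form a noisy 2-design, giving $I(X:Y_k)\le \Vert(\mathcal{N}^{\dagger})^{\otimes2}(S)\Vert_\infty$.

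A second subtlety is adaptivity: the choice of 2-design element may depend on earlier outcomes. We would handle this by applying the chain rule $I(X:Y)=\sum_k I(X:Y_k|Y_{<k})$. Conditioned on $Y_{<k}$, the posterior over $X$ is still an ensemble of the same noisy states. The bound \eqref{eq:mutualinfboundPOVM} is uniform in the input ensemble, so each term is at most $\Vert(\mathcal{N}^{\dagger})^{\otimes2}(S)\Vert_\infty$.

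\textbf{Step 4: conclusion.} Combining Steps 2 and 3 gives $\Omega(1)\le N\,\Vert(\mathcal{N}^{\dagger})^{\otimes2}(S)\Vert_\infty$, that is,
\[
N=\Omega\!\left(\Vert(\mathcal{N}^{\dagger})^{\otimes2}(S)\Vert_\infty^{-1}\right).
\]
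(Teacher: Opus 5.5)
Your proposal is correct and follows essentially the paper's route: you reduce shadow tomography to identifying $i$ from the separated family $\{\rho_i,O_i\}$, then combine Fano's inequality, subadditivity over the $N$ copies, and the input-independent $\chi^2$ bound \eqref{eq:mutualinfboundPOVM} (i.e.\ Corollary~\ref{coro:mutualinfbound}, part 2). One caution on your optional adaptivity remark: the chain-rule step is valid only if the measurement on copy $k$, conditioned on $Y_{<k}$, is still a full 2-design POVM. If the individual design element (e.g.\ the Clifford) were chosen adaptively, the conditional measurement would no longer be a 2-design and \eqref{eq:mutualinfboundPOVM} would not apply, though the paper's non-adaptive setting does not need this.
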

This guarantees that for local purity-shrinking noise, classical shadows via 2-design POVMs suffer from an exponential sample complexity in $n$.

Furthermore, we can tighten this bound by exploiting the specific operational structure of shadow tomography. The shadow protocol is universal: for any unitary $U$, measuring $U\rho U^{\dagger}$ and knowing $U$ allows one to logically deduce the observables of $\rho$. By introducing random unitaries to bound the mutual information, this universal property yields an even tighter sample complexity lower bound of $\Omega((\tr(\tau_{\mathcal{N}}^2))^{-1})$, characterized directly by the purity of the Choi state of the noise channel.

This tightened bound is critical for analyzing asymmetric noise. For instance, under local dephasing noise, the previous bound $\Vert (\mathcal{N}^{\dagger})^{\otimes 2}(S)\Vert_{\infty}$ remains constant, which fails to capture the protocol's physical degradation. However, the purity $\tr(\tau_{\mathcal{N}}^2)$ becomes exponentially small, revealing that dephasing alone is sufficient to destroy the efficiency of classical shadow tomography. The full details are provided in Appendix~\ref{appendsc:universal}.

\section{Conclusion and Outlook}
To summarize, we introduced the notion of average pairwise distance to rigorously characterize quantum state ensemble distinguishability under noise, and derived tight information-theoretic bounds via the decoupling framework. These bounds reveal sharp phase transitions for noisy scrambled ensembles in distinguishability governed by conditional entropy thresholds. Operationally, our results guarantee robust data hiding and quantum cryptography tasks within the resilient phase. Conversely, the vanishing distinguishability of post-measured ensembles dictates an exponential sample complexity for quantum learning protocols like classical shadow tomography, precluding any positive constant noise thresholds.

There are several important directions for future work. An important theoretical question is to resolve the precise structure of the intermediate distinguishability phase. Determining whether this algebraic decay regime represents a fundamental physical boundary or is eventually absorbed into the exponential collapsed phase will require analytical techniques that go beyond standard 2-design decoupling properties.

Extending these results to approximate 2-designs is also important.  While the decoupling approach guarantees that the general structure of the resilient and intermediate phases persists for approximate designs, the distinguishability behavior in the collapsed phase is inherently bounded by the approximation error. For instance, in approximate 2-designs generated by one-dimensional logarithmic-depth circuits~\cite{schuster2024randomunitariesextremelylow,laracuente2024approximateunitarykdesignsshallow,Liu2026AQEC}, the approximation error scales as an inverse polynomial, suggesting that the exponential collapse is replaced by an inverse polynomial floor. Sharpening these bounds to determine if a true exponential collapse can still emerge under such relaxed conditions would be valuable.

Furthermore, while the average pairwise distance under noise has been well characterized, the finer structure of this distinguishability deserves further study. Specifically, the current results do not reveal which pairs of states maintain distinguishability under noise and which undergo a non-threshold collapse. Resolving this would naturally classify states within a 2-design into equivalence classes, establishing robust subsets for information transmission and learning. It is particularly compelling to investigate the relationship between this classification and circuit complexity. Because states distinguishable via local observables generally exhibit robustness against local noise, pairs uniquely vulnerable to noise-induced indistinguishability must have long-range entanglement. Consequently, these equivalence classes may be deeply connected to the structures of long-range entanglement and circuit complexity~\cite{Chen2010topological}.

From a practical perspective, an important challenge is to turn our randomized existence results into explicit deterministic constructions of doubly exponentially large sets of mutually distinguishable states, which would be useful for robust quantum fingerprinting and digital signatures. One promising strategy is to leverage good quantum error-correcting codes and utilize their logical states as the codebook. Finally, expanding our mathematical framework to bound the distinguishability of post-measured ensembles under non-unital noise, such as amplitude damping, is desirable for characterizing realistic experimental environments.

\begin{acknowledgements}
G.L., C.Q., and X.M.\ are supported by the National Natural Science Foundation of China (Grant No.~12575023), the Quantum Science and Technology-National Science and Technology Major Project (Grants No.~2021ZD0300804 and No.~2021ZD0300702), and the CCF-QuantumCtek Superconducting Quantum Computing Special Cooperation Program (Grant No.~CCF-QC2025005). G.L.\ acknowledges additional support by the Turing AI Institute of Nanjing (Grant No.~TR-IIIS 002). Z.X.\ acknowledges funding and support from Joint Center for Quantum Information and Computer Science (QuICS) Lanczos Graduate Fellowship. Z.-W.L.\ is supported in part by NSFC under Grant No.~12475023, Dushi Program, and startup funding from YMSC.
\end{acknowledgements}



\bibliography{bibNoisyHypothesis}

\onecolumngrid

\appendix
\section{Preliminaries}\label{appendsc:pre}
\subsection{Basic notations and quantities}\label{appendssc:notation}
Let us start with the basic notations used in this work. We denote the Hilbert space as $\mathcal{H}$, the set of the states on $\mathcal{H}$ as $\mathcal{D}(\mathcal{H})$, the set of the general matrices on $\mathcal{H}$ as $\mathcal{M}(\mathcal{H})$. The qubit Hilbert space is denoted with $\mathcal{H}_2$ with basis $\{\ket{0}, \ket{1}\}$. The Hilbert space of the $n$-qubit system is $\mathcal{H}_2^{\otimes n}$. The Hilbert space $\mathcal{H}_{SE}$ of the combined system $SE$ of subsystems $S$ and $E$ is $\mathcal{H}_{SE} = \mathcal{H}_S\otimes \mathcal{H}_E$. The dimension of $\mathcal{H}_S$ is denoted as $\abs{S}$. For a quantum state $\rho_{SE}\in \mathcal{D}(\mathcal{H}_{SE})$, its reduced density matrix on system $E$ is denoted as $\rho_E = \tr_S(\rho_{SE})$, where $\tr_S$ is the partial trace over subsystem $S$. The unitary operation $U$ on system $S$ is denoted as $U_S$ with the subscript representing the support of $U$. A completely positive and trace-preserving (CPTP) map $\mathcal{N}$ that maps a state from $\mathcal{H}_S$ to $\mathcal{H}_E$ is denoted as  $\mathcal{N}_{S \rightarrow E}$.
The Choi-Jamio{\l}kowski representation of a CPTP map $\mathcal{N}_{S\rightarrow E}$ is a quantum state, $\tau_{\mathcal{N}} = \mathcal{N}_{S'\rightarrow E}(\ketbra{\Phi}_{SS'})$, where $S'$ is a copy of system $S$, and $\ket{\Phi}_{SS'} = \frac{1}{\sqrt{\abs{S}}}\sum_{i=0}^{\abs{S}-1}\ket{i}_S\ket{i}_{S'}$ is the maximally entangled state on system $SS'$.

We use several measures to quantify the distance between two states. The fidelity is defined as
\begin{equation}
F(\rho, \sigma) = \Vert \sqrt{\rho}\sqrt{\sigma}\Vert_1 = \tr\sqrt{\sqrt{\rho}\sigma\sqrt{\rho}}.
\end{equation}
For pure state $\ket{\psi}$ and mixed state $\sigma$, the fidelity is given by
\begin{equation}\label{eq:fidelity_pure_mixed}
F(\ketbra{\psi}, \sigma) = \sqrt{\bra{\psi} \sigma \ket{\psi}}.
\end{equation}
From the fidelity, we can define the purified distance:
\begin{equation}
P(\rho,\sigma) = \sqrt{1-F(\rho, \sigma)^2},
\end{equation}
which satisfies~\cite{nielsen2002quantum}
\begin{equation}\label{eq:tracedistanceineq}
\frac{1}{2}\Vert \rho - \sigma \Vert_1 \leq P(\rho, \sigma) \leq \sqrt{\Vert \rho - \sigma\Vert_1}.
\end{equation}

The trace distance between $\rho$ and $\sigma$ is $\frac{1}{2}\Vert \rho - \sigma \Vert_1$ . The definitions of fidelity $F$ and trace distance can also be generalized to arbitrary positive matrices beyond states by substituting $\rho$ and $\sigma$ with general matrices. For two quantum channels $\mathcal{N}_{S\rightarrow E}$ and $\mathcal{N}'_{S\rightarrow E}$, we normally use diamond distance to characterize their distance, defined as
\begin{equation}
\Vert \mathcal{N}_{S\rightarrow E} - \mathcal{N}'_{S\rightarrow E} \Vert_{\diamond} = \max_{\rho_{SR}} \Vert \mathcal{N}_{S\rightarrow E}(\rho_{SR}) - \mathcal{N}'_{S\rightarrow E}(\rho_{SR})  \Vert_1,
\end{equation}
where $R$ is a reference system without dimension restriction. Nonetheless, it can be shown that considering a reference system $R$ with the same dimension as $S$ is sufficient for maximization. The definition of the diamond distance can also be generalized for two completely positive but non-trace-preserving maps.

In this work, we will massively use the conditional entropy. The conditional collision entropy of $S$ given $E$ for $\rho_{SE}$ is defined as
\begin{equation}
H_2(S|E)_{\rho} = -\log \tr[\left((\id_S\otimes \rho_E)^{-1/4}\rho_{SE}(\id_S\otimes \rho_E)^{-1/4}\right)^2].
\end{equation}
Its smoothed version is
\begin{equation}
H^{\delta}_{2}(S|E)_{\rho} = \sup_{\hat{\rho}_{SE}\in \mathfrak{E}^{\delta}(\rho_{SE})}H_{2}(S|E)_{\hat{\rho}},
\end{equation}
where $\mathfrak{E}^{\delta}(\rho) = \{\rho', P(\rho', \rho)\leq \delta\}$ is the set of all $\delta$-close states of $\rho$. Particularly, we will denote $H_2(S|E)_{\mathcal{N}_{S'\rightarrow E}(\ketbra{\Phi}_{SS'})}$ with $H_2(S|E)_{\mathcal{N}_{S\rightarrow E}}$, $H_2(S|E)_{\mathcal{N}}$ or simply $H_2(\mathcal{N})$. The same holds for $H^{\delta}_2$ and other conditional entropies.

Another useful entropy is the conditional von Neumann entropy, which is defined as
\begin{equation}
H(S|E) = H(SE)-H(E),
\end{equation}
where $H$ is the von Neumann entropy with $H(\rho) = -\tr\rho\log \rho$. One can also define the mutual information between $S$ and $E$,
\begin{equation}
I(S:E)  = H(S)-H(S|E) = H(E)+H(S)-H(SE),
\end{equation}
which is always non-negative. All entropies reduce to their classical counterpart when the quantum state is a mixture of computational-basis states.

The conditional collision entropy can converge to the conditional von Neumann entropy as follows.
\begin{lemma}[AEP~\cite{Tomamichel2009QAEP,Liu2026AQEC}]\label{lem:AEP} For quantum state $\rho_{SE}$, we have
\begin{equation}
H(S|E)_{\rho} - \frac{4\log \gamma \sqrt{\log \frac{2}{\delta^2}}}{\sqrt{n}} \leq \frac{1}{n}H^{\delta}_{2}(S|E)_{\rho^{\otimes n}} \leq H(S|E)_{\rho} + \frac{4\log \gamma \sqrt{\log \frac{2}{\delta^2}}}{\sqrt{n}},
\end{equation}
with $\gamma = O(1)$. When $\rho$ is a two-qudit state with local dimension $q$, $\gamma\leq 2\sqrt{q}+1$.
\end{lemma}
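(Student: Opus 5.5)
The statement to prove is Lemma~\ref{lem:AEP}: the smooth conditional collision entropy of $\rho^{\otimes n}$, divided by $n$, lies within $O(1/\sqrt{n})$ of $H(S|E)_\rho$. I would prove it by sandwiching $H_2^\delta$ between the two standard one-shot quantities whose asymptotics are known.

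\textbf{Upper bound.} Use the ordering $H_2^{\delta}(S|E) \leq H_{\max}^{\delta'}(S|E)$ for a suitably related smoothing parameter; this holds, possibly up to an additive $O(\log(1/\delta))$ term. The fully quantum AEP of Tomamichel--Colbeck--Renner~\cite{Tomamichel2009QAEP} gives
\[
\frac{1}{n}H_{\max}^{\delta}(S|E)_{\rho^{\otimes n}} \leq H(S|E)_\rho + \frac{4\log\gamma\sqrt{\log(2/\delta^2)}}{\sqrt{n}},
\]
with $\gamma = 2^{\frac12 H_{\min}}+2^{\frac12 H_{\max}}+1$ built from the non-smooth min- and max-entropies of $\rho$. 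Alternatively, bound $H_2^\delta$ directly by a Rényi-$\alpha$ conditional entropy with $\alpha<1$ via a smoothing argument, expand in $\alpha$ around $1$, and optimize $\alpha-1 \sim 1/\sqrt{n}$.

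\textbf{Lower bound.} The easy chain is $H_2^\delta \geq H_{\min}^\delta$. This holds because $H_2(S|E)_{\hat\rho} \geq H_{\min}(S|E)_{\hat\rho|\hat\rho}$ for each candidate $\hat\rho$: the collision entropy conditioned on $\hat\rho_E$ dominates the min-entropy conditioned on $\hat\rho_E$, which in turn dominates the min-entropy optimized over the conditioning state. Take the supremum over the $\delta$-ball and apply the TCR AEP lower bound $\frac1n H_{\min}^\delta(S|E)_{\rho^{\otimes n}} \geq H(S|E)_\rho - 4\log\gamma\sqrt{\log(2/\delta^2)}/\sqrt{n}$.

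\textbf{The constant $\gamma$.} For two-qudit $\rho$ with local dimension $q$, $-\log q\leq H_{\min}\leq H_{\max}\leq \log q$. Hence $\gamma \leq 2^{\frac12\log q}+2^{-\frac12\log q}+1\leq 2\sqrt{q}+1$, which gives the stated bound.

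\textbf{Main obstacle.} The delicate step is the upper bound. $H_2$ sits between $H_{\min}$ and $H$ in the non-smooth Rényi hierarchy, but comparing the smoothed $H_2^\delta$ with $H_{\max}^\delta$ requires care with the smoothing and the conditioning state, since $H_2$ here conditions on the marginal $\rho_E^{-1/4}$ rather than an optimized $\sigma_E$. My first attempt would be to invoke the Rényi-$\alpha$ route. For any $\hat\rho$ in the ball, $H_2(S|E)_{\hat\rho}\leq \tilde H_{1/2}$-type quantities, and these are bounded by $H_{\max}$ of $\hat\rho$. I would then use the continuity of $H_{\max}$ under purified-distance perturbations, as in Tomamichel's thesis, to transfer the bound back to $\rho^{\otimes n}$ with an adjusted smoothing parameter, absorbing the adjustment into the stated constants as in~\cite{Liu2026AQEC}.
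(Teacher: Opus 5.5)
The paper gives no proof of this lemma. It imports it from the Tomamichel--Colbeck--Renner (TCR) AEP and \cite{Liu2026AQEC}, and only uses the lower bound (for $H_2^\delta(\hat{\mathcal N}^{\otimes n})$ in the resilient phase and $H_2^\delta(\mathcal N^{\otimes n})$ in the intermediate phase). Your plan of sandwiching $H_2^\delta$ against $H_{\min}^\delta$ and then invoking TCR is the natural reconstruction. However, your lower-bound chain contains a reversed inequality, and your upper-bound sketch would not work.

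\textbf{Lower bound.} The min-entropy optimized over the conditioning state is a supremum, $H_{\min}(S|E)_{\hat\rho}=\sup_\sigma H_{\min}(S|E)_{\hat\rho|\sigma}$. Hence $H_{\min}(S|E)_{\hat\rho|\hat\rho}\le H_{\min}(S|E)_{\hat\rho}$, not $\ge$, and the gap can be large: take a classical $E$ with a rare outcome on which $S$ is pure. TCR's bound controls the optimized quantity (their proof fixes $\sigma_E=\rho_E^{\otimes n}$, the marginal of the \emph{unsmoothed} state). So routing through $H_{\min}(\cdot|\hat\rho_E)$ breaks the chain. The fix is to compare $H_2$ with the optimized min-entropy directly. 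If $\hat\rho_{SE}\le 2^{-\lambda}\,\id_S\otimes\sigma_E$ for a state $\sigma_E$, then, with powers of $\hat\rho_E$ taken on its support and tensored with $\id_S$,
\[
\tr\big[(\hat\rho_E^{-1/4}\hat\rho_{SE}\hat\rho_E^{-1/4})^2\big]=\tr\big[\hat\rho_{SE}^{1/2}\hat\rho_E^{-1/2}\hat\rho_{SE}\hat\rho_E^{-1/2}\hat\rho_{SE}^{1/2}\big]\le 2^{-\lambda}\tr\big[\hat\rho_E\hat\rho_E^{-1/2}\sigma_E\hat\rho_E^{-1/2}\big]=2^{-\lambda}\tr\big[\Pi_{\hat\rho_E}\sigma_E\big]\le 2^{-\lambda}.
\]
So $H_2(S|E)_{\hat\rho}\ge H_{\min}(S|E)_{\hat\rho}$ for every $\hat\rho$. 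Taking the supremum over the ball gives $H_2^\delta\ge H_{\min}^\delta$, up to an $O(1)$ loss if TCR's subnormalized optimizer must be completed to a normalized state. TCR then yields the stated lower bound. One small slip: TCR's constant is $\gamma\le 2^{-\frac12 H_{\min}(S|E)_\rho}+2^{\frac12 H_{\max}(S|E)_\rho}+1$, so the sign of your first exponent is flipped. The final $2\sqrt q+1$ is still right, since $-\log q\le H_{\min}$ and $H_{\max}\le\log q$.

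\textbf{Upper bound.} Bounding $H_2(S|E)_{\hat\rho}\le H_{\max}(S|E)_{\hat\rho}$ for each $\hat\rho$ in the ball and then transferring back by continuity cannot give $O(\sqrt n)$, nor even the leading term. By joint concavity of the fidelity, the ball contains $\hat\rho=(1-\frac{\delta^2}{2})\rho^{\otimes n}+\frac{\delta^2}{2}\frac{\id}{D}$. For this state, $\hat\rho\ge\frac{\delta^2}{2D}\id$ forces $H_{\max}(S|E)_{\hat\rho}\ge n\log q-\log\frac{2}{\delta^2}$. In addition, continuity bounds at fixed distance $\delta$ carry errors of order $\delta n\log q$.

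What you need is to bound $H_2(\hat\rho)$ by a quantity that is itself sup-smoothed. First, $H_2(S|E)_{\hat\rho}\le H_{\min}^{\epsilon}(S|E)_{\hat\rho}+\log\frac{2}{\epsilon^2}$; this is the $\alpha=2$ case of lower-bounding smooth min-entropy by sandwiched R\'enyi entropies of order $\alpha>1$, together with $H_2(\cdot|\hat\rho_E)\le\sup_\sigma H_2(\cdot|\sigma)$. Second, the purified-distance triangle inequality gives $H_2^\delta(S|E)_{\rho^{\otimes n}}\le H_{\min}^{\delta+\epsilon}(S|E)_{\rho^{\otimes n}}+\log\frac{2}{\epsilon^2}$. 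Third, apply a converse bound on smooth min-entropy: either the standard comparison with $H_{\max}^{\epsilon'}$ plus TCR's $H_{\max}$ AEP, or your R\'enyi-$\alpha<1$ route. Both need $\delta+\epsilon$ bounded away from $1$.

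This gives $\frac1n H_2^\delta\le H(S|E)_\rho+O(1/\sqrt n)$, but with a shifted smoothing parameter and an extra $O(\frac1n\log\frac1\epsilon)$ term, not literally the stated constants. That discrepancy is harmless for the paper, which never uses the upper bound.
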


One important result concerning mutual information is Fano's inequality. For a communication protocol where Alice sends states to Bob, and Bob performs a measurement to get the result $Y$ and predict $X$ as $\Bar{X}$. We have that
\begin{equation}
H(X) - I(X:\bar{X}) \leq h(p_e) + p_e \log(|X|-1),
\end{equation}
where $p_e = \Pr(X \neq \bar{X})$ is the error probability. For a uniform prior $p_x$, this yields the lower bound $I(X:\bar{X}) \geq h(p_{\text{succ}}) + p_{\text{succ}} \log(|X|-1)$ where $p_{\text{succ}} = 1 - p_e$. Consequently, in the communication protocol, to achieve a constant success probability, the mutual information $I(X:\bar{X})$ must scale as $\log(|X|)$. Conversely, if $I(X:\bar{X})$ is exponentially small, the success probability will also be exponentially small. Therefore, investigating the scaling of the mutual information provides an accurate characterization of the success probability or the performance of the communication protocol.



According to Stinespring dilation, a channel $\mathcal{N}_{S}$ can be modeled by interacting the physical system $S$ with the environment $E$ through a unitary $U$, followed by tracing out the environment:
\begin{equation}
\mathcal{N}_S(\rho_S) = \tr_E(U_{SE}\rho_S\otimes \ketbra{0}_EU_{SE}^{\dagger}).
\end{equation}
If we trace out the system instead of the environment, we obtain the complementary channel of $\mathcal{N}_S$~\cite{Beny2010AQEC}:
\begin{equation}
\hat{\mathcal{N}}_{S\rightarrow E} = \tr_S(U_{SE}\rho_S\otimes \ketbra{0}_EU_{SE}^{\dagger}).
\end{equation}
Note that the complementary channel of $\mathcal{N}_S\circ \mathcal{U}_S$ with $\mathcal{U}_S$ a unitary operation can be obtained by
\begin{equation}
\widehat{\mathcal{N}\circ \mathcal{U}}_{S\rightarrow E} = \hat{\mathcal{N}}_{S\rightarrow E}\circ\mathcal{U}_{S}.
\end{equation}
Given the dual relationship between the complementary channel $\hat{\mathcal{N}}_{S\rightarrow E}$ and the channel $\mathcal{N}_S$, their conditional von Neumann entropy is opposite~\cite{Tomamichel2014renyi}:
\begin{equation}
H(S|E)_{\mathcal{N}_{S'\rightarrow E}(\ketbra{\Phi}_{SS'})} = - H(S|S')_{\mathcal{N}_{S'}(\ketbra{\Phi}_{SS'})}, 
\end{equation}
and similarly, we have the relation
\begin{equation}\label{eq:dualcollisionentropy}
H_2(S|E)_{\hat{\mathcal{N}}_{S'\rightarrow E}(\ketbra{\Phi}_{SS'})} = -\log \tr_{S'}\left( \left(\tr_S\left(\sqrt{\mathcal{N}_{S'}(\ketbra{\Phi}_{SS'})}\right)\right)^2\right).
\end{equation}

\subsection{Random operations and states, twirling, and design}
In this part, we introduce the concept of twirling and the design associated with random unitary operations. The twirling operation is defined over an ensemble of quantum unitary operations, $\mathfrak{S}$. We consider the following $t$-th order twirling operation over $\mathfrak{S}$:
\begin{equation}
\Phi^t_{\mathfrak{S}}(O) = \mathbb{E}_{U\sim\mathfrak{S}} U^{\otimes t}O U^{\dagger\otimes t}.
\end{equation}
where $U$ is drawn from the ensemble $\mathfrak{S}$; $O$ is the matrix being twirled, and its support may overpass $U^{\otimes t}$. The expectation is taken over the random unitary operations from $\mathfrak{S}$. When the ensemble $\mathfrak{S}$ corresponds to the Haar random unitaries $\mathfrak{U}_{H,n}$ on $n$ qubits, the first-order twirling operation over $\mathfrak{U}_{H,n}$ can be obtained from the Schur-Weyl duality \cite{fulton2004Representation}:
\begin{equation}\label{eq:firsttwirling}
\mathbb{E}_{U\sim \mathfrak{U}_{H,n}} U O U^{\dagger} = \tr_U(O)\frac{\id_{d}}{d},
\end{equation}
where $\tr_U$ denotes the trace over the subsystem being twirled, $d = \dim U$ is the dimension of Hilbert space that the unitary acts on, and $\id_{d}$ is the identity operator with dimension $d$. The second-order twirling operation over $\mathfrak{U}_{H,n}$ is
\begin{equation}\label{eq:secondtwirling}
\begin{split}
\mathbb{E}_{U\sim \mathfrak{U}_{H,n}} U^{\otimes 2} O U^{\dagger \otimes 2} = \tr_{U^{\otimes 2}}(O \frac{\id_{d^2}-d^{-1} S}{d^2-1})\id_{d^2}+\tr_{U^{\otimes 2}}(O \frac{S-d^{-1} \id_{d^2}}{d^2-1})S,
\end{split}
\end{equation}
with $\tr_{U^{\otimes 2}}$ denoting the trace over the subsystem of being twirled, $\id_{d^2}$ is the identity operator on the $d^2$-dimension subsystem, and $S$ denotes the SWAP operator on the $d^2$-dimension subsystem.

Though the Haar random unitary ensemble possesses good twirling properties, it requires exponential resources to implement in practice. An approximate version of the Haar random unitary ensemble is usually used for practical applications. An ensemble $\mathfrak{S}$ is called a unitary $t$-design if $\Phi^t_{\mathfrak{S}} = \Phi^t_{\mathfrak{U}_{H,n}}$. It was shown that the $n$-qubit Clifford group is an exact unitary 3-design~\cite{Webb2016Clifford3design,zhu2017threedesign}, which is automatically a unitary 1-design and 2-design. The 2-design ensemble have a good decoupling property as shown below.

\begin{lemma}[Decoupling theorem for unitary 2-design~\cite{Dupuis2014Decoupling,Liu2026AQEC}]\label{lemma:decouple2design} Suppose $U$ is drawn randomly from a unitary 2-design, then
\begin{equation}
\text{(Non-smooth version) } \mathbb{E}_{U}\Vert \mathcal{N}_{S\rightarrow E}(U_S\rho_{SR}U_{S}^{\dagger}) - \mathbb{E}_{U}\mathcal{N}_{S\rightarrow E}(U_S\rho_{S} U_S^{\dagger})\otimes \rho_R \Vert_1
\leq 2^{-\frac{1}{2}H_2(S|R)_{\rho_{SR}}-\frac{1}{2}H_2(S|E)_{\mathcal{N}_{S\rightarrow E}}},
\end{equation}
and
\begin{equation}
\text{(Smooth version) } \mathbb{E}_{U}\Vert \mathcal{N}_{S\rightarrow E}(U_S\rho_{SR}U_{S}^{\dagger}) - \mathbb{E}_{U}\mathcal{N}_{S\rightarrow E}(U_S\rho_{S} U_S^{\dagger})\otimes \rho_R \Vert_1
\leq 2^{-\frac{1}{2}H_2(S|R)_{\rho_{SR}}-\frac{1}{2}H^{\delta}_2(S|E)_{\mathcal{N}_{S\rightarrow E}}}+4\delta.
\end{equation}
\end{lemma}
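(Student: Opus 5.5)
The plan follows Dupuis' one-shot decoupling argument: convert the trace norm into a weighted Hilbert--Schmidt norm, average that norm exactly using the second-moment formula \eqref{eq:secondtwirling}, and read off the two collision entropies. Then the smooth version follows by a perturbation argument applied to the Choi state of the channel.

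\emph{Step 1 (non-smooth bound).} Write $X_U=\mathcal{N}_{S\to E}(U_S\rho_{SR}U_S^\dagger)-\mathbb{E}_U\mathcal{N}_{S\to E}(U_S\rho_S U_S^\dagger)\otimes\rho_R$. Pick positive weights $\sigma_E$ and $\zeta_R$; I would take $\sigma_E$ to be the $E$-marginal of the Choi state $\tau_{\mathcal{N}}$ and $\zeta_R=\rho_R$, or the optimizers in the definition of $H_2$. The weighted Cauchy--Schwarz (Hölder) inequality gives
\begin{equation*}
\Vert X\Vert_1\le \sqrt{\tr(\sigma_E\otimes\zeta_R)}\,\Vert(\sigma_E\otimes\zeta_R)^{-1/4}X(\sigma_E\otimes\zeta_R)^{-1/4}\Vert_2 .
\end{equation*}
Jensen's inequality then gives $\mathbb{E}_U\Vert\tilde X_U\Vert_2\le\sqrt{\mathbb{E}_U\Vert\tilde X_U\Vert_2^2}$, where the tilde denotes the weighted operator. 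I expand $\Vert\tilde X_U\Vert_2^2=\tr(\tilde X_U^{\otimes 2}S)$. The cross terms collapse, because the mean term is exactly the first-moment twirl \eqref{eq:firsttwirling}. What remains is the expectation of $\tr[(\tilde{\mathcal{N}}^{\otimes 2}\circ\mathrm{Ad}_{U^{\otimes 2}})(\tilde\rho_{SR}^{\otimes 2})\,S_{EE'}\otimes S_{RR'}]$ minus the square of the mean.

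Applying the 2-design formula, this expectation becomes a combination of $\tr(\tilde\rho_R^2)$ and $\tr(\tilde\rho_{SR}^2)$, with coefficients $\tr[\tilde{\mathcal{N}}^{\otimes2}(\id)S_E]$ and $\tr[\tilde{\mathcal{N}}^{\otimes 2}(S_S)S_E]$. The identity-component piece, multiplied by $\tr(\tilde\rho_R^2)$, cancels against the subtracted mean up to a nonpositive remainder. What is left is bounded by the product
\begin{equation*}
\tfrac{d^2}{d^2-1}\Bigl(\tr\tilde\rho_{SR}^2-\tfrac1d\tr\tilde\rho_R^2\Bigr)\Bigl(\tr\tilde\tau_{SE}^2-\tfrac1d\tr\tilde\tau_E^2\Bigr)\le \tr\tilde\rho_{SR}^2\,\tr\tilde\tau_{SE}^2 ,
\end{equation*}
after translating $\mathcal{N}^{\otimes2}(S_S)$ into $d^2\,\tr_{SS'}[\tau_{\mathcal{N}}^{\otimes 2}(S\otimes S)]$-type expressions through the Choi isomorphism. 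With the chosen weights these two factors are $2^{-H_2(S|R)_\rho}$ and $2^{-H_2(S|E)_{\mathcal{N}}}$. Taking the square root gives the non-smooth bound. The bookkeeping of factors of $d$ coming from the Choi normalization must cancel exactly, and I would check this carefully.

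\emph{Step 2 (smoothing).} Choose $\hat\tau_{SE}$ with $P(\hat\tau,\tau_{\mathcal{N}})\le\delta$ that attains $H_2^\delta(S|E)_{\mathcal{N}}$. Let $\mathcal{N}'$ be the corresponding completely positive, possibly non-trace-preserving, map. Step 1 never used trace preservation, so it applies verbatim to $\mathcal{N}'$. Next I apply the triangle inequality to $X_U$, once for the $U$-dependent term and once for the averaged term. It then suffices to show
\begin{equation*}
\mathbb{E}_U\Vert(\mathcal{N}-\mathcal{N}')(U_S\rho_{SR}U_S^\dagger)\Vert_1\le\Vert\tau_{\mathcal{N}}-\hat\tau\Vert_1\le 2P\le 2\delta .
\end{equation*}
The same bound holds for the averaged term, by convexity. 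Together these give the additive $4\delta$.

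\emph{Main obstacle.} The hard point is this averaged comparison in Step 2. Closeness of the Choi states does not imply diamond-norm closeness of the channels, so the random twirl must be used. My first attempt is to write $\rho_{SR}$ as $(\id_S\otimes\mathcal{M}_{S'\to R})$ applied to a state obtained from $\Phi_{SS'}$ by a filter on $S$. Averaging $U$ over a 1-design makes the effective $S$-input maximally mixed, so the trace norm can be pushed through $\mathcal{M}$ by monotonicity and compared directly to $\Vert\tau_{\mathcal{N}}-\hat\tau\Vert_1$. If the filter prevents a clean reduction, I would instead smooth the joint output in the weighted norm, as in Dupuis' original proof, at the cost of tracking the constant.
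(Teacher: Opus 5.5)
Your proposal is correct and follows the standard Dupuis--Berta--Wullschleger--Renner argument; the paper gives no proof of its own and imports the lemma from that source by citation. In Step~1 the twirled variance is in fact exactly $\frac{d^2}{d^2-1}\bigl(\tr\tilde\rho_{SR}^2-\frac1d\tr\tilde\rho_R^2\bigr)\bigl(\tr\tilde\tau_{SE}^2-\frac1d\tr\tilde\tau_E^2\bigr)$. Your final inequality then follows from $\tr\omega_{SR}^2\le d\,\tr\omega_R^2$ for any $\omega_{SR}\geq 0$, so no factors of $d$ are lost.

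The one step to state carefully is your first attempt in Step~2. You cannot average inside the trace norm, because convexity only gives a lower bound there. Instead, set $\Delta=\tau_{\mathcal{N}}-\hat\tau$ and $G_U=(U\sqrt{d\rho_S})^{T}$ acting on the reference copy of $S$. First linearize, $\Vert G_U\Delta G_U^\dagger\Vert_1\le\tr[(G_U^\dagger G_U\otimes\id_E)\,|\Delta|]$, and only then average. Since $G_U^\dagger G_U=d\,\bar U\rho_S^{T}\bar U^\dagger$, a 1-design already gives $\mathbb{E}_U G_U^\dagger G_U=\id$. Hence $\mathbb{E}_U\Vert(\mathcal{N}-\mathcal{N}')(U_S\rho_{SR}U_S^\dagger)\Vert_1\le\Vert\Delta\Vert_1\le 2\delta$, which yields the $4\delta$ without needing your fallback.
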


Similar to the concept of Haar-random unitary operations and unitary $t$-design, one can introduce $n$-qubit Haar-random states, $\mathfrak{S}_{H,n} = \{U\ketbra{0^n}U^{\dagger}, U\in\mathfrak{U}_{H,n}\}$, and state $t$-design, $\mathfrak{S}$ such that $\mathbb{E}_{\psi\in \mathfrak{S}}\ketbra{\psi}^{\otimes t} = \mathbb{E}_{\psi\in \mathfrak{S}_{H,n}}\ketbra{\psi}^{\otimes t}$. For Haar-random states, we have Levy's lemma~\cite{ledoux2001concentration} stating that the continuous function of a Haar-random state is close to its average over the Haar measure.

\begin{lemma}[Levy's lemma]\label{Lemma:Levy}
Let $\psi$ be an $n$-qubit Haar-random state. For any function $f$ that is $\eta$-Lipschitz relative to the Euclidean norm, or $\abs{f(\psi)-f(\phi)}\leq \eta\Vert \ket{\psi}-\ket{\phi}\Vert_2$, the probability that $f(\psi)$ deviates from its expectation is bounded by:
\begin{equation}
    \mathrm{Pr}_{\psi} \left( \abs{f(\psi) - \mathbb{E}_{\psi}[f(\psi)]} \geq \epsilon \right) \leq 2 e^{- \frac{d \epsilon^2}{C \eta^2}},
\end{equation}
where $d = 2^n$ is the Hilbert space dimension and $C = 9\pi^3/4$ is a geometric constant for the state manifold.
\end{lemma}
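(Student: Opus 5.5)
We identify an $n$-qubit pure state with a unit vector $\ket{\psi}\in\mathbb{C}^d\cong\mathbb{R}^{2d}$. The first step is to check that the Haar-random state $U\ket{0^n}$, $U\sim\mathfrak{U}_{H,n}$, is uniformly distributed on the real sphere $S^{2d-1}$ with its normalized surface measure $\sigma$. This holds because the unitary group acts transitively on $S^{2d-1}$ and preserves $\sigma$, and the invariant probability measure on $S^{2d-1}$ is unique. Next, define $g(x)=f(\psi_x)$ for $x\in S^{2d-1}$. The Lipschitz hypothesis on $f$ is stated for the Euclidean distance $\Vert\ket{\psi}-\ket{\phi}\Vert_2$, which equals the Euclidean distance $\Vert x-y\Vert_2$ in $\mathbb{R}^{2d}$. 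Hence $g$ is $\eta$-Lipschitz on the sphere. Global phases cause no trouble: $g$ is simply constant along phase orbits. The claim is therefore a concentration inequality for Lipschitz functions on $S^{m-1}$ with $m=2d$. By rescaling we may take $\eta=1$.

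To prove concentration around the mean with an explicit constant, I would follow the Maurey--Pisier Gaussian route rather than the isoperimetric route. The isoperimetric route naturally gives concentration around the median, and converting to the mean costs an extra constant. The steps are as follows.

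\emph{(i)} Let $G$ be a standard Gaussian vector in $\mathbb{R}^m$ and write $G=\Vert G\Vert\,\theta$. Here $\theta$ is uniform on $S^{m-1}$ and independent of $\Vert G\Vert$.

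\emph{(ii)} Extend $g$ homogeneously by $F(x)=\Vert x\Vert\, g(x/\Vert x\Vert)$ and subtract the mean, i.e.\ work with $\tilde g=g-\mathbb{E}g$. One checks that the extension $F$ is $L$-Lipschitz on $\mathbb{R}^m$ with an absolute constant $L$ (a short triangle-inequality argument gives $L\le 3$).

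\emph{(iii)} Apply the Maurey--Pisier interpolation. Take an independent copy $G'$ and set $G_\alpha=G\sin\alpha+G'\cos\alpha$. Then
\begin{equation}
F(G)-F(G')=\int_0^{\pi/2}\langle\nabla F(G_\alpha),\tfrac{d}{d\alpha}G_\alpha\rangle\, d\alpha ,
\end{equation}
and for each $\alpha$ the pair $(G_\alpha,\tfrac{d}{d\alpha}G_\alpha)$ consists of independent standard Gaussians. Jensen's inequality applied to $e^{\lambda(\cdot)}$ then gives $\mathbb{E}e^{\lambda(F(G)-\mathbb{E}F)}\le e^{\pi^2\lambda^2L^2/8}$. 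Optimizing over $\lambda$ with Chernoff's bound yields the Gaussian tail $\Pr(|F(G)-\mathbb{E}F|\ge t)\le 2e^{-2t^2/(\pi^2L^2)}$.

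\emph{(iv)} Transfer this back to the sphere. The radius $\Vert G\Vert$ concentrates at $\sqrt{m}$ with $O(1)$ fluctuations, and it is independent of $\theta$. An event $\{|\tilde g(\theta)|\ge\epsilon\}$ therefore implies, with probability at least an absolute constant, that $|F(G)|\gtrsim \epsilon\sqrt{m}$. This gives $\Pr(|\tilde g(\theta)|\ge\epsilon)\le 2e^{-c\,m\epsilon^2}$ with $c$ of order $1/\pi^3$ once the factors are tracked. Substituting $m=2d$ and restoring $\eta$ gives the exponent $d\epsilon^2/(C\eta^2)$.

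Steps (i)--(iii) are standard. The main obstacle is step (iv): tracking constants carefully enough to reach exactly $C=9\pi^3/4$. This means choosing the threshold for $\Vert G\Vert$ so that the radial loss and the Lipschitz constant $L\le 3$, which contributes the factor $9$, combine with the $\pi^2/2$ from step (iii) to produce the stated value. For this bookkeeping I would follow the constant-tracking in Milman--Schechtman, as used in~\cite{ledoux2001concentration}.

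Since only the form $2e^{-\Omega(d\epsilon^2/\eta^2)}$ is needed in Corollary~\ref{coro:haarhighprob}, a fallback is available if the constants do not close. One can instead use L\'evy's isoperimetric inequality on $S^{2d-1}$, which gives concentration around the median $M_g$ with tail $2e^{-(m-2)\epsilon^2/2}$. Integrating this tail bounds $|\mathbb{E}g-M_g|=O(1/\sqrt{m})$, which shifts $\epsilon$ and degrades the constant to some other absolute value.
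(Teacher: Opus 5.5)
Your proof is correct. It also differs from the paper's treatment, because the paper gives no proof: the lemma is imported from Ledoux's monograph. There, mean-centred concentration on $S^{m-1}$ follows, for instance, from the Bakry--\'Emery log-Sobolev inequality (Ricci curvature $m-2$) plus Herbst's argument. Since a chordal-Lipschitz function is geodesic-Lipschitz, this gives $2e^{-(d-1)\epsilon^2/\eta^2}$ on $S^{2d-1}$, much stronger than the stated $C=9\pi^3/4$ for $d\ge 2$.

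Your Maurey--Pisier route gives a self-contained, elementary proof that is centred at the mean from the start, so no median shift is needed. The price is a weaker constant and the radial transfer in step (iv). Your isoperimetric fallback would also work, and it too reaches the stated $C$, since $(m-2)\epsilon^2/2$ is far larger than $4d\epsilon^2/(9\pi^3)$ even after the $O(1/\sqrt{m})$ median-to-mean shift.

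Step (iv) as written does not give prefactor $2$. Independence only yields $\Pr(|\tilde g(\theta)|\ge\epsilon)\,\Pr(\Vert G\Vert\ge r)\le\Pr(|F(G)|\ge\epsilon r)$, so you get a prefactor $2/\Pr(\Vert G\Vert\ge r)$. The slack in your exponent absorbs this, and it also removes the bookkeeping obstacle you flag:
\begin{itemize}
\item Take $r^2=2d-1$. This lies below the median of $\chi^2_{2d}$, which exceeds $2d-\tfrac{2}{3}$, so $\Pr(\Vert G\Vert\ge r)\ge\tfrac12$.
\item With $L=3$ you get $4\exp[-2(2d-1)\epsilon^2/(9\pi^2\eta^2)]$. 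For $d\ge 2$ this is at most $4e^{-3\pi y/4}$, where $y=4d\epsilon^2/(9\pi^3\eta^2)$ is the target exponent.
\item If $y\le\ln 2$, the claimed bound is $\ge 1$ and holds trivially.
\item If $y>\ln 2$, then $4e^{-3\pi y/4}=2e^{-y}\cdot 2e^{-(3\pi/4-1)y}<2e^{-y}\cdot 2^{2-3\pi/4}<2e^{-y}$.
\end{itemize}
So $C=9\pi^3/4$ follows with room to spare, and you do not need the Milman--Schechtman constant tracking.

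Two smaller points. First, $L\le 3$ needs the ordering $\Vert x\Vert\le\Vert y\Vert$, so that $\Vert x\Vert\,\Vert x/\Vert x\Vert-y/\Vert y\Vert\Vert\le\Vert x-y\Vert$, together with $\sup|\tilde g|\le 2$ at $\eta=1$. Second, $F$ is only Lipschitz, so run the interpolation of step (iii) on a mollification of $F$ (which has the same Lipschitz constant) and pass to the limit.
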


In this paper, we need to consider a function $f(\psi_1, \psi_2)$ involving two independently Haar-random states. In this case, we need to generalize the original Levy's lemma to the product space. The result is presented below.

\begin{lemma}[Levy's Lemma for product Haar measure]\label{Lemma:LevyProduct}
Let $\psi_1,\psi_2$ be two $n$-qubit Haar-random states. For any function $f(\psi_1, \psi_2)$ that is $\eta_1$-Lipschitz in $\ket{\psi_1}$ and $\eta_2$-Lipschitz in $\ket{\psi_2}$, or $\abs{f(\psi_1, \psi_2)-f(\phi_1, \phi_2)}\leq \eta_1\Vert \ket{\psi_1}-\ket{\phi_1}\Vert_2 + \eta_2\Vert \ket{\psi_2}-\ket{\phi_2}\Vert_2$, the probability that $f(\psi_1, \psi_2)$ deviates from its expectation is bounded by
\begin{equation}
\mathrm{Pr}_{\psi_1, \psi_2} \left( |f(\psi_1, \psi_2) - \mathbb{E}_{\psi_1, \psi_2}[f(\psi_1, \psi_2)]| \geq \epsilon \right) \leq 2( e^{ -\frac{d \epsilon^2}{4C \eta_1^2} } + e^{-\frac{d \epsilon^2}{4C \eta_2^2} } ).
\end{equation}
where $d = 2^n$ is the Hilbert space dimension and $C = 9\pi^3/4$ is a geometric constant for the state manifold.
\end{lemma}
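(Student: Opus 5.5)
The plan is to prove the product-space concentration bound by a union bound over two one-variable applications of Lemma~\ref{Lemma:Levy}. Write $\mu_1(\psi_1) = \mathbb{E}_{\psi_2} f(\psi_1,\psi_2)$ and $\mu = \mathbb{E}_{\psi_1,\psi_2} f$. Then split the deviation as
\begin{equation}
f(\psi_1,\psi_2)-\mu = \bigl(f(\psi_1,\psi_2)-\mu_1(\psi_1)\bigr) + \bigl(\mu_1(\psi_1)-\mu\bigr).
\end{equation}
If $|f-\mu|\geq\epsilon$, then at least one of the two brackets has absolute value at least $\epsilon/2$. The union bound therefore reduces the claim to bounding each event's probability by $2e^{-d\epsilon^2/(4C\eta_i^2)}$. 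This matches Lemma~\ref{Lemma:Levy} with $\epsilon$ replaced by $\epsilon/2$, since $(\epsilon/2)^2/(C\eta^2) = \epsilon^2/(4C\eta^2)$.

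\textbf{First term.} Condition on $\psi_1$. The map $\psi_2\mapsto f(\psi_1,\psi_2)$ is $\eta_2$-Lipschitz in $\ket{\psi_2}$, which follows by taking $\phi_1=\psi_1$ in the joint Lipschitz hypothesis. Its Haar mean over $\psi_2$ is exactly $\mu_1(\psi_1)$. Lemma~\ref{Lemma:Levy} then gives $\Pr_{\psi_2}(|f-\mu_1(\psi_1)|\geq\epsilon/2)\leq 2e^{-d\epsilon^2/(4C\eta_2^2)}$ for every fixed $\psi_1$. Averaging over $\psi_1$, which is independent of $\psi_2$ (Fubini), gives the same bound for the joint probability.

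\textbf{Second term.} Here I need to check that $\mu_1$ is $\eta_1$-Lipschitz. For this, I average the inequality $|f(\psi_1,\psi_2)-f(\phi_1,\psi_2)|\leq\eta_1\Vert\ket{\psi_1}-\ket{\phi_1}\Vert_2$ over $\psi_2$ and move the absolute value outside the expectation using Jensen's inequality. Since $\mathbb{E}_{\psi_1}\mu_1=\mu$, Lemma~\ref{Lemma:Levy} gives $2e^{-d\epsilon^2/(4C\eta_1^2)}$. Adding the two bounds yields the stated result.

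The main subtlety is measurability and well-definedness, since $f$ is written as a function of $\psi$ but the Lipschitz condition is stated on the vector $\ket{\psi}$. I would treat $f$ as a function on the unit sphere that is invariant under global phase. The Haar measure on states is then the pushforward of the uniform measure on the sphere, so Lemma~\ref{Lemma:Levy} applies directly in each variable. Beyond this the argument is routine and needs no further machinery.
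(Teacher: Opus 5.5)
Your proposal is correct and matches the paper's proof: triangle inequality around a conditional expectation, a union bound at $\epsilon/2$, Levy's lemma applied once to the one-variable slice and once to the averaged function, whose Lipschitz constant survives the averaging by the same triangle-inequality/Jensen step. The only difference is cosmetic. You average over $\psi_2$ and condition on $\psi_1$, whereas the paper defines $g(\psi_2)=\mathbb{E}_{\psi_1}f(\psi_1,\psi_2)$, so the two exponentials $e^{-d\epsilon^2/(4C\eta_i^2)}$ simply come from the opposite steps.
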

\begin{proof}[Proof of Lemma~\ref{Lemma:LevyProduct}]
Define the conditional expectation $g(\psi_2) = \mathbb{E}_{\psi_1}[f(\psi_1, \psi_2)]$, where the expectation is taken over $\psi_1$ with $\psi_2$ fixed. By the triangle inequality,
\begin{equation}
|f(\psi_1, \psi_2) - \mathbb{E}[f]| \leq |f(\psi_1, \psi_2) - g(\psi_2)| + |g(\psi_2) - \mathbb{E}[f]|,
\end{equation}
where $\mathbb{E}[f] = \mathbb{E}_{\psi_1, \psi_2}[f(\psi_1, \psi_2)]$. Therefore, the event $|f(\psi_1, \psi_2) - \mathbb{E}[f]| \geq \epsilon$ implies that at least one of $|f(\psi_1, \psi_2) - g(\psi_2)| \geq \epsilon/2$ or $|g(\psi_2) - \mathbb{E}[f]| \geq \epsilon/2$ occurs. By the union bound,
\begin{equation}
\Pr\left( |f(\psi_1, \psi_2) - \mathbb{E}[f]| \geq \epsilon \right) \leq \Pr\left( |f(\psi_1, \psi_2) - g(\psi_2)| \geq \epsilon/2 \right) + \Pr\left( |g(\psi_2) - \mathbb{E}[f]| \geq \epsilon/2 \right). \label{eq:union}
\end{equation}
We now bound each term separately. For any fixed $\psi_2$, consider the function $h_{\psi_2}(\psi_1) = f(\psi_1, \psi_2)$. From the Lipschitz condition in Lemma~\ref{Lemma:LevyProduct}, for any $\psi_1, \phi_1$ we have
\begin{equation}
|h_{\psi_2}(\psi_1) - h_{\psi_2}(\phi_1)| \leq \eta_1 \Vert \ket{\psi_1} - \ket{\phi_1}\Vert_2.
\end{equation}
Thus $h_{\psi_2}$ is $\eta_1$-Lipschitz in $\psi_1$. Applying Lemma~\ref{Lemma:Levy} to $h_{\psi_2}$ gives, for any fixed $\psi_2$,
\begin{equation}
\Pr_{\psi_1}\left( |f(\psi_1, \psi_2) - g(\psi_2)| \geq \epsilon/2 \right) \leq 2 \exp\left( -\frac{d (\epsilon/2)^2}{C \eta_1^2} \right) = 2 \exp\left( -\frac{d \epsilon^2}{4C \eta_1^2} \right).
\end{equation}
This bound holds for every $\psi_2$, so it remains valid when taking the probability over both $\psi_1$ and $\psi_2$. Hence
\begin{equation}\label{eq:first}
\Pr\left( |f(\psi_1, \psi_2) - g(\psi_2)| \geq \epsilon/2 \right)\leq 2 \exp\left( -\frac{d \epsilon^2}{4C \eta_1^2} \right).
\end{equation}
Meanwhile, $g(\psi_2)$ is $\eta_2$-Lipschitz. For any $\psi_2, \phi_2$,
\begin{equation}\label{eq:g_lipschitz}
\begin{split}
|g(\psi_2) - g(\phi_2)| &= \left| \mathbb{E}_{\psi_1}[f(\psi_1, \psi_2)] - \mathbb{E}_{\psi_1}[f(\psi_1, \phi_2)] \right| \\
&\leq \mathbb{E}_{\psi_1} \left[ |f(\psi_1, \psi_2) - f(\psi_1, \phi_2)| \right]   \\
&\leq \mathbb{E}_{\psi_1} \left[ \eta_2\Vert \ket{\psi_2}-\ket{\phi_2}\Vert_2 \right] = \eta_2\Vert \ket{\psi_2}-\ket{\phi_2}\Vert_2,
\end{split}
\end{equation}
Applying Lemma~\ref{Lemma:Levy} to $g(\psi_2)$ as a function of $\psi_2$ yields
\begin{equation}
\Pr_{\psi_2}\left( |g(\psi_2) - \mathbb{E}[f]| \geq \epsilon/2 \right) \leq 2 \exp\left( -\frac{d \epsilon^2}{4C \eta_2^2} \right). \label{eq:second}
\end{equation}
Thus,
\begin{equation}\label{eq:combined}
\Pr\left( |f(\psi_1, \psi_2) - \mathbb{E}[f]| \geq \epsilon \right) \leq 2 \left[ \exp\left( -\frac{d \epsilon^2}{4C \eta_1^2} \right) + \exp\left( -\frac{d \epsilon^2}{4C \eta_2^2} \right) \right].
\end{equation}
\end{proof}

\section{Bounds of average pairwise trace distance for noisy state ensembles}\label{appendsc:pairwise}
Here, we prove tight upper and lower bounds for the average pairwise distance of a state ensemble. After that, we evaluate the value of the average pairwise distance of a 2-design ensemble under noise.
\subsection{Bounds of average pairwise distance}\label{appendssc:pairwisebound}
Below, we show that the decoupling approach can provide both the upper bounds and lower bounds for the average pairwise distance, $\mathbb{E}_{\psi_1\in \mathfrak{S}_1, \psi_2\in \mathfrak{S}_2}\Vert \mathcal{N}(\psi_1) - \mathcal{N}(\psi_2)\Vert_1$. We first present a more general theorem than that in the main text as follows. Then we give the proof.

\begin{theorem}\label{theo:noisyhypo}
Suppose state set $\mathfrak{S}_i$ can be generated by unitary ensemble $\mathfrak{U}_i$, $\mathfrak{S}_i = \{U\ketbra{0^n}U^{\dagger}, U\in \mathfrak{U}_i\}$, then
\begin{equation}
\begin{split}
\mathbb{E}_{\psi_1\sim \mathfrak{S}_1, \psi_2\sim \mathfrak{S}_2}\Vert \mathcal{N}(\psi_1) - \mathcal{N}(\psi_2)\Vert_1 \leq \epsilon_{\mathcal{N}, \mathfrak{U}_1, \ketbra{0^n}} + \epsilon_{\mathcal{N}, \mathfrak{U}_2, \ketbra{0^n}}.
\end{split}
\end{equation}
Moreover, suppose there exists a unitary ensemble $\mathfrak{U}=\{p(U),U\}$ such that $\mathfrak{S}_i$ is invariant under any unitary operation $U$ from $\mathfrak{U}$, $\mathfrak{S}_i=\{p_i, \psi_i\}=\{p_i, U\psi_i U^{\dagger}\}$, then
\begin{equation}
\mathbb{E}_{\psi_1,\psi_2}\Vert\mathcal{N}(\ketbra{\psi_1})-\mathcal{N}(\ketbra{\psi_2})\Vert_1 \geq \mathbb{E}_{\psi_1,\psi_2}\Vert \ketbra{\psi_1} - \ketbra{\psi_2} \Vert_1 - 8\sqrt{\max_W \epsilon_{\hat{\mathcal{N}}, \mathfrak{U}_W, \ketbra{\Phi}\otimes \ketbra{0^{n-1}}}}.
\end{equation}
Here, we denote $\mathfrak{U}_W = \{UW, U\in \mathfrak{U}\}$ while $W$ is maximized over all unitary operations. Also,
\begin{equation}
\epsilon_{\mathcal{N}, \mathfrak{U}, \rho} \equiv \mathbb{E}_{U\in \mathfrak{U}}\Vert \mathcal{N}_{S\rightarrow E}(U_S\rho_{SR} U_S^{\dagger}) - \mathbb{E}_{U\in \mathfrak{U}}\mathcal{N}_{S\rightarrow E}(U_S\rho_{S} U_S^{\dagger})\otimes \rho_R\Vert_1
\end{equation}
is the decoupling error corresponding to channel $\mathcal{N}$, initial state $\rho$ and unitary ensemble $\mathfrak{U}$.
\end{theorem}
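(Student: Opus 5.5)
The upper bound follows from the triangle inequality around the average output state; the lower bound follows from viewing each pair $\psi_1,\psi_2$ as a logical qubit and applying approximate quantum error correction through the complementary channel.

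\textbf{Upper bound.} For each $i$ let $\sigma_i=\mathbb{E}_{U\sim\mathfrak{U}_i}\mathcal{N}(U\ketbra{0^n}U^\dagger)$. With a trivial reference system $R$, the decoupling error is exactly $\epsilon_{\mathcal{N},\mathfrak{U}_i,\ketbra{0^n}}=\mathbb{E}_{\psi_i}\Vert\mathcal{N}(\psi_i)-\sigma_i\Vert_1$. The triangle inequality then gives
\begin{equation}
\Vert\mathcal{N}(\psi_1)-\mathcal{N}(\psi_2)\Vert_1\le \Vert\mathcal{N}(\psi_1)-\sigma_1\Vert_1+\Vert\sigma_1-\sigma_2\Vert_1+\Vert\sigma_2-\mathcal{N}(\psi_2)\Vert_1 .
\end{equation}
Taking the average over both ensembles yields the claim, provided $\sigma_1=\sigma_2$. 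This holds when $\mathfrak{S}_1=\mathfrak{S}_2$ (the case of Theorem~\ref{thm:noisydistinguish}), or when both ensembles are state 1-designs, since then $\sigma_i=\mathcal{N}(\id/d)$. For genuinely different ensembles the statement as written needs this assumption, or else an extra term $\Vert\sigma_1-\sigma_2\Vert_1$ on the right-hand side. I would state the assumption explicitly rather than try to remove it.

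\textbf{Lower bound, Step 1: encode the pair as a qubit.} Fix a pair $\psi_1,\psi_2$. Their span is at most two-dimensional, so there is a unitary $W$ (depending on the pair) that maps $\ket{0}\ket{0^{n-1}}$ and $\ket{1}\ket{0^{n-1}}$ onto an orthonormal basis of $\mathrm{span}\{\ket{\psi_1},\ket{\psi_2}\}$. The isometry $V_W:\ket{a}\mapsto W\ket{a}\ket{0^{n-1}}$ then encodes one logical qubit, and $\psi_j=V_W\phi_jV_W^\dagger$ for qubit states $\phi_1,\phi_2$.

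\textbf{Lower bound, Step 2: average over the invariance group.} Because $\mathfrak{S}_1,\mathfrak{S}_2$ are invariant under $\mathfrak{U}$, the pair average equals the average of the same quantity evaluated on $(U\psi_1U^\dagger,U\psi_2U^\dagger)$ with $U\sim\mathfrak{U}$ and the pair then drawn from the ensembles. For the rotated pair the encoding isometry is $UV_W$, i.e.\ the unitary $UW$ from $\mathfrak{U}_W$ applied to $\ket{a}\ket{0^{n-1}}$.

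\textbf{Lower bound, Step 3: recovery from decoupling.} For a fixed $U$, write $\epsilon_U$ for the trace distance between the output of $\hat{\mathcal{N}}\circ(UW)$ on $\ketbra{\Phi}\otimes\ketbra{0^{n-1}}$ and the corresponding decoupled product state. I would then apply the Bény--Oreshkov complementary-channel criterion. A small $\epsilon_U$ means the complementary channel of $\mathcal{N}\circ UV_W$ is close to a constant channel on the maximally entangled input. This yields a recovery map $\mathcal{R}_U$ with
\begin{equation}
P\bigl(\mathcal{R}_U\circ\mathcal{N}\circ UV_W(\Phi),\Phi\bigr)\le c\sqrt{\epsilon_U}.
\end{equation}
Data processing then gives
\begin{equation}
\Vert\mathcal{N}(U\psi_1U^\dagger)-\mathcal{N}(U\psi_2U^\dagger)\Vert_1\ge\Vert\mathcal{R}_U\mathcal{N}(U\psi_1U^\dagger)-\mathcal{R}_U\mathcal{N}(U\psi_2U^\dagger)\Vert_1 .
\end{equation}
The right-hand side is at least $\Vert\psi_1-\psi_2\Vert_1$ minus twice the recovery error on the states $\phi_1,\phi_2$.

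\textbf{Lower bound, Step 4: assemble and average.} The main obstacle is converting the entanglement-fidelity recovery guarantee into a worst-case bound over arbitrary pure qubit inputs. I would first use the standard argument that any pure input equals $\sqrt{2}(\bra{v}^*\otimes\id)\ket{\Phi}$ up to normalization, so the worst-case trace distance is at most a factor of $d=2$ times the Choi-level error. Combining this with $\tfrac12\Vert\cdot\Vert_1\le P$ and the factor $2$ from the two states should produce the constant $8\sqrt{\epsilon_U}$. Finally I average over $U$ and apply Jensen's inequality, $\mathbb{E}_U\sqrt{\epsilon_U}\le\sqrt{\mathbb{E}_U\epsilon_U}$, where $\mathbb{E}_U\epsilon_U=\epsilon_{\hat{\mathcal{N}},\mathfrak{U}_W,\ketbra{\Phi}\otimes\ketbra{0^{n-1}}}$. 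Since $W$ depends on the pair, I bound this by the maximum over $W$ before averaging over the pair.
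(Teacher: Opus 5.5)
Your proposal is correct and follows the paper's proof essentially step for step. The upper bound triangulates around the common average output; the paper's proof also adds the assumption that both ensembles are 1-designs, so $\sigma_1=\sigma_2=\mathcal{N}(\id/2^n)$, which is exactly the caveat you flagged. The lower bound uses the same chain: a pair-dependent unitary $W$ placing $\psi_1,\psi_2$ in a logical qubit, averaging over the invariance ensemble, data processing through a decoder, the B\'eny--Oreshkov bound $P\le\sqrt{\Vert\cdot\Vert_1}$, and then Jensen followed by $\max_W$. Your direct Choi-to-worst-case conversion (a factor $2$ in trace norm, then $\tfrac12\Vert\cdot\Vert_1\le P$) is just the inequality $\Vert\mathcal{T}-\mathcal{I}\Vert_\diamond\le 2d\sqrt{1-F^2}$ with $d=2$ that the paper cites, and it gives the same constant $8$.
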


\subsubsection{Upper bound}
Employing the triangle inequality, one can obtain that
\begin{equation}
\begin{split}
\mathbb{E}_{\psi_1\in \mathfrak{S}_1, \psi_2\in \mathfrak{S}_2}\Vert \mathcal{N}(\psi_1) - \mathcal{N}(\psi_2)\Vert_1 &\leq \mathbb{E}_{\psi_1,\psi_2}\Vert\mathcal{N}(\psi_1)-\mathcal{N}(\frac{\id}{2^n})\Vert_1  + \mathbb{E}_{\psi_1,\psi_2}\Vert\mathcal{N}(\psi_2)-\mathcal{N}(\frac{\id}{2^n})\Vert_1\\
&= \mathbb{E}_{\psi_1}\Vert\mathcal{N}(\psi_1)-\mathcal{N}(\frac{\id}{2^n})\Vert_1  + \mathbb{E}_{\psi_2}\Vert\mathcal{N}(\psi_2)-\mathcal{N}(\frac{\id}{2^n})\Vert_1.
\end{split}
\end{equation}
This paper considers that the sets of quantum states satisfy a 1-design property. That is $\mathbb{E}_{\psi\in \mathfrak{S}}\ketbra{\psi} = \frac{\id}{2^n}$. Thus, $\forall i\in \{1,2\}$, $\mathbb{E}_{\psi_i}\Vert\mathcal{N}(\psi_i)-\mathcal{N}(\frac{\id}{2^n})\Vert_1$ is the variance of $\mathcal{N}(\psi_i)$ deviating from its average value, or the decoupling error of $\mathfrak{S}_i$ on channel $\mathcal{N}$. Bounding this decoupling error suffices to provide upper bounds of the original average pairwise distance. Particularly, when $\mathfrak{S}$ is generated by a unitary ensemble $\mathfrak{U}$, $\mathfrak{S} = \{U\ketbra{0^n} U^{\dagger}, U\in \mathfrak{U}\}$, we have that
\begin{equation}
\begin{split}
\mathbb{E}_{\psi}\Vert\mathcal{N}(\psi)-\mathcal{N}(\frac{\id}{2^n})\Vert_1 \leq \mathbb{E}_U\Vert\mathcal{N}(U\ketbra{0^n} U^{\dagger})-\mathcal{N}(\frac{\id}{2^n})\Vert_1 \equiv \epsilon_{\mathcal{N}, \mathfrak{U}, \ketbra{0^n}},
\end{split}
\end{equation}
where $\epsilon_{\mathcal{N}, \mathfrak{U}}$ is the decoupling error corresponding to channel $\mathcal{N}$, initial state $\ketbra{0^n}$ and unitary ensemble $\mathfrak{U}$.

\subsubsection{Lower bound}
We begin by changing the way of expectation. Note that $\mathfrak{S}_i=\{p_i, \psi_i\}=\{p_i, U\psi_i U^{\dagger}\}$. The average distance between two sets can be evaluated by
\begin{equation}
\mathbb{E}_{\psi_1,\psi_2}\Vert\mathcal{N}(\ketbra{\psi_1})-\mathcal{N}(\ketbra{\psi_2})\Vert_1 = \mathbb{E}_{\psi_1,\psi_2}\mathbb{E}_{U}
\Vert \mathcal{N}\left(U \ketbra{\psi_1} U^\dagger\right) - \mathcal{N}\left(U \ketbra{\psi_2} U^\dagger\right) \Vert_1.
\end{equation}
Here, gate $U$ is randomly sampled from $\mathfrak{U}$, and state $\psi_i$ is randomly sampled from $\mathfrak{S}_i$.

For any pair of states $\psi_1$ and $\psi_2$, we will show that
\begin{equation}
\mathbb{E}_{U}
\Vert \mathcal{N}\left(U \ketbra{\psi_1} U^\dagger\right) - \mathcal{N}\left(U \ketbra{\psi_2} U^\dagger\right) \Vert_1 \geq \Vert \ketbra{\psi_1} - \ketbra{\psi_2} \Vert_1 - 8\sqrt{\max_W \epsilon_{\hat{\mathcal{N}}, \mathfrak{U}_W, \ketbra{\Phi}\otimes \ketbra{0^{n-1}}}},
\end{equation}
where $\epsilon_{\hat{\mathcal{N}}, \mathfrak{U}_W, \ketbra{\Phi}\otimes \ketbra{0^{n-1}}}$ is the decoupling error corresponding to channel $\hat{\mathcal{N}}$, an initial state $\ketbra{\Phi}\otimes \ketbra{0^{n-1}}$ and unitary ensemble $\mathfrak{U}_W$ labled with a unitary $W$. The concrete expression is explained below.

Particularly, for any CPTP map $\mathcal{D}$ and unitary $V$, we have that
\begin{equation}
\begin{split}
&\Vert \mathcal{N}\left(U \ketbra{\psi_1} U^\dagger\right) - \mathcal{N}\left(U \ketbra{\psi_2} U^\dagger\right) \Vert_1\geq\\
&\Vert \mathcal{D}\circ\mathcal{N}\left(U \ketbra{\psi_1} U^\dagger\right) - \mathcal{D}\circ\mathcal{N}\left(U \ketbra{\psi_2} U^\dagger\right) \Vert_1\geq\\
&\Vert V\ketbra{\psi_1}V^{\dagger} - V\ketbra{\psi_2}V^{\dagger} \Vert_1 - \Vert \mathcal{D}\circ\mathcal{N}\left(U \ketbra{\psi_1} U^\dagger\right) - V\ketbra{\psi_1}V^{\dagger} \Vert_1 - \Vert\mathcal{D}\circ\mathcal{N}\left(U \ketbra{\psi_2} U^\dagger\right) - V\ketbra{\psi_2}V^{\dagger}\Vert_1=\\
&\Vert \ketbra{\psi_1} - \ketbra{\psi_2} \Vert_1 - \Vert \mathcal{D}\circ\mathcal{N}\left(U \ketbra{\psi_1} U^\dagger\right) - V\ketbra{\psi_1}V^{\dagger} \Vert_1 - \Vert\mathcal{D}\circ\mathcal{N}\left(U \ketbra{\psi_2} U^\dagger\right) - V\ketbra{\psi_2}V^{\dagger}\Vert_1.
\end{split}
\end{equation}

The remaining step is to upper bound
\begin{equation}
\mathbb{E}_{U} \min_{\mathcal{D}, V}\left(\Vert \mathcal{D}\circ\mathcal{N}\left(U \ketbra{\psi_1} U^\dagger\right) - V\ketbra{\psi_1}V^{\dagger} \Vert_1 + \Vert\mathcal{D}\circ\mathcal{N}\left(U \ketbra{\psi_2} U^\dagger\right) - V\ketbra{\psi_2}V^{\dagger}\Vert_1\right).
\end{equation}

To give an effective upper bound, we employ the following idea: we view the two given states $\ket{\psi_1}$ and $\ket{\psi_2}$ as elements of a two-dimensional logical subspace. Note that these two states are not necessarily orthogonal to each other.

To see that, we first fix a logical subspace
\begin{equation}
\mathcal{C} = \mathrm{span}\{\ket{0_L}\otimes\ket{0^{n-1}},\ \ket{1_L}\otimes\ket{0^{n-1}}\}.
\end{equation}
We show that there exists a unitary $W$ such that $W\ket{\psi_1}$ and $W\ket{\psi_2}$ both lie in $\mathcal{C}$.
This can be proved by a direct construction.

If $\ket{\psi_2}$ is proportional to $\ket{\psi_1}$, then $\mathrm{span}\{\ket{\psi_1},\ket{\psi_2}\}$ is one-dimensional.
We can choose any unitary $W$ satisfying $\ket{0_L}\otimes\ket{0^{n-1}}=W\ket{\psi_1}$,
and extend $W$ arbitrarily on the orthogonal complement.
Then $W\ket{\psi_1}, W\ket{\psi_2}\in\mathcal{C}$.

If $\ket{\psi_1}$ and $\ket{\psi_2}$ are linearly independent, apply Gram–Schmidt to obtain an orthonormal pair $\{\ket{u_0},\ket{u_1}\}$ such that
\begin{equation}
\mathrm{span}\{\ket{u_0},\ket{u_1}\} = \mathrm{span}\{\ket{\psi_1},\ket{\psi_2}\}.
\end{equation}
Explicitly,
\begin{equation}
\ket{u_0} = \ket{\psi_1}, \quad
\ket{u_1} = \frac{\ket{\psi_2} - \langle u_0|\psi_2\rangle \ket{u_0}}{\Vert\ket{\psi_2} - \langle u_0|\psi_2\rangle \ket{u_0}\Vert_2}.
\end{equation}

Let $\{\ket{e_0},\ket{e_1}\}$ be an orthonormal basis of $\mathcal{C}$, where
\begin{equation}
\ket{e_0} = \ket{0_L}\otimes\ket{0^{n-1}}, \quad
\ket{e_1} = \ket{1_L}\otimes\ket{0^{n-1}}.
\end{equation}
Extend $\{\ket{e_0},\ket{e_1}\}$ and $\{\ket{u_0},\ket{u_1}\}$ to full orthonormal bases of the Hilbert space.

Define $W$ by
\begin{equation}
W\ket{u_j} = \ket{e_j}, \quad j=0,\dots,2^n-1.
\end{equation}
Then $W$ is unitary, and $W\ket{\psi_1} = W\ket{u_0} = \ket{e_0}\in\mathcal{C}$.
Since $\ket{\psi_2} = a_{\psi}\ket{u_0} + b_{\psi}\ket{u_1}$ for some $a_{\psi},b_{\psi}$,
we have
\begin{equation}
W\ket{\psi_2} = (a_{\psi}\ket{0_L}+b_{\psi}\ket{1_L})\otimes\ket{0^{n-1}} \in \mathcal{C}.
\end{equation}
Therefore, there always exists a unitary $W$ such that
$W\ket{\psi_1}$ and $W\ket{\psi_2}$ both belong to $\mathcal{C}$. For simplicity, in the following, we use $\ket{\psi_L}$ to represent $a_{\psi}\ket{0_L}+b_{\psi}\ket{1_L}$.

By choosing $V$ as $W$, we get that
\begin{equation}
\begin{split}
&\min_{V}\left(\Vert \mathcal{D}\circ\mathcal{N}\left(U \ketbra{\psi_1} U^\dagger\right) - V\ketbra{\psi_1}V^{\dagger} \Vert_1 + \Vert\mathcal{D}\circ\mathcal{N}\left(U \ketbra{\psi_2} U^\dagger\right) - V\ketbra{\psi_2}V^{\dagger}\Vert_1\right)\leq\\
&\Vert \mathcal{D}\circ\mathcal{N}\left(UW^{\dagger} \ketbra{0_L}\otimes \ketbra{0^{n-1}} WU^\dagger\right) - \ketbra{0_L}\otimes \ketbra{0^{n-1}} \Vert_1 +\\
&\Vert \mathcal{D}\circ\mathcal{N}\left(UW^{\dagger} \ketbra{\psi_L}\otimes \ketbra{0^{n-1}} WU^\dagger\right) - \ketbra{\psi_L}\otimes \ketbra{0^{n-1}} \Vert_1.
\end{split}
\end{equation}

Denote $\mathfrak{U}_W = \{UW, U\in\mathfrak{U}\}$, our target turns to upper bound
\begin{equation}
\begin{split}
\mathbb{E}_{U\sim \mathfrak{U}_W} \min_{\mathcal{D}}&(\Vert \mathcal{D}\circ\mathcal{N}\left(U \ketbra{0_L}\otimes \ketbra{0^{n-1}} U^\dagger\right) - \ketbra{0_L}\otimes \ketbra{0^{n-1}} \Vert_1 +\\
&\Vert \mathcal{D}\circ\mathcal{N}\left(U \ketbra{\psi_L}\otimes \ketbra{0^{n-1}} U^\dagger\right) - \ketbra{\psi_L}\otimes \ketbra{0^{n-1}} \Vert_1).
\end{split}
\end{equation}
By viewing the CPTP map $\mathcal{D}$ as the decoding map associated with encoding unitary $U$, this is exactly the decoding error against noise $\mathcal{N}$ under the random encoding ensemble $\mathfrak{U}_W$ on a two-dimensional space. Clearly, the target value is upper-bounded by
\begin{equation}
2\mathbb{E}_{U\sim \mathfrak{U}_W} \min_{\mathcal{D}}\max_{\psi}(\Vert \mathcal{D}\circ\mathcal{N}\left(U \ketbra{\psi_L}\otimes \ketbra{0^{n-1}} U^\dagger\right) - \ketbra{\psi_L}\otimes \ketbra{0^{n-1}} \Vert_1).
\end{equation}
The term for expectation is the worst-case error over all input states within the logical system:
\begin{equation}
\begin{split}
&\min_{\mathcal{D}}\max_{\psi}(\Vert \mathcal{D}\circ\mathcal{N}\left(U \ketbra{\psi_L}\otimes \ketbra{0^{n-1}} U^\dagger\right) - \ketbra{\psi_L}\otimes \ketbra{0^{n-1}} \Vert_1)\\
\leq&\min_{\mathcal{D}_L}\max_{\psi}(\Vert \mathcal{D}_L\circ\mathcal{N}\left(U \ketbra{\psi_L}\otimes \ketbra{0^{n-1}} U^\dagger\right)\otimes \ketbra{0^{n-1}} - \ketbra{\psi_L}\otimes \ketbra{0^{n-1}} \Vert_1)\\
=&\min_{\mathcal{D}_L}\max_{\psi}(\Vert \mathcal{D}_L\circ\mathcal{N}\left(U \ketbra{\psi_L}\otimes \ketbra{0^{n-1}} U^\dagger\right) - \ketbra{\psi_L} \Vert_1)\\
\leq&
\min_{\mathcal{D}_L} \Vert \mathcal{D}_L\circ\mathcal{N}\circ\mathcal{E} - \mathcal{I}\Vert_{\diamond}.
\end{split}
\end{equation}
Here, $\mathcal{D}_L$ is a decoder that only output one qubit as the logical qubit, $\mathcal{E}(\rho) = U\rho\otimes \ketbra{0^{n-1}} U^{\dagger}$ is the encoding map, and $\mathcal{I}$ is the identity map on the density matrix of a two-dimensional Hilbert space. The second line is because decoders outputing a single qubit is only a subset of all possible decoders.

There is a relationship between diamond distance and entanglement fidelity~\cite{Wallman2014benchmarking}:
\begin{equation}
\Vert \mathcal{T} - \mathcal{I} \Vert_{\diamond}\leq 2d\sqrt{1-F^2(\mathcal{T})}
\end{equation}
where $F^2(\mathcal{T}) = \tr(\ketbra{\Phi}\mathcal{T}(\ketbra{\Phi}))$ is the entanglement fidelity of $\mathcal{T}$, and $\ket{\Phi}$ is the maximally entangled state. Since here $\mathcal{D}\circ\mathcal{N}\circ\mathcal{E}$ act on a two-dimensional system, we have that $d = 2$. Thus,
\begin{equation}
\min_{\mathcal{D}} \Vert \mathcal{D}\circ\mathcal{N}\circ\mathcal{E} - \mathcal{I}\Vert_{\diamond}\leq 4 \min_{\mathcal{D}} \sqrt{1-F^2(\mathcal{D}\circ\mathcal{N}\circ\mathcal{E})} = 4 \min_{\mathcal{D}} P(\ketbra{\Phi}, \mathcal{D}\circ\mathcal{N}\circ\mathcal{E}(\ketbra{\Phi})) = 4\epsilon_{\mathrm{Choi}},
\end{equation}
where $P(\rho, \sigma)$ is the purified distance, and by definition, the above quantity is equivalent to the Choi error $\epsilon_{\mathrm{Choi}}$ of encoding one logical qubit with $\mathcal{E}$ against noise $\mathcal{N}$. By the complementary channel approach~\cite{Beny2010AQEC}, this Choi error can be characterized by the complementary channel of the noise:
\begin{equation}
\begin{split}
\epsilon_{\mathrm{Choi}} &= \min_{\mathcal{D}} P(\ketbra{\Phi_{SR}}, (\mathcal{D}\circ\mathcal{N}\circ\mathcal{E})_S(\ketbra{\Phi_{SR}}))\\
&= \min_{\mathcal{\xi}} P(\xi_E\otimes \frac{\id_R}{2}, \hat{\mathcal{N}}_{S\rightarrow E}\circ\mathcal{E}_S(\ketbra{\Phi_{SR}}))\\
&\leq \min_{\mathcal{\xi}}\sqrt{\Vert \hat{\mathcal{N}}_{S\rightarrow E}\circ\mathcal{E}_S(\ketbra{\Phi_{SR}}) - \xi_E\otimes \frac{\id_R}{2}\Vert_1}.
\end{split}
\end{equation}
Thus, our target quantity can be upper-bounded by
\begin{equation}
\begin{split}
&8\mathbb{E}_{U\sim \mathfrak{U}_W} \min_{\mathcal{\xi}}\sqrt{\Vert \hat{\mathcal{N}}_{S\rightarrow E}\circ\mathcal{E}_S(\ketbra{\Phi_{SR}}) - \xi_E\otimes \frac{\id_R}{2}\Vert_1}\\
\leq &8 \sqrt{\mathbb{E}_{U\sim \mathfrak{U}_W}\min_{\mathcal{\xi}}\Vert \hat{\mathcal{N}}_{S\rightarrow E}\circ\mathcal{E}_S(\ketbra{\Phi_{SR}}) - \xi_E\otimes \frac{\id_R}{2}\Vert_1}\\
\leq& 8\sqrt{ \mathbb{E}_{U\sim \mathfrak{U}_W}\Vert \hat{\mathcal{N}}_{S\rightarrow E}\circ\mathcal{E}_S(\ketbra{\Phi_{SR}}) - \hat{\mathcal{N}}_{S\rightarrow E}\circ\mathcal{E}_S(\frac{\id_S}{2})\otimes \frac{\id_R}{2}\Vert_1 }\\
\leq& 8\sqrt{\max_W \mathbb{E}_{U\sim \mathfrak{U}_W}\Vert \hat{\mathcal{N}}_{S\rightarrow E}\circ\mathcal{E}_S(\ketbra{\Phi_{SR}}) - \hat{\mathcal{N}}_{S\rightarrow E}\circ\mathcal{E}_S(\frac{\id_S}{2})\otimes \frac{\id_R}{2}\Vert_1}\\
=& 8\sqrt{\max_W \epsilon_{\hat{\mathcal{N}}, \mathfrak{U}_W, \ketbra{\Phi}\otimes \ketbra{0^{n-1}}}}.
\end{split}
\end{equation}
Here,
\begin{equation}
\epsilon_{\hat{\mathcal{N}}, \mathfrak{U}_W, \ketbra{\Phi}\otimes \ketbra{0^{n-1}}} \equiv \mathbb{E}_{U\sim \mathfrak{U}_W}\Vert \hat{\mathcal{N}}_{S\rightarrow E}\circ\mathcal{E}_S(\ketbra{\Phi_{SR}}) - \hat{\mathcal{N}}_{S\rightarrow E}\circ\mathcal{E}_S(\frac{\id_S}{2})\otimes \frac{\id_R}{2}\Vert_1
\end{equation}
is the decoupling error corresponding to channel $\hat{\mathcal{N}}$, initial state $\ketbra{\Phi}\otimes \ketbra{0^{n-1}}$ and unitary ensemble $\mathfrak{U}_W$. Note that the initial state only contains one EPR pair.

Thus, the final lower bound is given by
\begin{equation}
\mathbb{E}_{\psi_1,\psi_2}\Vert\mathcal{N}(\ketbra{\psi_1})-\mathcal{N}(\ketbra{\psi_2})\Vert_1 \geq \mathbb{E}_{\psi_1,\psi_2}\Vert \ketbra{\psi_1} - \ketbra{\psi_2} \Vert_1 - 8\sqrt{\max_W \epsilon_{\hat{\mathcal{N}}, \mathfrak{U}_W, \ketbra{\Phi}\otimes \ketbra{0^{n-1}}}}.
\end{equation}

\subsection{Average pairwise distance of two-design ensembles}
Here, we prove Theorem~\ref{thm:2design} where the two sets are invariant under the action of the same unitary 2-design ensemble. That is, $\mathfrak{S}_i=\{p_i, \psi_i\}=\{p(U), U\ketbra{0^n} U^{\dagger}\}$ where $\mathfrak{U} = \{p(U), U\}$ is a unitary 2-design. Obviously, in this case, $\mathfrak{S}_i$ is a state 2-design. Also, note that $\mathfrak{U}_W$ is always a unitary 2-design for arbitrary unitary $W$ provided that $\mathfrak{U}$ is a unitary 2-design. By applying the decoupling theorem of unitary 2-design, or Lemma~\ref{lemma:decouple2design}, to Theorem~\ref{theo:noisyhypo}, we get that
\begin{equation}\label{eq:2designupper}
\mathbb{E}_{\psi_1\in \mathfrak{S}_1, \psi_2\in \mathfrak{S}_2}\Vert \mathcal{N}(\psi_1) - \mathcal{N}(\psi_2)\Vert_1/2 \leq \epsilon_{\mathcal{N}, \mathfrak{U}, \ketbra{0^n}} \leq \begin{cases}
2^{-\frac{1}{2}H_2(S|S')_{\mathcal{N}}}, & \text{non-smooth}, \\
2^{-\frac{1}{2}H^{\delta}_2(S|S')_{\mathcal{N}}}+4\delta,  & \text{smooth}.
\end{cases}.
\end{equation}
\begin{equation}\label{eq:2designlower}
\begin{split}
\mathbb{E}_{\psi_1\in \mathfrak{S}_1, \psi_2\in \mathfrak{S}_2}\Vert \mathcal{N}(\psi_1) - \mathcal{N}(\psi_2)\Vert_1/2 &\geq \mathbb{E}_{\psi_1\in \mathfrak{S}_1, \psi_2\in \mathfrak{S}_2}\Vert \psi_1 - \psi_2\Vert_1 / 2 - 4 \sqrt{\max_W \epsilon_{\hat{\mathcal{N}}, \mathfrak{U}_W, \ketbra{\Phi}\otimes \ketbra{0^{n-1}}}}\\
&= \mathbb{E}_{\psi_1\in \mathfrak{S}_1, \psi_2\in \mathfrak{S}_2} \sqrt{1-\abs{\braket{\psi_1}{\psi_2}}^2} - 4 \sqrt{\max_W \epsilon_{\hat{\mathcal{N}}, \mathfrak{U}_W, \ketbra{\Phi}\otimes \ketbra{0^{n-1}}}}\\
&\geq \mathbb{E}_{\psi_1\in \mathfrak{S}_1, \psi_2\in \mathfrak{S}_2} (1-\abs{\braket{\psi_1}{\psi_2}}^2) - 4 \sqrt{\max_W \epsilon_{\hat{\mathcal{N}}, \mathfrak{U}_W, \ketbra{\Phi}\otimes \ketbra{0^{n-1}}}}\\
&\geq (1-\tr((\frac{\id}{2^n})^2)) - 4 \sqrt{\max_W \epsilon_{\hat{\mathcal{N}}, \mathfrak{U}_W, \ketbra{\Phi}\otimes \ketbra{0^{n-1}}}}\\
&\geq 1 - 2^{-n} - 4\times \begin{cases}
\sqrt{2^{\frac{1}{2}-\frac{1}{2}H_2(S|E)_{\hat{\mathcal{N}}}}}, & \text{non-smooth}, \\
\sqrt{2^{\frac{1}{2}-\frac{1}{2}H^{\delta}_2(S|E)_{\hat{\mathcal{N}}}}+4\delta},  & \text{smooth}.
\end{cases}.
\end{split}
\end{equation}
For \eqref{eq:2designlower}, the second line utilizes $\Vert \psi_1 - \psi_2 \Vert_1 = 2\sqrt{1- \abs{\braket{\psi_1}{\psi_2}}^2}$ for pure states. The fourth line utilizes the condition that $\mathfrak{S}_i$ is a state 2-design. The last lines of \eqref{eq:2designupper} and \eqref{eq:2designlower} utilize the results of the decoupling theorem. Here, $H_2(S|R)_{\ketbra{0^n}}=H_2(\ketbra{0^n})=0$ and $H_2(S|R)_{\ketbra{\Phi}_{sr}\otimes \ketbra{0^{n-1}}_{SR\backslash sr}}=H_2(s|r)_{\ketbra{\Phi}_{sr}}=-1$. Note that the upper bound \eqref{eq:2designupper} can also be derived as long as $\mathfrak{S}_i$ is a state 2-design without being generated by a unitary 2-design.

Based on Chebyshev's inequality $\Pr(\abs{X-\mathbb{E}[X]}\geq \epsilon)\leq \frac{\mathbb{V}[X]}{\epsilon^2}$, one can also obtain the high-probability version of the above result. Note that if $0\leq X\leq 1$, $\mathbb{V}[X] = \mathbb{E}[X^2]-(\mathbb{E}[X])^2\leq \mathbb{E}[X](1-\mathbb{E}[X])\leq \min(\mathbb{E}[X], 1-\mathbb{E}[X])$.

\begin{corollary}\label{coro:2designhighprob}
Suppose $\mathfrak{S}$ is generated by a unitary $2$-design ensemble, then
\begin{align}
&\Pr_{\psi_1,\psi_2\sim \mathfrak{S}}(\Vert\mathcal{N}(\psi_1)-\mathcal{N}(\psi_2)\Vert_1/2\geq \epsilon) \leq \begin{cases}
\frac{2^{-\frac{1}{2}H_2(\mathcal{N})}}{(\epsilon-2^{-\frac{1}{2}H_2(\mathcal{N})})^2}, \\
\frac{2^{-\frac{1}{2}H^{\delta}_2(\mathcal{N})}+4\delta}{(\epsilon-2^{-\frac{1}{2}H^{\delta}_2(\mathcal{N})}-4\delta)^2}.
\end{cases};\\
&\Pr_{\psi_1,\psi_2\sim \mathfrak{S}}(\Vert\mathcal{N}(\psi_1)-\mathcal{N}(\psi_2)\Vert_1 / 2\leq 1-\epsilon) \leq \begin{cases}
\frac{2^{-n}+2^{\frac{9}{4}-\frac{1}{4}H_2(\hat{\mathcal{N}})}}{(\epsilon-2^{-n}-2^{\frac{9}{4}-\frac{1}{4}H_2(\hat{\mathcal{N}})})^2}, \\
\frac{2^{-n}+4\sqrt{2^{\frac{1}{2}-\frac{1}{2}H_2^{\delta}(\hat{\mathcal{N}})} + 4\delta}}{(\epsilon-2^{-n}-4\sqrt{2^{\frac{1}{2}-\frac{1}{2}H_2^{\delta}(\hat{\mathcal{N}})} + 4\delta})^2}.
\end{cases}.
\end{align}
\end{corollary}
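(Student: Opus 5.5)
The plan is to apply Chebyshev's inequality to the random variable $X=\Vert\mathcal{N}(\psi_1)-\mathcal{N}(\psi_2)\Vert_1/2\in[0,1]$, with $\psi_1,\psi_2$ drawn independently from $\mathfrak{S}$. The mean of $X$ is already controlled by Theorem~\ref{thm:2design}, in the form \eqref{eq:2designupper} and \eqref{eq:2designlower}. The variance we bound crudely using boundedness: since $X^2\le X$, we get $\mathbb{V}[X]\le \mathbb{E}[X](1-\mathbb{E}[X])$. This gives both $\mathbb{V}[X]\le \mathbb{E}[X]$ and, applying the same argument to $1-X\in[0,1]$, $\mathbb{V}[X]\le 1-\mathbb{E}[X]$.

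For the first inequality, let $u$ be either of the upper bounds in \eqref{eq:2designupperbound}, namely $2^{-\frac12 H_2(\mathcal{N})}$ or $2^{-\frac12 H_2^\delta(\mathcal{N})}+4\delta$. Then $\mathbb{E}[X]\le u$, so $\mathbb{V}[X]\le u$. Assume $\epsilon>u$; otherwise the claimed bound is vacuous or meaningless. In that case the event $X\ge\epsilon$ implies $X-\mathbb{E}[X]\ge \epsilon-u>0$, hence $\abs{X-\mathbb{E}[X]}\ge\epsilon-u$. Chebyshev then gives $\Pr(X\ge\epsilon)\le \mathbb{V}[X]/(\epsilon-u)^2\le u/(\epsilon-u)^2$, which is the stated bound in both the non-smooth and smooth cases.

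For the second inequality, set $\ell=1-b$, where $b$ is $2^{-n}+2^{\frac94-\frac14 H_2(\hat{\mathcal{N}})}$ or $2^{-n}+4\sqrt{2^{\frac12-\frac12 H_2^\delta(\hat{\mathcal{N}})}+4\delta}$. By \eqref{eq:2designlowerbound} we have $\mathbb{E}[X]\ge 1-b$, so $\mathbb{V}[X]\le 1-\mathbb{E}[X]\le b$. For the non-smooth case we must check that $4\sqrt{2^{\frac12-\frac12 H_2}}=2^{2+\frac14-\frac14 H_2}=2^{\frac94-\frac14H_2}$, which holds. Assuming $\epsilon>b$, the event $X\le 1-\epsilon$ implies $\mathbb{E}[X]-X\ge (1-b)-(1-\epsilon)=\epsilon-b$. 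Chebyshev then yields $\Pr(X\le 1-\epsilon)\le b/(\epsilon-b)^2$.

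No step here is a real obstacle. The only points needing care are the variance bound for a $[0,1]$-valued variable, and noting that the statement is only meaningful when $\epsilon$ exceeds the respective mean bound; when it does not, the right-hand side should be read as trivial.
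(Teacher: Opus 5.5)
Your proposal is correct and is essentially the paper's argument: Chebyshev's inequality together with the variance bound $\mathbb{V}[X]\le \mathbb{E}[X](1-\mathbb{E}[X])\le\min(\mathbb{E}[X],1-\mathbb{E}[X])$ for $X\in[0,1]$, with the mean controlled by Theorem~\ref{thm:2design}. Your explicit restriction to $\epsilon$ strictly larger than the relevant mean bound is in fact necessary: when that bound exceeds $1$ (e.g.\ near-noiseless channels in the first inequality), the stated right-hand side can fall below $1$ for smaller $\epsilon$ without being a valid bound. The paper leaves this hypothesis implicit.
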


\section{Distinguishability of measured state ensembles and noise tolerance of information processing}
In this part, we first show that the average pairwise distance of a (measured) state ensemble provides both upper and lower bounds for the mutual information if we use this state ensemble as the encoding codebook or POVM. Combining with the result of the average pairwise distance, we show that the use of a 2-design for information processing is not robust to noise. There is no noise threshold for local purity-shrinking noise. At the end of this section, we present another proof that further strengthens the result. Notice that in the main text, the bound on the mutual information in Theorem~\ref{thm:pairwisedistance2mutualinformation} is a single-copy version. Here, we prove a more generic result of the $N$-copy version.

\subsection{Relation between mutual information and average pairwise distance}\label{appendssc:pairwisemulinf}
We first discuss the case where Alice uses a noisy state ensemble $\mathfrak{E} = \{p_x, \rho_x\}$ to encode the classical message $X$, sends one of the states to Bob, and Bob measures $\rho_X$ with a POVM $\mathfrak{F} = \{F_y\}$ to get a measurement outcome. One can view the sending state $\rho_x = \mathcal{N}(\psi_x)$ as the original noiseless state $\psi_x$ passing through the noise channel $\mathcal{N}$. They repeated this for $N$ rounds, and we denote the outcomes of Bob as $Y = Y^{(N)} = (Y_1, Y_2, \cdots, Y_N)$. The following derivations give the bounds of mutual information $I(X:Y)$ with the average pairwise distance of $\mathfrak{E}^F = \{p_x, \mathcal{F}(\rho_x)\}$, where Bob performs the measurement channel $\mathcal{F}$. One can optimize the measurement channel $\mathcal{F}$ to maximize the pairwise distance of $\mathfrak{E}^{\mathcal{F}}$. This optimized result provides the bounds of the accessible information of $\mathfrak{E}$. We later discuss on using the average pairwise distance of POVM $\mathfrak{F}$ to provide bounds by using the duality between states and POVM.
\subsubsection{Upper bound}
By the Holevo bound:
\begin{equation}
I(X: Y^{(N)}) \leq \chi(X^{\otimes N}) = S\left(\sum_{x} p_{x} \rho_{x}^{\otimes N}\right) - \sum_{x} p_{x} S\left(\rho_{x}^{\otimes N}\right),
\end{equation}
and for any $d$-dimensional quantum state $\rho,\sigma$, we have~\cite{audenaert2006sharp}
\begin{equation}
\abs{S(\rho)-S(\sigma)} \leq \frac{1}{2}\log(d-1)\Vert\rho-\sigma\Vert_1 + h_2\left(\frac{1}{2}\Vert\rho-\sigma\Vert_1\right),
\end{equation}
where $h_2$ is the binary entropy. Thus,
\begin{equation}
\begin{split}
I(X: Y^{(N)}) &\leq \frac{1}{2} \log (d-1) \sum_{x^{\prime}}p_{x^{\prime}}\Vert \sum_{x}p_x\rho_{x}^{\otimes N} - \rho_{x^{\prime}}^{\otimes N}\Vert_1 + \sum_{x^{\prime}}p_{x^{\prime}}h_2\left(\frac{1}{2}\Vert \sum_{x}p_x\rho_{x}^{\otimes N} - \rho_{x^{\prime}}^{\otimes N}\Vert_1\right)\\
&\leq \frac{1}{2} \log (d-1) \mathbb{E}_{x,x^{\prime}}\Vert \rho_{x}^{\otimes N} - \rho_{x^{\prime}}^{\otimes N} \Vert_1 + h_2\left(\frac{1}{2}\mathbb{E}_{x^{\prime}}\Vert \mathbb{E}_x\rho_{x}^{\otimes N} - \rho_{x^{\prime}}^{\otimes N} \Vert_1\right).\\
&\leq \frac{1}{2} \log (d-1) \mathbb{E}_{x,x^{\prime}}\Vert \rho_{x}^{\otimes N} - \rho_{x^{\prime}}^{\otimes N} \Vert_1 + h_2\left(\min\{ \frac{1}{2},\frac{1}{2}\mathbb{E}_{x^{\prime},x}\Vert\rho_{x}^{\otimes N} - \rho_{x^{\prime}}^{\otimes N} \Vert_1\}\right).
\end{split}
\end{equation}
The second inequality is due to the concavity of $h_2$. Note that
\begin{equation}
\Vert \rho_{x}^{\otimes N} - \rho_{x^{\prime}}^{\otimes N} \Vert_1 \leq \min\{2, \sum_{1\leq i\leq N} \Vert\rho^{\otimes (i-1)}(\rho-\sigma)\sigma^{\otimes(N-i)}\Vert_1\} \leq \min\{2, N\Vert\rho-\sigma\Vert_1\}.
\end{equation}
Thus,
\begin{equation}
I(X:Y^{(N)}) \leq \frac{1}{2} N \log (d-1) \mathbb{E}_{x,x^{\prime}}\Vert \rho_{x} - \rho_{x^{\prime}} \Vert_1 + h_2\left(\min\{\frac{1}{2}, \frac{1}{2}N\mathbb{E}_{x,x^{\prime}}\Vert \rho_{x} - \rho_{x^{\prime}}\Vert_1\}\right).
\end{equation}
One can replace $\log(d-1)$ with the qubit number $n$ without affecting the effectiveness of the bound.

Note that Bob will perform a POVM $\mathfrak{F} = \{F_y\}$ as long as he receives the sending state. Denote the measurement channel as $\mathcal{F}$ such that $\mathcal{F}(\rho) = \sum_y \tr(F_y\rho)\ketbra{y}$. Then the communication between Alice and Bob can be reduced to that Alice chooses states from ensemble $\mathfrak{E}^F = \{p_x, \mathcal{F}(\rho_x)\}$ instead of $\mathfrak{E} = \{p_x, \rho_x\}$, and Bob measures in the computational basis $\{\ketbra{y}\}$ instead of $\mathfrak{F}$. Thus, the upper bound can be further tightened to
\begin{equation}
I(X:Y^{(N)}) \leq \frac{1}{2} N \log (d-1) \mathbb{E}_{x,x^{\prime}}\Vert \mathcal{F}(\rho_x) - \mathcal{F}(\rho_{x^{\prime}}) \Vert_1 + h_2\left(\min\{\frac{1}{2}, \frac{1}{2}N\mathbb{E}_{x,x^{\prime}}\Vert \mathcal{F}(\rho_x) - \mathcal{F}(\rho_{x^{\prime}})\Vert_1\}\right).
\end{equation}
From the above equation, one can obtain that as long as the average pairwise distance of the measured state ensemble $\mathfrak{E}^F = \{p_x, \mathcal{F}(\rho_x)\}$ is exponentially small, even polynomial copies of states cannot deliver any classical message. For further elaboration, we explicitly write down the form of $\mathfrak{E}^F$: $\mathfrak{E}^F = \{p_x, \sum_y \tr(\rho_x F_y)\ketbra{y}\}$.

\subsubsection{Lower bound}
We first provide the single-copy bound where $N=1$. Denote the joint probability distribution of input and output as $P_{XY}$. We have
\begin{equation}
I(X:Y)= D\left(P_{XY} \Vert P_X \otimes P_Y\right) = \mathbb{E}_x D\left(P_{Y|x} \Vert P_Y\right).
\end{equation}
By Pinsker's inequality,
\begin{equation}
D\left(P_{Y|x} \Vert P_Y\right)\geq \frac{1}{2} \Vert P_{Y|x} - P_Y \Vert_1^2 .
\end{equation}

Therefore,
\begin{equation}
\begin{split}
I(X:Y)
&\geq \frac{1}{2} \mathbb{E}_x \Vert P_{Y|x} - P_Y \Vert_1^2 \\
&\geq \frac{1}{2}\left( \mathbb{E}_x \Vert P_{Y|x} - P_Y \Vert_1 \right)^2 \\
&= \frac{1}{2}\left( \mathbb{E}_x\Vert \mathcal{F}(\rho_x) - \mathbb{E}_{x'} \mathcal{F}(\rho_{x'}) \Vert_1\right)^2 \\
&\geq \frac{1}{8}
\left( \mathbb{E}_{x,x'}
\Vert \mathcal{F}(\rho_x) - \mathcal{F}(\rho_{x'}) \Vert_1
\right)^2 .
\end{split}
\end{equation}
Here, the second line utilizes Jensen's inequality, and $\mathcal{F}$ is the measurement channel: $\mathcal{F}(\rho) = \sum_y \tr(F_y\rho)\ketbra{y}$. Thus, one can view $\mathfrak{E}^F = \{p_x, \mathcal{F}(\rho_x)\}$ as a new state ensemble, whose average pairwise distance provides the lower bound for the mutual information. As long as this average pairwise distance is constant, one can use this noisy ensemble to deliver constant bits of information. 

Combining the single-copy lower bound with the upper bound, one can obtain that the the average pairwise distance of $\mathfrak{E}^F = \{p_x, \mathcal{F}(\rho_x)\}$ provides both upper and lower bounds where the gap is at most the qubit number $\log d$. That is, investigate the scaling of $\mathbb{E}_{x,x'}
\Vert \mathcal{F}(\rho_x) - \mathcal{F}(\rho_{x'}) \Vert_1$ suffices for studying that of the mutual information. We'll leave the detailed discussion of the scaling of $\mathbb{E}_{x,x'}
\Vert \mathcal{F}(\rho_x) - \mathcal{F}(\rho_{x'}) \Vert_1$ to the end of this Appendix. Particularly, we find that if the initial state set is a noisy state 2-design $\{p_x, \mathcal{N}(\rho_x)\}$ where $\mathcal{N}$ is the noise channel, then $\mathbb{E}_{x,x'}\Vert \mathcal{F}\circ\mathcal{N}(\rho_x) - \mathcal{F}\circ\mathcal{N}(\rho_{x'}) \Vert_1 = O(\sqrt{\Vert \mathcal{N}^{\otimes 2}(S)\Vert_{\infty}})$, where $\Vert \mathcal{N}^{\otimes 2}(S)\Vert_{\infty}$ is the infinite norm of $\mathcal{N}^{\otimes 2}(S)$. This term is exponentially small in qubit number $n$ if $\mathcal{N}$ is a local purity-shrinking noise. Hence, the mutual information must also be exponentially small and have no noise threshold.

In the following, we derive the $N$-copy lower bound. Note that the random variable of the measurement outcome is $Y^{(N)} = (Y_1, Y_2, \cdots, Y_N)$. We denote the concrete measurement results of $N$ rounds as $\mathbf{y} = (y_1,\ldots,y_N)$. From $\mathbf{y}$, the best candidate Bob can guess for Alice's message is the output of the maximum-likelihood decoder:
\begin{equation}
\bar{x}
= \arg\max_x \Pr(\mathbf{y} | x) .
\end{equation}

Define $A_x$ to be the event that the maximum-likelihood decoder gives a wrong decoding result for input 
$x$:
\begin{equation}
A_x = \{ \mathbf{y} : \exists x' \neq x \text{ s.t. } \Pr(\mathbf{y}|x') \ge \Pr(\mathbf{y}|x) \}.
\end{equation}

The overall error probability is
\begin{equation}
\begin{split}
P_e &= \sum_x P(A_x | x)\cdot P(x) \\
&=\sum_{x}\sum_{\mathbf{y}} \mathbf{1}\{\exists x'\neq x \text{ s.t. } \Pr(\mathbf{y}|x') \ge \Pr(\mathbf{y}|x)\}P(\mathbf{y}|x)P(x).
\end{split}
\end{equation}

By the union bound,
\begin{equation}
\begin{split}
P_e&\leq \sum_x \sum_{\mathbf{y}} \sum_{x' \neq x}
\mathbf{1}\{ P(\mathbf{y} | x') \geq P(\mathbf{y} | x) \}P(\mathbf{y} | x) P(x) \\
&\leq \sum_x \sum_{\mathbf{y}} \sum_{x' \neq x} \sqrt{\frac{P(\mathbf{y}|x')}{P(\mathbf{y}|x)}}
\mathbf{1}\{ P(\mathbf{y} | x') \geq P(\mathbf{y} | x) \}P(\mathbf{y} | x) P(x) \\
&\leq \sum_x \sum_{\mathbf{y}} \sum_{x' \neq x}\sqrt{P(\mathbf{y} | x') P(\mathbf{y} | x)} \, P(x) .
\end{split}
\end{equation}
The second line is because $\sqrt{\frac{P(\mathbf{y}|x')}{P(\mathbf{y}|x)}}\ge 1$ whenever the indicator is nonzero. Thus,
\begin{equation}
\begin{split}
P_e&\leq \sum_{x, x' \neq x} P(x)\sum_{\mathbf{y}} \sqrt{P(\mathbf{y} | x') P(\mathbf{y} | x)} \\
&= \sum_{x, x' \neq x} P(x)\left( \sum_y \sqrt{P(y | x') P(y | x)} \right)^N \\
&= \sum_{x, x' \neq x} P(x)B^N\left(P(\cdot | x'), P(\cdot | x)\right) .
\end{split}
\end{equation}

The classical fidelity is
\begin{equation}
B\left(P(\cdot | x), P(\cdot | x')\right)
= \sum_y \sqrt{P(y | x) P(y | x')} .
\end{equation}
By the Fuchs--van de Graaf inequality,
\begin{equation}
\begin{split}
B\left(P(\cdot | x), P(\cdot | x')\right)
&\leq \sqrt{1 - \frac{1}{4}
\Vert P(\cdot | x) - P(\cdot | x') \Vert_1^2} \\
&= \sqrt{1 - \frac{1}{4}
\Vert \mathcal{F}(\rho_x) - \mathcal{F}(\rho_{x'}) \Vert_1^2} .
\end{split}
\end{equation}

Define
\begin{equation}
\delta_{\min}^{\mathcal{F}}= \min_{x \neq x'} \Vert \mathcal{F}(\rho_x) - \mathcal{F}(\rho_{x'}) \Vert_1 / 2.
\end{equation}
Suppose $M=|X|$ is the size of the message alphabet. Then
\begin{equation}
P_e\leq (M-1)\left( 1 - (\delta_{\min}^{\mathcal{F}})^2 \right)^{\frac{N}{2}} .
\end{equation}
By Fano's inequality,
\begin{equation}
\begin{split}
I(X : Y^{(N)})&\geq H(X) - h_2(P_e) - P_e \log M \\
&\geq H(X) - 1 - (M-1) \left( 1 - (\delta_{\min}^{\mathcal{F}})^2 \right)^{\frac{N}{2}}\log M .
\end{split}
\end{equation}
Thus, as long as there exists a POVM such that the minimum pairwise distance is at least polynomial small, one can ensure that polynomial copies can enable one to obtain a mutual information comparable to $H(X)$. The measurement channel $\mathcal{F}$ will be optimized to provide a tight lower bound.

Note that for the upper bound and the single-copy lower bound cases, we use the average pairwise distance to provide
bounds. The $N$-copy lower bound case is different and uses the minimum pairwise distance. We believe this difference comes from a technical reason and expect that one can provide a lower bound for the $N$-copy case with the average pairwise distance. We leave this for future work.

\subsubsection{Noisy state ensemble as POVM}
Here, we discuss the duality case of the above results and use the pairwise distance of POVM elements to provide bounds for mutual information. In this case, Alice still uses state ensemble $\{p_x, \rho_x\}$ to encode message $X$ where Bob performs a fixed noisy POVM $\mathfrak{F} = \{F_y\}$. One can view the POVM element as $F_y = \mathcal{N}^{\dagger}(F_y^0)$ as the original POVM element passing through the (adjoint) noise channel $\mathcal{N}^{\dagger}$. Notice that $\mathcal{N}^{\dagger}$ is always unital. Note that the maximal achievable mutual information considering all input states and a fixed POVM is the information power of this POVM~\cite{DallArno2011power}.

Denote $\rho = \sum_x p_x\rho_x$, and define new state ensemble and POVM:
\begin{equation}
\mathfrak{E}_{\mathfrak{F}} = \{\tr \rho F_y, \frac{\sqrt{\rho} F_y \sqrt{\rho}}{\tr \rho F_y}\}, \mathfrak{F}_{\mathfrak{E}} = \{p_x \rho^{-1/2}\rho_x\rho^{-1/2}\}.
\end{equation}
The new state ensemble and POVM generate the same probability distribution as the original one:
\begin{equation}
P_{XY}(x, y) = \tr \rho F_y \tr \frac{\sqrt{\rho} F_y \sqrt{\rho}}{\tr \rho F_y} p_x \rho^{-1/2}\rho_x\rho^{-1/2} = p_x\tr \rho_xF_y.
\end{equation}
Thus, we can use the average pairwise distance of the new state ensemble to provide bounds on mutual information:
\begin{equation}
\mathbb{E}_{yy'}\Vert \frac{\sqrt{\rho} F_y \sqrt{\rho}}{\tr \rho F_y} - \frac{\sqrt{\rho} F_{y'} \sqrt{\rho}}{\tr \rho F_{y'}} \Vert_1 / 2.
\end{equation}
Same as the derivation before, we can add the new measurement channel $\mathcal{M}(\sigma) = \sum_x \tr(\sigma p_x \rho^{-1/2}\rho_x\rho^{-1/2})\ketbra{x}$ to the state ensemble and evaluate a tighter bound. After adding this measurement channel, the state ensemble becomes 
\begin{equation}
\mathfrak{E}_{\mathfrak{F}}^{\mathcal{M}} = \{\tr \rho F_y, \mathcal{M}(\frac{\sqrt{\rho} F_y \sqrt{\rho}}{\tr \rho F_y})\} = \{\tr \rho F_y, \sum_x  p_x\frac{\tr \rho_x F_y }{\tr \rho F_y} \ketbra{x}\}.
\end{equation}
We can use the average pairwise distance of $\mathfrak{E}_{\mathfrak{F}}^{\mathcal{M}}$ to provide the upper bound of the mutual information. A simple case is where the ensemble $\mathfrak{E} = \{p_x, \rho_x\}$ forms a state 1-design, or $\sum_x p_x\rho_x = \id/d$. In this case, we have that
\begin{equation}
\mathfrak{E}_{\mathfrak{F}}^{\mathcal{M}} = \{\frac{\tr F_y}{d}, \mathcal{M}(\frac{F_y}{\tr F_y})\},
\end{equation}
where
\begin{equation}
\mathcal{M}(\sigma) = \sum_x \tr(\sigma d p_x \rho_x)\ketbra{x}.
\end{equation}
If the POVM $\mathfrak{F}$ is a 2-design, or $\{\frac{\tr F_y}{d}, \frac{F_y}{\tr F_y}\}$ is a state 2-design, $\sum_y \frac{\tr F_y}{d} (\frac{F_y}{\tr F_y})^{\otimes 2} = \frac{\id+S}{d(d+1)}$, the average pairwise distance of $\mathfrak{E}_{\mathfrak{F}}^{\mathcal{M}}$ will have a concise analytic expression, which will be discussed in later subsections.

\subsection{Accessible information of noisy 2-design}\label{appendssc:mulinf2design}
Below, we use the average pairwise distance of a noisy 2-design ensemble to provide upper bounds for its accessible information or information power. As mentioned before, when using the noisy 2-design to encode classical information, the upper bound of the accessible information can be given by
\begin{equation}
I(X:Y^{(N)}) \leq \frac{1}{2} N \log (d-1) \mathbb{E}_{x,x^{\prime}}\Vert \mathcal{F}(\rho_x) - \mathcal{F}(\rho_{x^{\prime}}) \Vert_1 + h_2\left(\min\{\frac{1}{2}, \frac{1}{2}N\mathbb{E}_{x,x^{\prime}}\Vert \mathcal{F}(\rho_x) - \mathcal{F}(\rho_{x^{\prime}})\Vert_1\}\right).
\end{equation}
Note that $\rho_x = \mathcal{N}(\psi_x)$ with $\psi_x$ forming a state 2-design. By Theorem~\ref{thm:2design}, we have that
\begin{equation}
\mathbb{E}_{x,x^{\prime}}\Vert \mathcal{F}(\rho_x) - \mathcal{F}(\rho_{x^{\prime}}) \Vert_1/2 \leq 2^{-\frac{1}{2}H_2(S|S')_{\mathcal{F}\circ\mathcal{N}}}.
\end{equation}
The following is to evaluate the value of 
$H_2(S|S')_{\mathcal{F}_{S'}\circ\mathcal{N}_{S'}(\ketbra{\Phi}_{SS'})}$. Recall that $\mathcal{F}(\rho) = \sum_y \tr(F_y \rho)\ketbra{y}$. Thus, $\mathcal{F}\circ\mathcal{N}(\rho)  = \sum_y \tr(\mathcal{N}^{\dagger}(F_y) \rho)\ketbra{y}$ where $\mathcal{N}^{\dagger}$ is the adjoint channel of $\mathcal{N}$. Thus,
\begin{equation}
\begin{split}
\mathcal{F}_{S'}\circ\mathcal{N}_{S'}(\ketbra{\Phi}_{SS'}) &= \frac{1}{d}\sum_y (\mathcal{N}^{\dagger}(F_y))^T_{S}\otimes \ketbra{y}_{S'}\\
&= \sum_y p_y \mathcal{N}^{\dagger}(\rho_y)_{S} \otimes \ketbra{y}_{S'},
\end{split}
\end{equation}
where we set $p_y = \tr F_y / d$ and $\rho_y = F_y^T/\tr F_y$. Note that $H_2(S|S')_{\rho_{SS"}} = -\log \tr[ ( \rho_{S'}^{-1/4}\rho_{SS'}\rho_{S'}^{-1/4} )^2 ]$. Hence,
\begin{equation}
\begin{split}
H_2(S|S')_{\mathcal{F}_{S'}\circ\mathcal{N}_{S'}(\ketbra{\Phi}_{SS'})} &= -\log \tr\sum_y \frac{p_y (\mathcal{N}^{\dagger}(\rho_y))^2}{\tr \mathcal{N}^{\dagger}(\rho_y)}\otimes \ketbra{y}\\
&= -\log \sum_y \frac{p_y \tr\Bigl( (\mathcal{N}^{\dagger}(\rho_y))^2\Bigr)}{\tr \mathcal{N}^{\dagger}(\rho_y)}\\
&= -\log \sum_y p_y \frac{\tr\rho_y^{\otimes 2}\mathcal{N}^{\otimes 2}(S)}{\tr \rho_y \mathcal{N}(\id)}.
\end{split}
\end{equation}
Thus,
\begin{equation}
\mathbb{E}_{x,x^{\prime}}\Vert \mathcal{F}(\rho_x) - \mathcal{F}(\rho_{x^{\prime}}) \Vert_1 / 2 \leq \sqrt{\sum_y p_y \frac{\tr\rho_y^{\otimes 2}\mathcal{N}^{\otimes 2}(S)}{\tr \rho_y \mathcal{N}(\id)}}.
\end{equation}
Thus, for a unital channel $\mathcal{N}$, we have that
\begin{equation}
\mathbb{E}_{x,x^{\prime}}\Vert \mathcal{F}(\rho_x) - \mathcal{F}(\rho_{x^{\prime}}) \Vert_1 / 2 \leq \sqrt{\Vert\mathcal{N}^{\otimes 2}(S) \Vert_{\infty}}
\end{equation}
\begin{equation}
I(X:Y^{(N)}) \leq N \log (d-1) \sqrt{\Vert\mathcal{N}^{\otimes 2}(S) \Vert_{\infty}} + h_2\left(\min\{\frac{1}{2}, N\sqrt{\Vert\mathcal{N}^{\otimes 2}(S) \Vert_{\infty}}\}\right).
\end{equation}
For local Pauli noise strictly shrinking the purity of any quantum state, like local depolarizing noise, we have that $\Vert\mathcal{N}^{\otimes 2}(S) \Vert_{\infty} = \exp(-\Theta(n))$. In particular, if $\mathcal{N} = \mathcal{N}_p^{\otimes n}$ with $\mathcal{N}_p(\rho) = (1-p)\rho+p\frac{\id}{2}$, $\Vert\mathcal{N}^{\otimes 2}(S) \Vert_{\infty} = (\frac{1+(1-p)^2}{2})^n$. In this case, the mutual information decays exponentially with respect to the qubit number. By continuity, if the noise is a very noisy local depolarizing channel mixed with a little amplitude-damping noise, the term also decays exponentially fast. As discussed before, by state-measurement duality, if the input state forms a state 1-design and the measurement is a noisy state 2-design, we have exactly the same result. One can repeat the derivation above and obtain that the mutual information is bounded by replacing $\Vert\mathcal{N}^{\otimes 2}(S) \Vert_{\infty}$ with $\Vert\mathcal{N}^{\dagger\otimes 2}(S) \Vert_{\infty}$. For generic input state ensembles, we use another way instead of the average pairwise distance to provide upper bounds, which is shown below.

\subsection{Upper bound of mutual information for noisy two-design POVM}\label{appendssc:mulinf2designPOVM}
In this part, we present another way to upper bound the mutual information given the input state or measurement forming a noisy state 2-design, while the other side can be chosen freely.

Recall that the probability of Bob getting outcome $y$ given Alice sent $x$ is $P(y|x) = \tr(F_y \mathcal{N}(\rho_x))$. The mutual information from single-copy measurement is the expected KL divergence from the marginal:
\begin{equation}
I(X:Y) = \sum_x p_x D_{KL}(P_{Y|x} || P_Y).
\end{equation}
Because KL divergence is relative, we can introduce a reference distribution. Let $P_U(y)$ be the unconditional distribution Bob would get if he measured the maximally mixed state $I/d$:
\begin{equation}
P_U(y) = \tr\left(F_y \frac{I}{d}\right) \equiv p_y.
\end{equation}
A standard information-theoretic inequality states that for any reference distribution $Q$, $D_{KL}(P_{Y|x} || P_Y) \le D_{KL}(P_{Y|x} || Q)$. Furthermore, the KL divergence is strictly upper-bounded by the $\chi^2$-divergence: $I(X:Y) \le \sum_x p_x D_{KL}(P_{Y|x} || P_U) \le \sum_x p_x \chi^2(P_{Y|x} || P_U)$. Thus, it suffices to upper bound $\sum_x p_x \chi^2(P_{Y|x} || P_U)$.

Note that
\begin{equation}\label{eq:chi2bound}
\begin{split}
\sum_x p_x \chi^2(P_{Y|x} || P_U) &= \sum_x p_x \sum_y \frac{(P(y|x) - P_U(y))^2}{P_U(y)}\\
&= \sum_x p_x \sum_y \frac{(\tr(F_y \mathcal{N}(\rho_x)) - p_y)^2}{p_y}\\
&= \sum_{x,y} \frac{p_x(\tr(F_y \mathcal{N}(\rho_x)))^2}{p_y} - 1.
\end{split}
\end{equation}
If the POVM $\{p_y = \frac{\tr F_y}{d}, \frac{F_y}{\tr F_y}\}$ forms a state 2-design, then the above equation reduces to
\begin{equation}
\sum_x p_x \chi^2(P_{Y|x} || P_U) = \frac{d\sum_x p_x\tr (\mathcal{N}(\rho_x))^2 - 1}{d+1}\leq \sum_x p_x\tr (\mathcal{N}(\rho_x))^2 \leq \Vert \mathcal{N}^{\dagger\otimes 2}(S) \Vert_{\infty},
\end{equation}
no matter what the input state ensemble is. $\mathcal{N}^{\dagger}$ is the adjoint map of $\mathcal{N}$. If Alice sends $N$ copies to Bob, then the total mutual information will be bounded by $N \Vert \mathcal{N}^{\dagger\otimes 2}(S) \Vert_{\infty}$. For local depolarizing noise, $\mathcal{N} = \mathcal{N}^{\dagger}$, and $\Vert \mathcal{N}^{\dagger\otimes 2}(S) \Vert_{\infty} = \exp(-\Theta(n))$ will decay exponentially in the qubit number.

Note that the above derivation can be repeated for the noisy 2-design input. In Eq.~\eqref{eq:chi2bound}, if $\{p_x, \rho_x\}$ forms a state 2-design, we can obtain that the upper bound for the mutual information is $\frac{d}{d+1}\Vert \mathcal{N}^{\otimes 2}(\id+S) \Vert_{\infty}-1$. For the unital noise, $\frac{d}{d+1}\Vert \mathcal{N}^{\otimes 2}(\id+S) \Vert_{\infty}-1\leq \frac{d}{d+1}(1+\Vert \mathcal{N}^{\otimes 2}(S)\Vert_{\infty})-1$, the mutual information can be bounded by $\Vert \mathcal{N}^{\otimes 2}(S) \Vert_{\infty}$.

\section{Tailored upper bound on mutual information under noisy 2-design POVM for universal learning protocols}\label{appendsc:universal}
In this part, we derive a tailored upper bound on mutual information for universal learning protocols like classical shadow tomography. For simplicity, we consider that the learning protocol has already been embedded in a communication protocol for measuring states and predicting input messages. The tailored upper bound can be tighter than the one derived in the previous section. Different from previous discussions, where we upper bound the mutual information between input $X$ and measurement result $Y$, here we assume Bob stores $Y$ and output an estimate $\Bar{X}$ of $X$ via a universal learning protocol. We would like to upper bound the mutual information of $I(X:\Bar{X})$. In this procedure, we have some requirements for the estimation process or the learning process of $\Bar{X}$ from $Y$. We will restate the problem and introduce the details below.

Consider a quantum communication protocol between Alice and Bob involving an $n$-qubit quantum system:
\begin{enumerate}
\item Alice selects a classical random variable $X$, which equals $x$ with probability distribution $p_X(x)$. She prepares an $n$-qubit state $\rho_X$ to encode $X$, and she sends $N$ copies of $\rho_X$ to Bob. Here, we require that any state $\rho_x$ can be transformed into the same state $\rho$ by a unitary operation, $\rho_x = U_x\rho U_x^{\dagger}$, where $U_x$ can be chosen freely.
\item For each copy of $\rho_X$, Bob individually performs the POVM $\mathfrak{F} = \{F_y\}$ on the received state to obtain measurement results. The elements satisfy $F_y \geq 0$, $F_y^{\dagger} = F_y$, and $\sum_y F_y = \id$. The group of all measurement outcomes is denoted as $Y = Y^{(N)} = (Y_1, Y_2, \cdots, Y_N)$ where $Y_i$ is the $i$-th measurement outcome.
\item Bob processes $Y$ to estimate several functions of $\rho_X$, $\{f_i(\rho_X)\}$. The function $f_i$ should satisfy that one can get the result of $f_i(\rho_X)$ with the measurement result of $U\rho_X U^{\dagger}$ and the knowledge of $U$ for any unitary operation $U$.
\item Bob uses the estimation of $\{f_i(\rho_X)\}$ to predict the estimate $\Bar{X} = g(f_i(\rho_X))$.
\end{enumerate}
Here, the learning protocol has already been embedded in steps 2 to 4. For classical shadow tomography, step 2 is performed with a 2-design ensemble; step 3 is estimating linear observables of $\rho_X$; step 4 is using linear observable expectation results to estimate $\Bar{X}$. The condition $\rho_x = U_x\rho U_x^{\dagger}$ in the first point and the requirement of the function $f_i$ in the third point are needed because we would like to introduce the random unitary operation in the derivation. The requirement of $f_i$ given by the third point is the reason why we call the learning protocol to be ``universal": Bob should recover $f_i(\rho_X)$ no matter what $U$ is. The first condition is automatically satisfied if Alice encodes the classical messages using pure states. The second condition will be explained later. We first introduce a new communication protocol, in which the achievable mutual information is always higher than that in the above communication protocol. The new protocol introduces a hypothetical adversary, Loki, to intercept the sending state and apply a random unitary operation.
\begin{enumerate}
\item Alice selects a classical random variable $X$, which equals $x$ with probability distribution $p_X(x)$. She prepares an $n$-qubit state $\rho_X$ to encode $X$, and she sends $N$ copies of $\rho_X$ to Bob. Here, we require that any state $\rho_x$ can be transformed into the same state $\rho$ by a unitary operation, $\rho_x = U_x\rho U_x^{\dagger}$.
\item Loki intercepts the state $\rho_X$ and applies a random unitary $U$ sampled from the Haar measure before Bob measures it.
\item For each copy of $U\rho_X U^{\dagger}$, Bob individually performs the POVM $\mathfrak{F} = \{F_y\}$ on the received state to obtain measurement results. The elements satisfy $F_y \geq 0$, $F_y^{\dagger} = F_y$, and $\sum_y F_y = \id$. The group of all measurement outcomes is denoted as $Y^U$.
\item After Bob obtains the measurement outcome, Loki reveals the unitary $U$ to Bob.
\item Bob processes $Y^U$ and $U$ to estimate several functions of $\rho_X$, $\{f_i(\rho_X)\}$.
\item Bob uses the estimation of $\{f_i(\rho_X)\}$ to predict the estimate $\Bar{X} = g(f_i(\rho_X))$.
\end{enumerate}
In this modified communication protocol, we have that
\begin{itemize}
\item The state received by Bob becomes $\rho_{X,U} = U \rho_X U^\dagger$.
\item The probability of outcome $y$, conditioned on the message $X$ and the unitary $U$, is given by $p(y|X,U) = \tr(F_y U \rho_X U^\dagger)$.
\end{itemize}
Thanks to the second condition, even if Loki applies the random unitary operation, Bob can still get the result of $f_i(\rho_X)$ with the measurement result of $U\rho_X U^{\dagger}$ and the knowledge of $U$. Thus, the second communication protocol can always be reduced to the first communication protocol and has higher achievable mutual information.

More specifically, for the second condition, we assume that there exists an estimation process that provides $\hat{f}_i(\rho_X)$ with the outcome of the measurement of step 2 as input, such that $\abs{\hat{f_i}(\rho_X)-f_i(\rho_X)}<\epsilon$ with a probability greater than $1-\delta$ for any $X$, and we call such an estimation a $(\epsilon,\delta)$ estimation. The process is required to be reversible to adversarial unitary evolution imposed by Loki. That is, if $\rho_X$ is changed into $U\rho_XU^{\dagger}$, then we can also get an $(\epsilon,\delta)$-estimation of $f_i(\rho_X)$ with the knowledge of $U$. In the last step, we get $\Bar{X}$ via a function $g$ of $f_i$'s, $\Bar{X}=g(f_i(\rho_X))$. We require $g$ satisfying $g(f_i(\rho_X))=g(\hat{f}_i(\rho_X))$ for all $X$ and any $\epsilon$-estimation $\hat{f}_i$. In this case, as long as Bob can estimate $f_i$ to accuracy $\epsilon$, he can obtain the correct prediction of $X$.

The learning protocol of classical shadow tomography~\cite{huang2020shadow} satisfies the assumptions above. In this case, the function $f_i$ is set as a linear function of the input state $\tr \rho O$. One ensures that for the input state ensemble $\{\rho_i\}$, we have a set of linear observables $\{O_j\}$ such that for any $i$ and $j\neq i$,
\begin{equation}
\abs{\tr(\rho_iO_i) - \tr(\rho_i O_j)} > 3\epsilon.
\end{equation}
Thus, as long as one estimates all the expectation values of linear observables $\{O_j\}$ of the input state within accuracy $\epsilon$, one can determine the index $i$ and the input state of Alice. This corresponds to the condition $g(f_i(\rho_X))=g(\hat{f}_i(\rho_X))$. Meanwhile, the classical shadow is a universal learning protocol, which estimates any linear observable of the measured state within accuracy $\epsilon$. Hence, with measurement result of $U\rho_X U^{\dagger}$ and knowledge of $U$, Bob can estimate the observable $UO_i U^{\dagger}$, or $\tr U\rho_X U^{\dagger} UO_i U^{\dagger}$, to get the estimate of $\tr \rho_X O_i = f_i(\rho_X)$. Thus, the second condition is satisfied by the classical shadow. Note that our analysis applied to all universal learning protocols satisfying the second condition, regardless of whether they estimate linear observables or not.

In the second communication protocol, Bob gets an $(\epsilon,\delta)$-estimation $\hat{f}'_i(\rho_X)$ with $Y^U$ and $U$, then gets $\Bar{X}'$ in the last step. Data processing inequality implies $I(X:Y^U,U)\geq I(X:\hat{f}'_i(\rho_X))\geq I(X:\Bar{X}')$. Let $\Bar{X}$ be the prediction of $X$ without Loki. By assumption, we have $p_e=\Pr(\Bar{X}'\neq \Bar{X})<\Pr(\Bar{X'}\neq g(f_i(\rho_X)))+\Pr(\Bar{X}\neq g(f_i(\rho_X)))<2\delta$. Fano's inequality implies $H(\Bar{X}|\Bar{X}')\leq H(p_e)+p_e\log M$. By the chain rule
\begin{align}
&I(X:\Bar{X},\Bar{X}')=I(X:\Bar{X})+I(X:\Bar{X}'|\Bar{X})\geq I(X:\Bar{X})\\
&I(X:\Bar{X},\Bar{X}')=I(X:\Bar{X}')+I(X:\Bar{X}|\Bar{X}')\leq I(X:\Bar{X}')+H(\Bar{X}|\Bar{X}')\\
\Rightarrow &I(X:\Bar{X})\leq I(X:\Bar{X}')+H(\Bar{X}|\Bar{X}')\leq I(X:Y^U,U)+H(p_e)+p_e\log M\leq I(X:Y^U|U)+H(2\delta)+2\delta\log M
\end{align}


In the concrete protocol, the success probability $\delta$ always decays exponentially in the qubit number $n$, and $\log M$ is normally a polynomial scaling of qubit number $n$. For instance, the stabilizer state satisfies $\log M = O(n^2)$. Hence, in the large $n$ limit, we can simplify the above result to
\begin{equation}
I(X:\Bar{X})\leq I(X:Y^U|U).
\end{equation}
Thus, to derive an upper bound on the mutual information $I(X:\Bar{X})$ when Alice's state or Bob's POVM is noisy, the key point is to upper bound $I(X:Y^U|U)$. Note that $I(X:Z|U) = I(X,U:Z)-I(U:Z)\leq I(X,U:Z)$. Thus,
\begin{equation}
I(X:\Bar{X})\leq NI(X,U:\mathfrak{F}\text{ on }U\rho_XU^{\dagger})
\end{equation}
The problem reduces to bounding the mutual information $I(X,U:\mathfrak{F}\text{ on }U\rho_XU^{\dagger})$. Utilizing the inequality $\log(x) \leq \log(y) + \frac{x-y}{y}$, we arrive at:
\begin{equation}
\begin{split}
I(X,U:\mathfrak{F} \text{ on } U\rho_{X}U^{\dagger}) &= H(\mathfrak{F} \text{ on } U\rho_{X}U^{\dagger})-H(\mathfrak{F} \text{ on } U\rho_{X}U^{\dagger}|X,U)\\
&=H\left(\mathbb{E}_{X,U}[\tr(U\rho_{X}U^{\dagger}\mathfrak{F})]\right)-\mathbb{E}_{X,U}[H(\tr(U\rho_{X}U^{\dagger}\mathfrak{F}))] \\
&= H\left(\mathbb{E}_{U}[\tr(U\rho U^{\dagger}\mathfrak{F})]\right)-\mathbb{E}_{U}[H(\tr(U\rho U^{\dagger}\mathfrak{F}))]\\
&= \sum_{y}[-\mathbb{E}_{U}p(y|U)\log(\mathbb{E}_{U}p(y|U))]+\mathbb{E}_{U}[p(y|U)\log p(y|U)] \\
&\le \sum_{y}[-\mathbb{E}_{U}p(y|U)\log(\mathbb{E}_{U}p(y|U))]+\mathbb{E}_{U}\left[p(y|U)\log(\mathbb{E}_{U}p(y|U))+p(y|U)\frac{p(y|U)-\mathbb{E}_{U}p(y|U)}{\mathbb{E}_{U}p(y|U)}\right] \\
&= \left(\sum_{y}\frac{\mathbb{E}_{U}[p(y|U)^{2}]}{\mathbb{E}_{U}[p(y|U)]}\right)-1
\end{split}
\end{equation}
Here, $p(y|U) = \mathbb{E}_Xp(y|X,U) = \tr(F_y U \rho U^\dagger)$. The third line holds because there exists a unitary $U_X$ such that $\rho_X = U_X\rho U_X^{\dagger}$.

Thus, no matter how Alice encodes her classical message $X$, the mutual information obtained by Bob applying POVM $\mathfrak{F}$ is bounded by
\begin{equation}
I(X,U:\mathfrak{F} \text{ on } U\rho_{X}U^{\dagger}) \leq \sum_{y}\frac{\mathbb{E}_{U}[p(y|U)^{2}]}{\mathbb{E}_{U}[p(y|U)]}-1.
\end{equation}
Thus,
\begin{equation}
I(X:\Bar{X}) \leq N\left( \sum_{y}\frac{\mathbb{E}_{U}[p(y|U)^{2}]}{\mathbb{E}_{U}[p(y|U)]}-1 \right).
\end{equation}
This upper bound can give us a tighter result compared to the previous upper bound on $I(X:Y)$. Below, we consider the POVM to be a noisy state 2-design ensemble and derive the bound.

In particular, we set $\mathfrak{F} = \{\mathcal{N}(F_y)\}$, where $\sum_{y}\frac{\tr F_y}{d} \frac{F_y}{\tr F_y} = \frac{\id}{d}$, $\sum_{y}\frac{\tr F_y}{d} (\frac{F_y}{\tr F_y})^{\otimes 2} = \frac{\id+S}{d(d+1)}$, and $\mathcal{N}$ is a unital quantum channel. In this case, $\mathfrak{F} = \{\mathcal{N}(F_y)\}$ is still a valid POVM. As a result,
\begin{equation}
\mathbb{E}_{U}[p(y|U)] = \tr(\mathcal{N}(F_y))/d = \tr(F_y)/d.
\end{equation}
\begin{equation}
\begin{split}
\mathbb{E}_{U}[p(y|U)^2] &= \mathbb{E}_U[\tr( \mathcal{N}^{\otimes 2}(F_y^{\otimes 2})U^{\otimes 2} \rho^{\otimes 2} U^{\dagger\otimes 2})]\\
&= \tr( \mathcal{N}^{\otimes 2}(F_y^{\otimes 2}) \left[\frac{1-d^{-1}\tr\rho^2}{d^2-1}\id+\frac{\tr\rho^2-d^{-1}}{d^2-1}S\right] ),
\end{split}
\end{equation}
\begin{equation}
\begin{split}
\sum_{y}\frac{\mathbb{E}_{U}[p(y|U)^{2}]}{\mathbb{E}_{U}[p(y|U)]} &= \sum_y d\frac{\tr( \mathcal{N}^{\otimes 2}(F_y^{\otimes 2}) \left[\frac{1-d^{-1}\tr\rho^2}{d^2-1}\id+\frac{\tr\rho^2-d^{-1}}{d^2-1}S\right] )}{\tr(F_y)} \\
&= d^2 \tr( \mathcal{N}^{\otimes 2}(\frac{\id+S}{d(d+1)}) \left[\frac{1-d^{-1}\tr\rho^2}{d^2-1}\id+\frac{\tr\rho^2-d^{-1}}{d^2-1}S\right] )\\
&\leq \tr( \mathcal{N}^{\otimes 2}(\frac{\id+S}{d(d+1)}) (\id+S) )\\
&= 1+\frac{1}{d+1}+\frac{\tr(\mathcal{N}^{\otimes 2}(S)S)}{d(d+1)}.
\end{split}
\end{equation}
Note that $\tr(\mathcal{N}^{\otimes 2}(S)S) = d^2\tr(\tau_{\mathcal{N}}^2)$ where $\tau_{\mathcal{N}}$ is the Choi state of $\mathcal{N}$. Hence,
\begin{equation}
I(X:\Bar{X}) \leq N\frac{d\tr(\tau_{\mathcal{N}}^2)+1}{d+1}.
\end{equation}
For local Pauli noise like depolarizing and dephasing, there exists some constant $p$, $\tr(\tau_{\mathcal{N}}^2) = p^n$. In this case, $I(X:\Bar{X}) \sim \exp(-\Theta(n))$. Note that this bound is determined by the purity of $\tau_{\mathcal{N}}$ and is always smaller than the previous bound $\Vert \mathcal{N}^{\otimes 2}(S)\Vert_{\infty}$. Thus, for the universal learning protocol, we obtain a strictly tighter bound for its error robustness.

\end{document}